\newtheorem{Proposition}{Proposition}
\newtheorem{Corollary}{Corollary}
\newtheorem{Theorem}{Theorem}
\newtheorem{Lemma}{Lemma}
\newtheorem{Remark}{Remark}
\newtheorem{Assumption}{Assumption}
\DeclareMathOperator{\re}{\mathbb{R}}
\DeclareMathOperator{\na}{\mathbb{N}}
\newcommand{\abs}[1]{\left|#1\right|}
\newcommand{\E}{\mathbb{E}}
\newcommand{\ind}{\mathds{1}}
\newcommand{\Prob}{\mathbb{P}}
\begin{document}

\def\figureautorefname{Figure}
\def\algorithmautorefname{Algorithm}
\def\sectionautorefname{Section}
\def\subsectionautorefname{Section}
\def\subsubsectionautorefname{Section}
\def\Propositionautorefname{Proposition}
\def\Theoremautorefname{Theorem}
\def\Corollaryautorefname{Corollary}
\def\Lemmaautorefname{Lemma}
\def\Assumptionautorefname{Assumption}
\renewcommand*\footnoterule{}

\title{Non-reversible jump algorithms for Bayesian nested model selection}

\author{Philippe Gagnon $^{1}$, Arnaud Doucet $^{1}$}

\maketitle

\thispagestyle{empty}

\noindent $^{1}$Department of Statistics, University of Oxford, United Kingdom.

\begin{abstract}
Non-reversible Markov chain Monte Carlo methods often outperform their reversible counterparts in terms of asymptotic variance of ergodic averages and mixing properties. Lifting the state-space \citep{chen1999lifting,diaconis2000analysis} is a generic technique for constructing such samplers. The idea is to think of the random variables we want to generate as position variables and to associate to them direction variables so as to design Markov chains which do not have the diffusive behaviour often exhibited by reversible schemes. In this paper, we explore the benefits of using such ideas in the context of Bayesian model choice for nested models, a class of models for which the model indicator variable is an ordinal random variable. By lifting this model indicator variable, we obtain \textit{non-reversible jump algorithms}, a non-reversible version of the popular reversible jump algorithms introduced by \cite{green1995reversible}. This simple algorithmic modification provides samplers which can empirically outperform their reversible counterparts at no extra computational cost. The code to reproduce all experiments is available online.\footnote{See ancillary files on \href{https://arxiv.org/abs/1911.01340}{arXiv:1911.01340}.}
\end{abstract}

\noindent Keywords: Bayesian statistics; Markov chain Monte Carlo methods; non-reversible Markov chains; Peskun--Tierney ordering; weak convergence.

\section{Introduction}\label{sec_intro}

%\subsection{Reversible jump algorithms}\label{sec_RJ}

Reversible jump (RJ) algorithms are a popular class of Markov chain Monte Carlo (MCMC) methods introduced by \cite{green1995reversible,green2003trans}. They are used to sample from a target distribution $\pi$ defined on $\bigcup_{j\in \mathcal{K}} \{j\}\times  \re^{d_j}$, $\mathcal{K}$ being a countable set. In the statistics applications discussed in this paper, this distribution corresponds to the joint posterior distribution of a model indicator $k\in \mathcal{K}$ and its corresponding parameters $\mathbf{x}_k\in\re^{d_k}$. These samplers thus allow us to perform simultaneously model selection and parameter estimation. In the following, we assume for simplicity that the parameters of all models are continuous random variables and abuse notation by also using $\pi$ to denote the target density.

Given the current state $(k, \mathbf{x}_k)$, a RJ algorithm generates the next state by proposing a model candidate $k'$ from some probability mass function (PMF) $g(k,\cdot \,)$ then a proposal for its corresponding parameter values. This last step is usually achieved through two sub-steps:
\begin{enumerate}
\itemsep 0mm
 \item generate $\mathbf{u}_{k\mapsto k'}\sim q_{k\mapsto k'}$ (this vector corresponds to auxiliary variables used, for instance, to propose values for additional parameters when $d_{k'}>d_k$), where $q_{k\mapsto k'}$ is a probability density function (PDF),
 \item apply the function $T_{k\mapsto k'}$ to $(\mathbf{x}_k,\mathbf{u}_{k\mapsto k'})$, $T_{k\mapsto k'}(\mathbf{x}_k,\mathbf{u}_{k\mapsto k'})=:(\mathbf{y}_{k'},\mathbf{u}_{k'\mapsto k})$, where the vector $\mathbf{y}_{k'}$ represents the proposal for the parameters of model $k'$ and $T_{k\mapsto k'}$ is a diffeomorphism (i.e.\ a differentiable map having a differentiable inverse).
\end{enumerate}
The notation $k\mapsto k'$ in subscript is used to highlight a dependance on both the current and proposed models. When $k' = k$, we say that a \textit{parameter update} is proposed, whereas we say that a \textit{model switch} is proposed when $k' \neq k$. The proposal $(k',\mathbf{y}_{k'})$ is accepted with probability:
\begin{align}\label{eqn_acc_prob_RJ}
    \alpha_{\text{RJ}}((k,\mathbf{x}_{k}),(k',\mathbf{y}_{k'})):=1\wedge \frac{g(k',k) \, \pi(k',\mathbf{y}_{k'}) \, q_{k'\mapsto k}(\mathbf{u}_{k'\mapsto k})}{g(k,k') \, \pi(k,\mathbf{x}_{k}) \, q_{k\mapsto k'}(\mathbf{u}_{k\mapsto k'}) \, |J_{T_{k\mapsto k'}}(\mathbf{x}_{k}, \mathbf{u}_{k\mapsto k'})|^{-1}},
\end{align}
    where $x \wedge y := \min(x, y)$ and $|J_{T_{k\mapsto k'}}(\mathbf{x}_{k}, \mathbf{u}_{k\mapsto k'})|$ is the absolute value of the determinant of the Jacobian matrix of the function $T_{k\mapsto k'}$. If the proposal is rejected, the chain remains at the same state $(k,\mathbf{x}_{k})$.

%Looping over the steps described above gives rise to Markov chains that are reversible with respect to the target distribution. If in addition the chains are irreducible and aperiodic, they are therefore ergodic.

% It is often the case that either $\mathbf{u}_{k\mapsto k'}$ or $\mathbf{u}_{k'\mapsto k}$ does not actually exist, which can be viewed as setting the former or the latter to $\mathbf{0}$. Indeed, consider the case $d_{k'}>d_k$, meaning that model $k'$ has more parameters than model $k$ and $d^{k'} - d^k$ is the number of additional parameters in model $k'$. We often generate $d_{k'} - d_k$ random variables in Step 1 above and set $T_{k\mapsto k'}(\mathbf{x}_k,\mathbf{u}_{k\mapsto k'})=:\mathbf{y}_{k'}$; in this case, $\mathbf{u}_{k'\mapsto k}$ does not exist. Note that this scheme also defines the reverse move $ T_{k\mapsto k'}^{-1}(\mathbf{y}_{k'})=(\mathbf{x}_k,\mathbf{u}_{k\mapsto k'})=:T_{k'\mapsto k}(\mathbf{y}_{k'})$, which is deterministic in this case.

In this paper, we consider the special case of nested models; i.e.\ $K$ is an ordinal discrete random variable that reflects the complexity of the models. For instance, it represents the number of change-points in multiple change-point problems (Section 4, \cite{green1995reversible}), the number of components in mixture modelling \citep{richardson1997bayesian}, the order of an autoregressive process \citep{vermaak2004reversible}, the number of clusters in dependence structures for multivariate extremes \citep{vettori2019bayesian} or the number of principal components included in robust principal component regression \citep{gagnon2020automatic}. We restrict our attention to samplers that switch models by taking steps of $\pm 1$, i.e.\ $k' \in \{k - 1, k + 1\}$ when a model switch is proposed. This is a common choice which implies that the model space $\mathcal{K}$ is explored through a random walk, a process that often backtracks and thus exhibits a diffusive behaviour. This choice of neighbourhood $k' \in \{k - 1, k + 1\}$ in RJ makes the process reversible with respect to $\pi$ and thus ensures that $\pi$ is an invariant distribution.

The objective of this paper is to propose sampling schemes which do not suffer from such a diffusive behaviour by exploiting the lifting idea introduced by \cite{chen1999lifting} and \cite{diaconis2000analysis} to induce persistent movement in the model indicator. In the somewhat related contexts of simulated tempering \citep{sakai2016irreversible} and parallel tempering \citep{syed2019non}, lifting the temperature variable provides non-reversible samplers which perform substantially better than their reversible counterparts.

The changes that we make to the RJ sampling framework described above to apply the lifting idea are remarkably simple and require no additional computational effort. First, we extend the state-space by adding a direction variable $\nu \in \{-1, 1\}$ and assign it a uniform distribution $\mathcal{U}\{-1, 1\}$. Second, when a model switch is proposed and the current state is $(k, \mathbf{x}_k, \nu)$, the model to explore is selected deterministically instead of randomly by setting: $k' := k+\nu$. If the proposal for the model to explore next (model $k'$) along with its parameter values $\mathbf{y}_{k'}$ is accepted, the next state of the chain is $(k',\mathbf{y}_{k'}, \nu)$. The direction for the model indicator remains the same; this is what induces persistent movement. If the proposal is rejected, the next state of the chain is $(k,\mathbf{x}_{k}, -\nu)$, so the direction is reversed for $K$. A proposal may be rejected because there is negligible mass beyond $k$ in the direction followed; a change in direction may thus imply a return towards the high probability area. % A general non-reversible jump (NRJ) algorithm is described in detail in \autoref{algo_NRJ} in Section \ref{sec_NRJ}.
Such simple modifications lead to a non-reversible scheme and can be very efficient as illustrated in \autoref{fig_traces} (in which ESS stands for effective sample size). ESS per iteration is defined as the inverse of the integrated autocorrelation time. In this paper, it is used to evaluate algorithms regarding how efficient they are at sampling $K$. In particular, it measures their capacity of making the stochastic process $\{K(m): m \in \na\}$  traverse the model space, which is what we want to highlight. % For these purposes, only iterations in which model switches are proposed are retained for its computation and ESS is always reported per iteration for fairness. This is valid when the probability of proposing model switches do not depend on the current state and in fact makes our efficiency measure independent of this probability.

 \begin{figure}[ht]
  \centering
  $\begin{array}{cc}
  \vspace{-1mm}\hspace{4mm}\textbf{Random walk behaviour} & \hspace{5mm}\textbf{Persistent movement} \cr
   \vspace{-1mm}\hspace{4mm}\textbf{ESS = 0.09 per it.} & \hspace{5mm}\textbf{ESS = 0.35 per it.} \cr
 \includegraphics[width=0.40\textwidth]{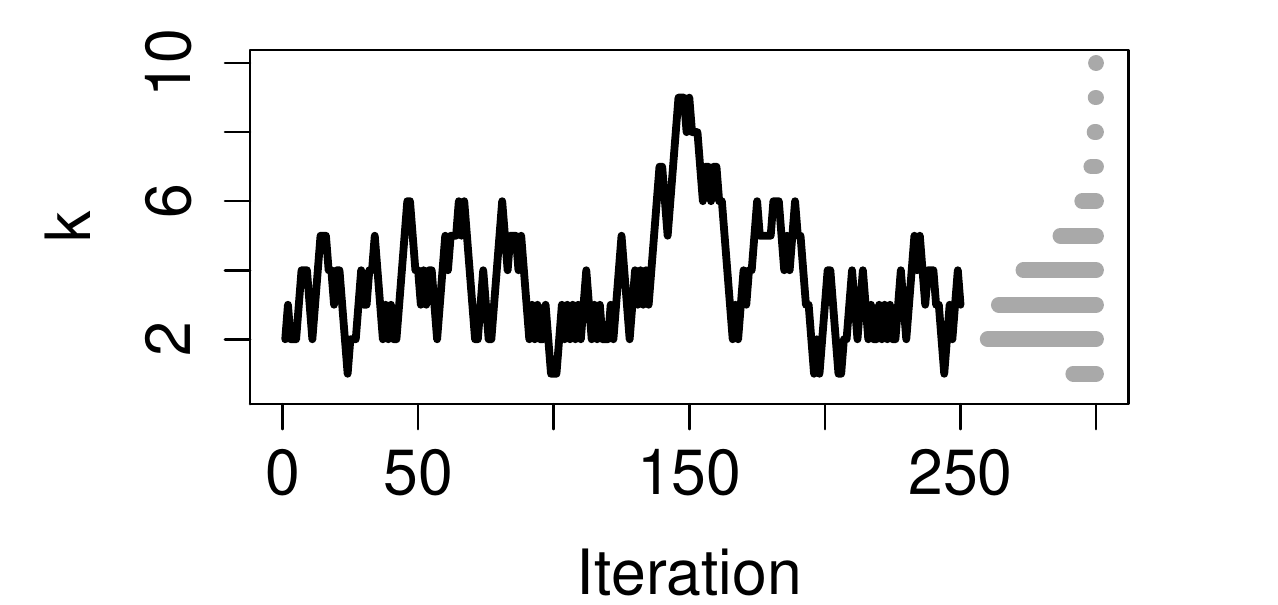} &  \includegraphics[width=0.40\textwidth]{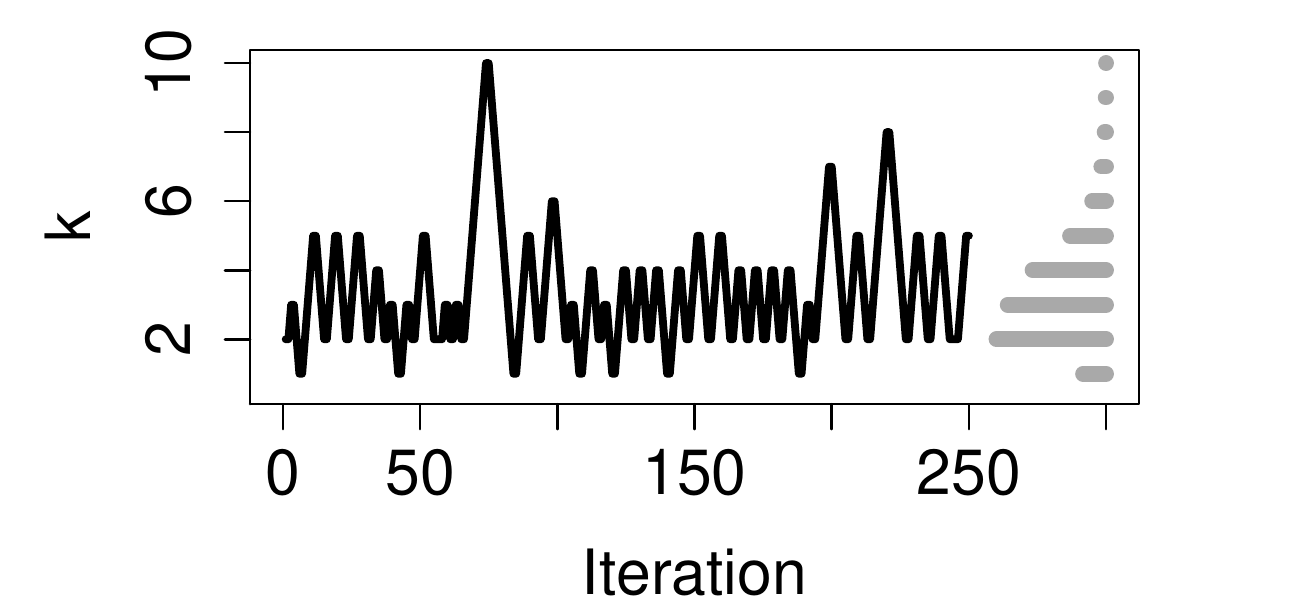}
  \end{array}$\vspace{-4mm}
  \caption{\small Trace plots for \textit{ideal} RJ and NRJ (we define what we mean by ideal in Section~\ref{sec_ideal}), and showing only the iterations in which model switches are proposed; the horizontal lines represent the marginal targeted PMF $\pi(k)$ in a real multiple change-point problem presented in \autoref{sec_changepoint}}\label{fig_traces}
 \end{figure}
\normalsize

% \subsection{Organisation of the paper}\label{sec_organisation}

The rest of this paper is organised as follows. We first introduce in \autoref{sec_NRJ_and_ideal} a general non-reversible jump (NRJ) algorithm
% We first introduce in \autoref{sec_NRJ_and_ideal} a general NRJ scheme
and establish its validity. We also present its \textit{ideal} version that is able to propose model parameter values from the conditional distributions $\pi(\, \cdot \mid k)$. This ideal algorithm is simple and allows us to exploit existing theoretical results to establish in \autoref{sec_optimality} that NRJ can outperform the corresponding ideal RJ under some assumptions on the marginal PMF $\pi(k)$. Although such an ideal sampler cannot be implemented in practice for complex models, we show in \autoref{sec_towards_ideal} how we can leverage methods that have been previously developed in the RJ literature to approximate this ideal NRJ sampler. The weak convergence of the resulting sampler towards the ideal NRJ sampler is established as a precision parameter increases without bounds. % ***We next provide theoretical evidences in \autoref{sec_optimality} that ideal NRJ outperform the optimal ideal RJ (constructed from a specific skewed proposal distribution $g$) when $\pi(k)$ is unimodal and not too concentrated. This makes NRJ optimal for this family of distributions when samplers are restricted to model switching proposals of the form $k\mapsto k'\in\{k-1, k+1\}$.****
In \autoref{sec_optimality}, we also prove that any NRJ (ideal or non-ideal) performs at least as good as its reversible counterpart. We present in \autoref{sec_numerical} numerical experiments to illustrate the performance of NRJ samplers on a toy example % for which the ideal sampler can be implemented
and a real multiple change-point problem. We provide a  discussion of implementation aspects and possible extensions in \autoref{sec_discussion}. All proofs of theoretical results are provided in \autoref{sec_proofs}.

\section{Non-reversible jump algorithms and ideal samplers}\label{sec_NRJ_and_ideal}

\subsection{Non-reversible jump schemes}\label{sec_NRJ}

\autoref{algo_NRJ} presents the general NRJ which takes as inputs an initial state $(k,\mathbf{x}_k,\nu)$, a total number of iterations, the functions $q_{k\mapsto k'}$ and $T_{k\mapsto k'}$, and $0\leq \tau \leq 1$ which represents the probability of proposing a parameter update at any given iteration. In trans-dimensional samplers, the probability of proposing a parameter update is typically allowed to depend on the current state. For ease of presentation, it is considered constant here.

\begin{algorithm}[ht]
\caption{NRJ} \label{algo_NRJ}
 \begin{enumerate}
 \itemsep 0mm

  \item Sample  $u_c \sim \mathcal{U}(0, 1)$.

  \item[2.(a)\hspace{-4.5mm}] \hspace{4mm} If $u_c \leq \tau$, attempt a parameter update using a MCMC kernel of invariant distribution $\pi(\, \cdot \mid k)$ while keeping the values of the model indicator $k$ and direction $\nu$ fixed.

  \item[2.(b)\hspace{-4.5mm}] \hspace{4mm} If $u_c > \tau$, attempt a model switch from model $k$ to model $k' = k + \nu$. Sample $\mathbf{u}_{k \mapsto k'} \sim q_{k \mapsto k'}$ and $u_a\sim\mathcal{U}(0, 1)$, and compute $(\mathbf{y}_{k'},\mathbf{u}_{k'\mapsto k}) = T_{k\mapsto k'}(\mathbf{x}_k,\mathbf{u}_{k\mapsto k'})$. If
  \begin{align}\label{eqn_acc_prob_NRJ}
  	\hspace{-3mm} u_a \leq \alpha_{\text{NRJ}}((k,\mathbf{x}_{k}),(k',\mathbf{y}_{k'})):=1\wedge \frac{\pi(k',\mathbf{y}_{k'}) \, q_{k'\mapsto k}(\mathbf{u}_{k'\mapsto k})}{\pi(k,\mathbf{x}_{k}) \, q_{k\mapsto k'}(\mathbf{u}_{k\mapsto k'}) \, |J_{T_{k\mapsto k'}}(\mathbf{x}_{k}, \mathbf{u}_{k\mapsto k'})|^{-1}},
  \end{align}
  set the next state of the chain to $(k',\mathbf{y}_{k'}, \nu)$. Otherwise, set it to $ (k,\mathbf{x}_{k}, -\nu)$.

  \item[3.] Go to Step 1.
 \end{enumerate}
\end{algorithm}

\autoref{prop_invariance} below ensures that \autoref{algo_NRJ} targets the correct distribution. Note that $\mathbf{x}_k$ can be vectors containing both position and velocity variables, which allows using Hamiltonian Monte Carlo (HMC, see, e.g., \cite{neal2011mcmc}) and more generally discrete-time piecewise-deterministic MCMC schemes \citep{vanetti2017piecewise} for updating the parameters within \autoref{algo_NRJ}. The only prerequisite is that the method leaves the conditional distributions $\pi(\,\cdot\mid k)$ invariant. The proof of \autoref{prop_invariance} establishes that any valid scheme used for parameter proposals during model switches in RJ framework, such as those of  \cite{karagiannis2013annealed} and \cite{andrieu2018utility} presented in \autoref{sec_towards_ideal}, are also valid in the non-reversible framework.

\begin{Proposition}[Invariance] \label{prop_invariance}
 The transition kernel of the Markov chain $\{(K, \mathbf{X}_K, \nu)(m): m\in\na\}$ simulated by \autoref{algo_NRJ} admits $\pi\otimes \mathcal{U}\{-1, 1\}$ as invariant distribution.
\end{Proposition}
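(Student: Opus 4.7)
The plan is to extend the target to $\bar{\pi}:=\pi\otimes\mathcal{U}\{-1,1\}$ on the lifted state-space and decompose the one-step transition kernel as $P=\tau P_{\text{param}}+(1-\tau)P_{\text{switch}}$; it then suffices to show that each component leaves $\bar{\pi}$ invariant. The parameter-update piece is the easy one: Step~2(a) applies an MCMC kernel that by construction preserves $\pi(\cdot\mid k)$ and freezes $(k,\nu)$, so marginalising $\bar{\pi}$ against $P_{\text{param}}$ recovers $\bar{\pi}$ directly. All of the work is in $P_{\text{switch}}$.

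For the switch component, the natural strategy is to verify a \emph{skew detailed balance} identity with respect to the direction-flipping involution $F(k,\mathbf{x}_k,\nu):=(k,\mathbf{x}_k,-\nu)$, namely
\begin{equation*}
\bar{\pi}(dz)\,P_{\text{switch}}(z,dz')=\bar{\pi}(F(dz'))\,P_{\text{switch}}(F(z'),F(dz)).
\end{equation*}
Because $F$ is an involution and $\bar{\pi}\circ F=\bar{\pi}$, integrating both sides in $z$ over the whole state-space and applying the change of variables $z\mapsto F(z)$ yields $\bar{\pi}P_{\text{switch}}=\bar{\pi}$. I would then check the identity branch by branch. On the rejected branch, the transition sends $z=(k,\mathbf{x}_k,\nu)$ to $z'=F(z)$ with a probability $r(z)$ depending only on $(k,\mathbf{x}_k,\nu)$, while the flipped transition from $F(z')=z$ to $F(z)=z'$ has the same probability $r(z)$; the identity is thus trivial. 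On the accepted branch, the forward move from $z=(k,\mathbf{x}_k,\nu)$ to $z'=(k+\nu,\mathbf{y}_{k+\nu},\nu)$ uses $T_{k\to k+\nu}$ and $q_{k\to k+\nu}$, whereas the flipped move from $F(z')$ to $F(z)$ uses the inverse map $T_{k+\nu\to k}=T_{k\to k+\nu}^{-1}$ together with $q_{k+\nu\to k}$, with the deterministic residual of one move playing the role of the random auxiliary of the other. After the standard change of variables with Jacobian $|J_{T_{k\to k+\nu}}|$, the identity reduces to the familiar Metropolis--Hastings balance
\begin{equation*}
\pi(k,\mathbf{x}_k)\,q_{k\to k'}(\mathbf{u}_{k\to k'})\,\alpha_{\text{NRJ}}((k,\mathbf{x}_k),(k',\mathbf{y}_{k'}))=\pi(k',\mathbf{y}_{k'})\,q_{k'\to k}(\mathbf{u}_{k'\to k})\,\alpha_{\text{NRJ}}((k',\mathbf{y}_{k'}),(k,\mathbf{x}_k))\,|J_{T_{k\to k'}}|,
\end{equation*}
which follows immediately from the definition~\eqref{eqn_acc_prob_NRJ} of $\alpha_{\text{NRJ}}$.

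The main obstacle, and the only point requiring some care, is the bookkeeping of the change of variables in the accepted branch: one must track that the random auxiliary $\mathbf{u}_{k\to k'}$ of the forward move becomes, via the diffeomorphism $T_{k\to k'}$, the deterministic residual that plays the role of the random proposal $\mathbf{u}_{k'\to k}$ of the flipped reverse move (and symmetrically). Once this correspondence is set up, the absence in $\alpha_{\text{NRJ}}$ of the ratio $g(k',k)/g(k,k')$ that appears in the RJ acceptance probability~\eqref{eqn_acc_prob_RJ} is explained by the fact that $k'=k+\nu$ is a deterministic function of $\nu$ in NRJ, so the analogous $g$-factors collapse to $1$. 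The same argument is agnostic to the specific mechanism used to propose the parameter values during a switch, which is precisely what underlies the claim that valid RJ parameter-proposal schemes (such as those of \cite{karagiannis2013annealed} and \cite{andrieu2018utility}) transfer unchanged to the non-reversible setting.
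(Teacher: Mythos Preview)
Your proof is correct and takes a genuinely different route from the paper's. You argue via \emph{skew detailed balance} with respect to the direction-flip involution $F$: you verify
\[
\bar\pi(dz)\,P_{\text{switch}}(z,dz')=\bar\pi(dz')\,P_{\text{switch}}(F(z'),F(dz))
\]
branch by branch, and then integrate in $z$ to recover invariance. The paper instead proves invariance by exhibiting an explicit link to the reversible algorithm: it observes that for the symmetric RJ with $g(k,k+1)=g(k,k-1)=1/2$ one has $P_{\text{RJ}}((k,\mathbf{x}_k),(k\pm1,\cdot)\mid S)=\tfrac12\,P_{\text{NRJ}}((k,\mathbf{x}_k,\pm1),(k\pm1,\cdot,\pm1)\mid S)$, imports the known reversibility of RJ, and then adds in the rejection mass coming from the state $(k',\mathbf{y}_{k'},-\nu)$ to close the balance. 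Your argument is the standard lifting-style proof and is arguably cleaner and more self-contained; the paper's reduction-to-RJ argument has the complementary advantage that the transfer of \emph{any} valid RJ proposal mechanism (e.g.\ the annealed or asymmetric-averaging schemes) to NRJ is immediate from the factor-of-$1/2$ identity, without having to re-derive the accepted-branch balance for each scheme.
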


Nothing prevents \autoref{algo_NRJ} from switching to models at a distance of more than 1, i.e.\ with $|k' - k| > 1$. In Step 2.(b), an additional random variable $\omega \in \{0, 1, \ldots\}$ can be independently generated from, for instance, a Poisson distribution with a given mean parameter. In this case, we attempt to make a transition to model $k' = k + \omega \nu$,  but nothing else changes and the algorithm is still valid. In practice, however, $|d_{k'}-d_k|$ typically increases with $|k'-k|$, requiring to design proposal distributions of high dimensions, which is often very difficult and motivates using jumps to models no further than $k\pm 1$ as in \cite{green1995reversible}, \cite{richardson1997bayesian}, \cite{vermaak2004reversible}, \cite{vettori2019bayesian} and \cite{gagnon2020automatic}.

\subsection{Ideal samplers and their advantages}\label{sec_ideal}

When switching models, one would ideally be able to sample from the correct conditional distributions $\pi(\,\cdot\mid k')$ to propose parameter values $\mathbf{y}_{k'}$. In this ideal situation, one can set $q_{k\mapsto k'}:= \pi(\,\cdot\mid k')$, $q_{k'\mapsto k}:= \pi(\,\cdot\mid k)$, and $T_{k\mapsto k'}$ such that $\mathbf{y}_{k'}:=\mathbf{u}_{k\mapsto k'}$ (which implies that $\mathbf{u}_{k'\mapsto k}:=\mathbf{x}_{k}$), and observe that the acceptance probabilities reduce to
 \begin{align}\label{eqn_acc_ideal}
    % \alpha_{\text{RJ}}((k,\mathbf{x}_{k}),(k',\mathbf{y}_{k'}))=1\wedge \frac{g(k',k) \, \pi(k')}{g(k,k') \, \pi(k)} \quad \text{and} \quad
    \alpha_{\text{NRJ}}((k,\mathbf{x}_{k}),(k',\mathbf{y}_{k'}))=1\wedge \frac{\pi(k')}{\pi(k)}.
  \end{align}
 These probabilities are independent of the current and proposed parameters values: a model proposal $k'$ is accepted solely on the basis of the ratio of marginal posterior probabilities.

   In general, the acceptance probabilities are as above whenever
\begin{align}\label{eqn_ass_toy}
 \frac{\pi(\mathbf{y}_{k'}\mid k') \, q_{k'\mapsto k}(\mathbf{u}_{k'\mapsto k})}{\pi(\mathbf{x}_{k}\mid k) \, q_{k\mapsto k'}(\mathbf{u}_{k\mapsto k'}) \, |J_{T_{k\mapsto k'}}(\mathbf{x}_{k}, \mathbf{u}_{k\mapsto k'})|^{-1}} = 1,
\end{align}
for any switch from model $k$ with parameter values $\mathbf{x}_{k}$ to model $k'\neq k$ with parameter values $\mathbf{y}_{k'}$, using the auxiliary variables $\mathbf{u}_{k\mapsto k'}$ and $\mathbf{u}_{k'\mapsto k}$. In this more general setting, we observe that \eqref{eqn_ass_toy} is verified if, starting with random variables distributed as $\pi(\, \cdot \mid k)\otimes q_{k\mapsto k'}$ and applying the function $T_{k\mapsto k'}$, we obtain random variables distributed as $\pi(\,\cdot \mid k') \otimes q_{k'\mapsto k}$.

For realistic scenarios where the proposals $\mathbf{y}_{k'}$ are not generated from the conditional distributions $\pi(\,\cdot\mid k')$, the acceptance probability of model switches for NRJ \eqref{eqn_acc_prob_NRJ} can be expressed as a ``noisy'' version of that in \eqref{eqn_acc_ideal}:
\begin{align}\label{eqn_acc_noisy}
    \alpha_{\text{NRJ}}((k,\mathbf{x}_{k}),(k',\mathbf{y}_{k'}))=1\wedge \frac{\pi(k')}{\pi(k)} \, \varepsilon(k, k', \mathbf{x}_k, q_{k\mapsto k'}, q_{k'\mapsto k}, T_{k\mapsto k'}),
\end{align}
where $\varepsilon$ represents multiplicative noise given by the left-hand side (LHS) in \eqref{eqn_ass_toy}.

For NRJ to be beneficial, it will be useful to have a low variance noise. Imagine that the targeted marginal PMF is that on the right of \autoref{fig_noisy_NRJ} and that the samplers are initialised at $(K, \nu)(0):=(10, -1)$ (as in \autoref{fig_noisy_NRJ}), the advantage of the ideal NRJ is that it continues following the direction $-1$ for several iterations as the ratios $\pi(k - 1)/\pi(k)$ are greater than 1 (which implies that the proposals are accepted). If the noise fluctuations are significant, such moves might be rejected.
% These interruptions are due to rejections caused by generating variables $(\mathbf{y}_{k'}, \mathbf{u}_{k'\mapsto k})$ that are not distributed as $\pi(\,\cdot\,\mid k')\otimes q_{k'\mapsto k}$ and have small densities under this distribution (and small $\varepsilon$). Interruptions cause changes in direction (and diffusive behaviour). Rejections and diffusive behaviour both slow down the state-space exploration and decrease the accuracy of the approximations to the posterior probabilities \citep{peskun1973optimum, tierney1998note, neal2004improving}.

 \begin{figure}[ht]
  \centering
  \includegraphics[width=0.5\textwidth]{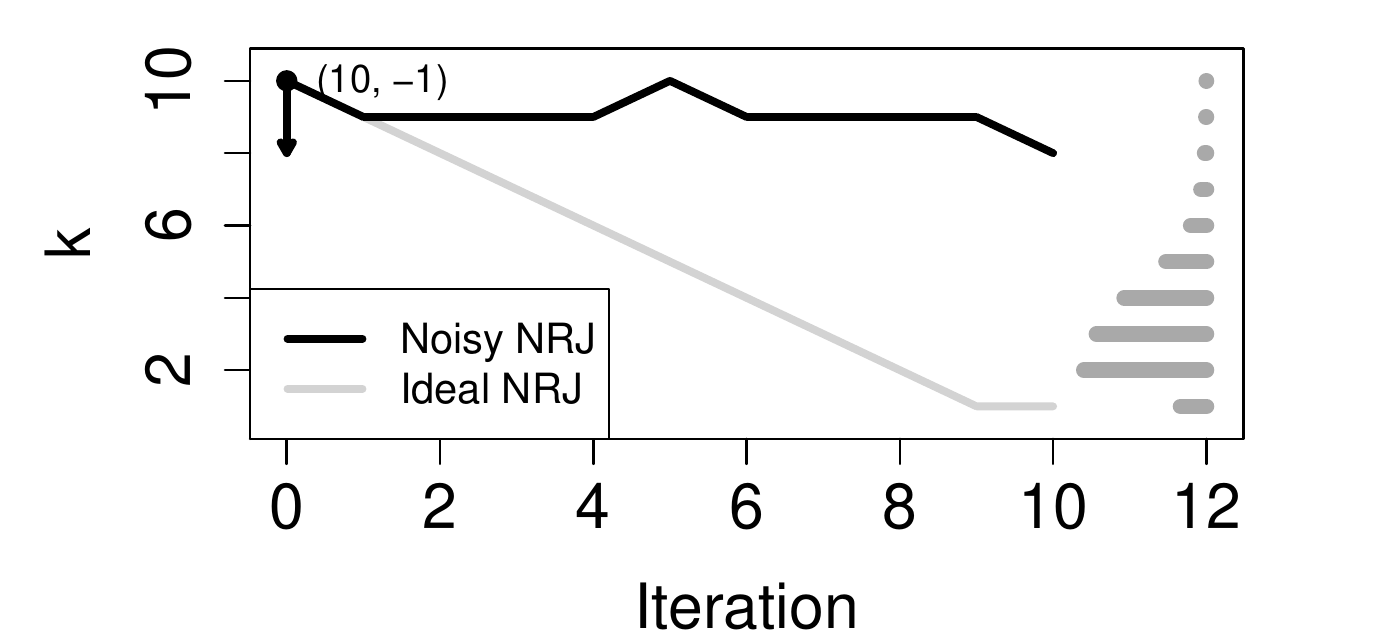}
  \vspace{-2mm}
  \caption{\small Trace plots for ``noisy'' and ideal NRJ, and showing only the iterations in which model switches are proposed; the horizontal lines represent the marginal targeted PMF $\pi(k)$ which is that in a real multiple change-point problem presented in \autoref{sec_changepoint}}\label{fig_noisy_NRJ}
 \end{figure}
\normalsize

\section{Towards ideal NRJ}\label{sec_towards_ideal}

We explained in the last section why it may be important to implement NRJ samplers that are close to their ideal counterparts, with low variance noise $\varepsilon$. We present in this section methods to achieve this by adapting some developed within the RJ framework. In \autoref{sec_andrieu_2013}, we present and adapt for NRJ the method of \cite{karagiannis2013annealed}. We proceed similarly in \autoref{sec_andrieu_2018} with the approach of \cite{andrieu2018utility}. In \autoref{sec_conv_algo2}, we prove that, as $\varepsilon \longrightarrow 1$ in distribution, the Markov chains produced by NRJ incorporating these approaches converge weakly to the chains produced by ideal NRJ.

\subsection{NRJ with the method of \cite{karagiannis2013annealed}}\label{sec_andrieu_2013}

Model $k$ and model $k'$ may be quite different. Jumping (``in one step'') from the former with parameters $\mathbf{x}_k$ to the latter with parameters $\mathbf{y}_{k'}$ may thus be difficult; i.e. starting with $(\mathbf{x}_k, \mathbf{u}_{k \mapsto k'}) \sim \pi(\,\cdot \mid k)\otimes q_{k\mapsto k'}$, it may be difficult to design a function $T_{k\mapsto k'}$ such that $(\mathbf{y}_{k'}, \mathbf{u}_{k' \mapsto k}) \sim \pi(\,\cdot \mid k')\otimes q_{k'\mapsto k}$ approximately, for any reasonable choice of $q_{k\mapsto k'}$ and $q_{k'\mapsto k}$.

\cite{karagiannis2013annealed} introduce a sequence of auxiliary distributions playing the role of a specific class of imaginary models to ease transitions between model $k$ and model $k'$. A proposal distribution is build by sampling an inhomogeneous Markov chain which targets at each step one of these auxiliary distributions in the spirit of annealed importance sampling \citep{neal2001annealed}. These auxiliary distributions take the form
\begin{align}\label{eqn_def_rho}
 \rho_{k\mapsto k'}^{(t)}(\mathbf{x}_k^{(t)},\mathbf{u}_{k\mapsto k'}^{(t)})&\propto \left[\pi(k,\mathbf{x}_k^{(t)})\, q_{k\mapsto k'}(\mathbf{u}_{k\mapsto k'}^{(t)}) \, |J_{T_{k\mapsto k'}}(\mathbf{x}_{k}^{(t)}, \mathbf{u}_{k\mapsto k'}^{(t)})|^{-1}\right]^{1-\gamma_t} \left[\pi(k',\mathbf{y}_{k'}^{(t)}) \, q_{k'\mapsto k}(\mathbf{u}_{k'\mapsto k}^{(t)})\right]^{\gamma_t}, \cr
 \rho_{k'\mapsto k}^{(t)}(\mathbf{y}_{k'}^{(t)},\mathbf{u}_{k'\mapsto k}^{(t)})&\propto \left[\pi(k,\mathbf{x}_k^{(t)})\, q_{k\mapsto k'}(\mathbf{u}_{k\mapsto k'}^{(t)}) \, |J_{T_{k\mapsto k'}}(\mathbf{x}_{k}^{(t)}, \mathbf{u}_{k\mapsto k'}^{(t)})|^{-1}\right]^{1-\gamma_{T-t}} \left[\pi(k',\mathbf{y}_{k'}^{(t)}) \, q_{k'\mapsto k}(\mathbf{u}_{k'\mapsto k}^{(t)})\right]^{\gamma_{T-t}},
\end{align}
for $t=0,\ldots,T$ where $T$ is a positive integer, $\gamma_0:=0, \gamma_T:=1$ and $\gamma_t\in[0,1]$ for $t\in\{1,\ldots,T-1\}$. We set $\gamma_t:=t/T$ in our numerical experiments as in \cite{karagiannis2013annealed}. When switching from model $k$ to model $k'$, we thus use at time $t$ a transition kernel $K_{k \mapsto k'}^{(t)}$ to target the distribution $\rho_{k\mapsto k'}^{(t)}$, which is at the beginning close to $(\pi(k,\cdot \,)\otimes q_{k\mapsto k'}) \, |J_{T_{k\mapsto k'}}|^{-1}$, and the end close to $\pi(k', \cdot \,)\otimes q_{k'\mapsto k}$. We wrote $\rho_{k\mapsto k'}^{(t)}$ as a function of $(\mathbf{x}_k^{(t)},\mathbf{u}_{k\mapsto k'}^{(t)})$ to emphasise that the starting point is $(\mathbf{x}_k^{(0)},\mathbf{u}_{k\mapsto k'}^{(0)})$. It is in fact also a function of $(\mathbf{y}_{k'}^{(t)}, \mathbf{u}_{k'\mapsto k}^{(t)})$ that can be found using $(\mathbf{y}_{k'}^{(t)},\mathbf{u}_{k'\mapsto k}^{(t)}) = T_{k\mapsto k'}(\mathbf{x}_k^{(t)},\mathbf{u}_{k\mapsto k'}^{(t)})$.

%The annealing distributions above are called \textit{geometric annealing distributions} in \cite{karagiannis2013annealed}. % Another choice of distributions is presented in that paper. We decided to present only geometric annealing distributions here because they seem to be the most practical.
% We build our proposal to switch from model $k$ to model $k'$ by sampling an inhomogeneous Markov chain of transition kernel $K_{k \mapsto k'}^{(t)}$ at time $t$ which is reversible with respect to $\rho_{k\mapsto k'}^{(t)}$.
The NRJ procedure incorporating such proposals is described in \autoref{algo_NRJ_andrieu_2013}. In Step 2.(b), the path can be generated through $(\mathbf{y}_{k'}^{(t)},\mathbf{u}_{k'\mapsto k}^{(t)})$ instead. % It is simply a question of which choice is the most practical.

\begin{algorithm}[ht]
\caption{NRJ incorporating the method of \cite{karagiannis2013annealed}} \label{algo_NRJ_andrieu_2013}
 \begin{enumerate}
 \itemsep 0mm

  \item Sample $u_c \sim \mathcal{U}(0, 1)$.

  \item[2.(a)\hspace{-4.5mm}] \hspace{4mm} If $u_c \leq \tau$, attempt a parameter update using a MCMC kernel of invariant distribution $\pi(\, \cdot \mid k)$ while keeping the values of the model indicator $k$ and direction $\nu$ fixed.

  \item[2.(b)\hspace{-4.5mm}] \hspace{4mm} If $u_c > \tau$, attempt a model switch from model $k$ to model $k' := k + \nu$. Sample $\mathbf{u}_{k \mapsto k'}^{(0)} \sim q_{k \mapsto k'}$ and $u_a\sim\mathcal{U}(0, 1)$, and set $\mathbf{x}_{k}^{(0)}:=\mathbf{x}_{k}$. Sample a path $(\mathbf{x}_{k}^{(1)}, \mathbf{u}_{k \mapsto k'}^{(1)}),\ldots,(\mathbf{x}_{k}^{(T-1)}, \mathbf{u}_{k \mapsto k'}^{(T-1)})$, where $(\mathbf{x}_{k}^{(t)}, \mathbf{u}_{k \mapsto k'}^{(t)})\sim K_{k \mapsto k'}^{(t)}((\mathbf{x}_{k}^{(t-1)}, \mathbf{u}_{k \mapsto k'}^{(t-1)}), \cdot \,)$. Compute $(\mathbf{y}_{k'}^{(t)},\mathbf{u}_{k'\mapsto k}^{(t)}) := T_{k\mapsto k'}(\mathbf{x}_{k}^{(t)},\mathbf{u}_{k \mapsto k'}^{(t)})$ for $t=0,\ldots,T-1$. If
      \small
  \begin{align*}
   \hspace{-7mm} u_a  \leq \alpha_{\text{NRJ2}}((k,\mathbf{x}_{k}^{(0)}),(k',\mathbf{y}_{k'}^{(T-1)}))&:=1 \wedge r_{\text{NRJ2}}((k,\mathbf{x}_{k}^{(0)}),(k',\mathbf{y}_{k'}^{(T-1)})) \quad \text{($r_{\text{NRJ2}}$ is defined in \eqref{eqn_ratio_A2013})},
  \end{align*}
  \normalsize
  set the next state of the chain to $(k',\mathbf{y}_{k'}^{(T-1)}, \nu)$. Otherwise, set it to $(k,\mathbf{x}_{k}, -\nu)$.

  \item[3.] Go to Step 1.
 \end{enumerate}
\end{algorithm}

\cite{karagiannis2013annealed} explain that the MH correction term in \autoref{algo_NRJ_andrieu_2013}, that we denote by
 \begin{align}\label{eqn_ratio_A2013}
r_{\text{NRJ2}}((k,\mathbf{x}_{k}^{(0)}),(k',\mathbf{y}_{k'}^{(T-1)})) := \prod_{t=0}^{T-1}\frac{\rho_{k\mapsto k'}^{(t+1)}(\mathbf{x}_k^{(t)},\mathbf{u}_{k\mapsto k'}^{(t)})}{\rho_{k\mapsto k'}^{(t)}(\mathbf{x}_k^{(t)},\mathbf{u}_{k\mapsto k'}^{(t)})},
\end{align}
 represents a consistent estimator of $\pi(k')/\pi(k)$ as $T\longrightarrow\infty$.
 %The authors in fact mention that it is the case for the product in the acceptance probabilities in RJ, but it is exactly the same product, which is multiplied by $g(k', k)/g(k,k')$ in their framework to make up their acceptance probabilities (recall the difference between acceptance probabilities in RJ and NRJ, see \eqref{eqn_acc_prob_RJ} and \eqref{eqn_acc_prob_NRJ} for instance).

 Under the following two conditions, the RJ corresponding to \autoref{algo_NRJ_andrieu_2013} and \autoref{algo_NRJ_andrieu_2013} itself are valid, in the sense that the target distribution is an invariant distribution. As mentioned in \cite{karagiannis2013annealed}, \eqref{eqn_symmetry} below is verified if for all $t$, $K_{k \mapsto k'}^{(t)}(\,\cdot \,, \cdot \,)$ and $K_{k' \mapsto k}^{(T-t)}(\,\cdot \,, \cdot \,)$ are Metropolis--Hastings (MH) kernels sharing the same proposal distributions.

\begin{description}
 \itemsep 0mm

 \item[Symmetry condition:] For $t=1,\ldots,T-1$ the pairs of transition kernels $K_{k \mapsto k'}^{(t)}(\,\cdot \,, \cdot \,)$ and $K_{k' \mapsto k}^{(T-t)}(\,\cdot \,, \cdot \,)$ satisfy
 \begin{align}\label{eqn_symmetry}
  K_{k \mapsto k'}^{(t)}((\mathbf{x}_{k}, \mathbf{u}_{k \mapsto k'}), \cdot \,) = K_{k' \mapsto k}^{(T-t)}((\mathbf{x}_{k}, \mathbf{u}_{k \mapsto k'}), \cdot \,) \quad \text{for any } (\mathbf{x}_{k}, \mathbf{u}_{k \mapsto k'}).
 \end{align}

 \item[Reversibility condition:] For $t=1,\ldots,T-1$, and for any $(\mathbf{x}_k,\mathbf{u}_{k\mapsto k'})$ and $(\mathbf{x}_{k}', \mathbf{u}_{k \mapsto k'}')$,
     \small
 \begin{align}\label{eqn_reversibility}
 \hspace{-10mm} \rho_{k\mapsto k'}^{(t)}(\mathbf{x}_k,\mathbf{u}_{k\mapsto k'})K_{k \mapsto k'}^{(t)}((\mathbf{x}_{k}, \mathbf{u}_{k \mapsto k'}), (\mathbf{x}_{k}', \mathbf{u}_{k \mapsto k'}'))=\rho_{k\mapsto k'}^{(t)}(\mathbf{x}_k',\mathbf{u}_{k\mapsto k'}')K_{k \mapsto k'}^{(t)}((\mathbf{x}_{k}', \mathbf{u}_{k \mapsto k'}'), (\mathbf{x}_{k}, \mathbf{u}_{k \mapsto k'})).
 \end{align}
 \normalsize
\end{description}

%\autoref{prop_invariance_andrieu_2013} below indicates that under the same conditions, \autoref{algo_NRJ_andrieu_2013} is valid as well.
%
%\begin{Proposition}[Invariance 2] \label{prop_invariance_andrieu_2013}
% Assume \eqref{eqn_symmetry} and \eqref{eqn_reversibility} are verified. The transition kernel of the Markov chain $\{(K, \mathbf{X}_K, \nu)(m): m\in\na\}$ simulated by \autoref{algo_NRJ_andrieu_2013} admits $\pi \otimes \mathcal{U}\{-1, 1\}$ as invariant distribution.
%\end{Proposition}
%
%\begin{proof} Analogous to that of \autoref{prop_invariance}. \end{proof}

The use of such sophisticated proposal schemes comes at a computational cost. As explained in \cite{karagiannis2013annealed}, the cost of using their approach is $\mathcal{O}(I \times T)$, $I$ denoting the number of iterations. Indeed, typically in Step 2.(b), $T - 1$ MH steps similar to those used to update the parameters (Step 2.(a)) are applied. Fortunately, the improvement as a function of $T$ for a fixed value of $I$ may be very marked for $T \leq T_0$, leading to better results that one would obtain by instead setting $T = 1$ (corresponding to \autoref{algo_NRJ} or vanilla RJ) and increasing $I$ to attain the same computational budget. This is what is observed for the multiple change-point problem presented in \autoref{sec_changepoint}. The cost is indeed offset by a large enough improvement in terms of total variation between the empirical and true marginal model distributions for $T$ in an interval including the value 100, which is the value used. That being said, the computational burden can be mitigated by designing better proposal functions $q_{k\mapsto k'}$ and $T_{k\mapsto k'}$ (when this is feasible), as shown in \cite{gagnon2019RJ}. In \autoref{sec_changepoint}, we only show the results of the sampler combining the approach of \cite{karagiannis2013annealed} with that presented in the next section for brevity.

\subsection{NRJ additionally with the method of \cite{andrieu2018utility}}\label{sec_andrieu_2018}

As mentioned, $r_{\text{NRJ2}}$ in \eqref{eqn_ratio_A2013} can be interpreted as an estimator of $\pi(k')/\pi(k)$. To further reduce the variance of this estimator, one could produce in parallel $N$ inhomogeneous Markov chains ending with $N$ proposals, that we denote by $\mathbf{y}_{k'}^{(T-1, 1)},\ldots,\mathbf{y}_{k'}^{(T-1, N)}$, and use instead the average of the $N$ estimates  $r_{\text{NRJ2}}((k,\mathbf{x}_{k}^{(0)}),(k', $ $\mathbf{y}_{k'}^{(T-1,1)})),\ldots, r_{\text{NRJ2}}((k,\mathbf{x}_{k}^{(0)}),(k',\mathbf{y}_{k'}^{(T-1,N)}))$. Simplifying notation, an estimate of $\pi(k')/\pi(k)$ is thus given by
\[
 \bar{r}(k, k'):=\frac{1}{N} \sum_{j=1}^N r_{\text{NRJ2}}((k,\mathbf{x}_{k}^{(0)}),(k',\mathbf{y}_{k'}^{(T-1,j)})).
\]
However, applying this method naively does not lead to valid algorithms. The approach of \cite{andrieu2018utility} exploits this averaging idea while leading to valid schemes. We present in \autoref{algo_NRJ_andrieu_2018} the NRJ version of this algorithm.

\begin{algorithm}[ht]
\caption{NRJ additionally incorporating the method of \cite{andrieu2018utility}} \label{algo_NRJ_andrieu_2018}
 \begin{enumerate}
 \itemsep 0mm

  \item Sample $u_{c, 1} \sim \mathcal{U}(0, 1)$.

  \item[2.(a)\hspace{-4.5mm}] \hspace{4mm} If $u_{c, 1} \leq \tau$, attempt a parameter update using a MCMC kernel of invariant distribution $\pi(\, \cdot \mid k)$ while keeping the values of the model indicator $k$ and direction $\nu$ fixed.

  \item[2.(b)\hspace{-4.5mm}] \hspace{4mm} If  $u_{c, 1} > \tau$, attempt a model switch from model $k$ to model $k' := k + \nu$. Sample $u_a, u_{c, 2}\sim \mathcal{U}(0, 1)$. If  $u_{c, 2}\leq 1/2$ go to Step 2.(b-i), otherwise go to Step 2.(b-ii).

  \item[2.(b-i)\hspace{-6.8mm}] \hspace{7mm} Sample $N$ proposals $\mathbf{y}_{k'}^{(T-1, 1)},\ldots,\mathbf{y}_{k'}^{(T-1, N)}$ as in Step 2.(b) of \autoref{algo_NRJ_andrieu_2013}. Sample $j^*$ from a PMF such that $\Prob(J^*=j)\propto r_{\text{NRJ2}}((k,\mathbf{x}_{k}),(k',\mathbf{y}_{k'}^{(T-1,j)}))$. If $u_a\leq \bar{r}(k, k')$, set the next state of the chain to $(k',\mathbf{y}_{k'}^{(T-1,j^*)}, \nu)$. Otherwise, set it to $(k,\mathbf{x}_{k}, -\nu)$.

  \item[2.(b-ii)\hspace{-8.5mm}] \hspace{4mm} \quad Sample one forward path as in Step 2.(b) of \autoref{algo_NRJ_andrieu_2013}. Denote by $\mathbf{y}_{k'}^{(T-1,1)}$ the endpoint. From $\mathbf{y}_{k'}^{(T-1,1)}$, generate $N-1$ reverse paths again as in Step 2.(b) of \autoref{algo_NRJ_andrieu_2013}, yielding $N-1$ proposals for the parameters of model $k$. If $u_a \leq \bar{r}(k', k)^{-1}$, set the next state of the chain to $(k',\mathbf{y}_{k'}^{(T-1,1)}, \nu)$. Otherwise, set it to $(k,\mathbf{x}_{k}, -\nu)$.

  \item[3.] Go to Step 1.
 \end{enumerate}
\end{algorithm}

%\autoref{prop_invariance_andrieu_2018} indicates that \autoref{algo_NRJ_andrieu_2018} is valid under the same conditions as \autoref{prop_invariance_andrieu_2013}.
%
%\begin{Proposition}[Invariance 3] \label{prop_invariance_andrieu_2018}
% Assume \eqref{eqn_symmetry} and \eqref{eqn_reversibility} are verified. The transition kernel of the Markov chain $\{(K, \mathbf{X}_K, \nu)(m): m\in\na\}$ simulated by \autoref{algo_NRJ_andrieu_2018} admits $\pi \otimes \mathcal{U}\{-1, 1\}$ as invariant distribution.
%\end{Proposition}
%
%\begin{proof} Analogous to that of \autoref{prop_invariance}. \end{proof}

\cite{andrieu2018utility} prove that increasing $N$ decreases the asymptotic variance of the Monte Carlo estimates produced by RJ incorporating their approach. Their proof cannot be easily extended to NRJ. However, we have observed empirically that increasing $N$ (as increasing $T$ in \autoref{algo_NRJ_andrieu_2013}) leads to a steady increase in the ESS until the samplers are close enough to be ideal. % So again, the strategy is to find the approximate location of the threshold and to select a suitable smaller value for $N$.

An advantage of the approach presented here over that presented in the previous section is that several computations can be executed in parallel. The $N$ proposals in Step 2.(b-i) are indeed generated from the same starting point, implying that this part and the computations of the ratios $r_{\text{NRJ2}}$ can be executed in parallel. Also, in Step 2.(b-ii), once $\mathbf{y}_{k'}^{(T-1,1)}$ has been generated, the $N - 1$ reverse paths can be generated in parallel. The computational cost associated to \autoref{algo_NRJ_andrieu_2018} is thus that of running \autoref{algo_NRJ_andrieu_2013} with the same value for $T$ but with an additional 50\% of model switches (because 50\% of model switches use Step 2.(b-ii)), and to this we need to add the cost of computational overhead which depends on $N$. If \autoref{algo_NRJ_andrieu_2018} is run for $I$ iterations, then its cost is upper bounded by that of running \autoref{algo_NRJ_andrieu_2013} for $1.5I$ iterations plus the cost of computational overhead.

We for instance try implementing \autoref{algo_NRJ_andrieu_2018} using the R package \texttt{parallel} for the multiple change-point problem presented in \autoref{sec_changepoint}. This package provides an easy way of executing tasks in parallel. For this implementation with $N = 10$ and $T = 100$, the computational cost is a little more than double that of \autoref{algo_NRJ_andrieu_2013} with the same value for $T$.

% For the multiple change-point problem presented in \autoref{sec_changepoint}, the ESS is a little more than doubled by using \autoref{algo_NRJ_andrieu_2018} with $T = 100$ and $N = 10$ compared to \autoref{algo_NRJ_andrieu_2013} with $T = 100$. There clearly exists an implementation of \autoref{algo_NRJ_andrieu_2018} with an implementation cost less than twice that of \autoref{algo_NRJ_andrieu_2013}. We for instance try implementing it using the R package \texttt{parallel}, which provides an easy but not necessarily efficient way of executing tasks in parallel. For this implementation, the computational cost is a little more than doubled.

% In a parallel computing environment, an advantage of the approach presented here is that the additional computational cost (over \autoref{algo_NRJ_andrieu_2013}) is negligible considering that we can generate the $N$ proposals $\mathbf{y}_{k'}^{(T-1, 1)},\ldots,\mathbf{y}_{k'}^{(T-1, N)}$ and compute the corresponding estimates of the acceptance ratio in parallel.

\subsection{Convergence of Algorithms~\ref{algo_NRJ_andrieu_2013} and \ref{algo_NRJ_andrieu_2018} towards ideal NRJ}\label{sec_conv_algo2}

We presented at the beginning of \autoref{sec_towards_ideal} intuitive reasons explaining why Algorithms~\ref{algo_NRJ_andrieu_2013} and \ref{algo_NRJ_andrieu_2018} can be made as close as we want to their ideal counterparts. We present here theoretical arguments supporting this intuition by establishing the weak convergence of the Markov chains produced by \autoref{algo_NRJ_andrieu_2013} towards those simulated by its ideal version as $T\longrightarrow\infty$. This implies that \autoref{algo_NRJ_andrieu_2018} with large enough $T$ and fixed $N$ generates Markov chains sharing the same behaviour as its ideal counterpart given that the noise $\varepsilon$ \eqref{eqn_acc_noisy} is only made more stable around the constant 1 by additionally using the approach of \cite{andrieu2018utility}. The corresponding weak convergence result for RJ incorporating the method of \cite{karagiannis2013annealed} holds under the same assumptions as those presented here.

The Markov kernel simulated by \autoref{algo_NRJ_andrieu_2013} (when switching models) is given by:
\small
\begin{align*}
	 &P_T((k,\mathbf{x}_k, \nu),(k',\mathbf{y}_{k'}, \nu')) :=q_{k\mapsto k+\nu}(\mathbf{u}_{k \mapsto k+\nu}^{(0)})\prod_{t=1}^{T-1} K_{k\mapsto k+\nu}^{(t)}((\mathbf{y}_{k+\nu}^{(t-1)},\mathbf{u}_{k+\nu \mapsto k}^{(t-1)}),(\mathbf{y}_{k+\nu}^{(t)},\mathbf{u}_{k+\nu \mapsto k}^{(t)}))  \cr
& \qquad\qquad \times \delta_{(k+\nu, \mathbf{y}_{k+\nu}^{(T-1)}, \nu)}(k', \mathbf{y}_{k'}, \nu') \, \alpha_{\text{NRJ2}}((k,\mathbf{x}_{k}),(k',\mathbf{y}_{k'})) \cr
   &\qquad + \delta_{(k,\mathbf{x}_k, -\nu)}(k',\mathbf{y}_{k'}, \nu')\int q_{k\mapsto k+\nu}(\mathbf{u}_{k \mapsto k+\nu}^{(0)})\prod_{t=1}^{T-1} K_{k\mapsto k+\nu}^{(t)}((\mathbf{y}_{k+\nu}^{(t-1)},\mathbf{u}_{k+\nu \mapsto k}^{(t-1)}),(\mathbf{y}_{k+\nu}^{(t)},\mathbf{u}_{k+\nu \mapsto k}^{(t)})) \cr
   &\qquad\qquad\times (1 - \alpha_{\text{NRJ2}}((k,\mathbf{x}_{k}), (k + \nu, \mathbf{y}_{k+\nu}^{(T-1)})) \, d\mathbf{u}_{k \mapsto k+\nu}^{(0)} \, d(\mathbf{y}_{k+\nu}^{(1)},\mathbf{u}_{k+\nu \mapsto k}^{(1)}) \ldots d(\mathbf{y}_{k+\nu}^{(T-1)},\mathbf{u}_{k+\nu \mapsto k}^{(T-1)}).
  \end{align*} \normalsize
 Use $\{(K,\mathbf{X}_K, \nu)_T(m): m \in\na\}$ to denote the Markov chain associated with this kernel. The ideal version of \autoref{algo_NRJ_andrieu_2013} presented in \autoref{sec_ideal} sets $q_{k\mapsto k'}:=\pi(\, \cdot\mid k')$. This is when switching models. For the parameter update step, we assume that both samplers use the same MCMC kernels of invariant distributions $\pi(\, \cdot \mid k)$. The Markov kernel simulated by the ideal version (when switching models) is thus given by:
\small
\begin{align*}
P_{\text{ideal}}((k,\mathbf{x}_k, \nu),(k',\mathbf{y}_{k'}, \nu'))&:=
% \pi(\mathbf{u}_{k \mapsto k+\nu} \mid k+\nu) \, \delta_{(k+\nu, \mathbf{u}_{k \mapsto k+\nu}, \nu)}(k',\mathbf{y}_{k'}, \nu') \, \left(1\wedge \frac{\pi(k')}{\pi(k)}\right) \cr
%   &\hspace{-10mm} + \delta_{(k,\mathbf{x}_k, -\nu)}(k',\mathbf{y}_{k'}, \nu')\int \pi(\mathbf{u}_{k \mapsto k+\nu} \mid k+\nu)\left(1 - 1\wedge \frac{\pi(k+\nu)}{\pi(k)}\right) \, d\mathbf{u}_{k \mapsto k + \nu} \cr
   %&=
   \pi(\mathbf{u}_{k \mapsto k+\nu} \mid k+\nu) \, \delta_{(k+\nu, \mathbf{u}_{k \mapsto k+\nu}, \nu)}(k',\mathbf{y}_{k'}, \nu') \, \left(1\wedge \frac{\pi(k')}{\pi(k)}\right) \cr
   &\qquad + \delta_{(k,\mathbf{x}_k, -\nu)}(k',\mathbf{y}_{k'}, \nu')\left(1 - 1\wedge \frac{\pi(k+\nu)}{\pi(k)}\right).
\end{align*} \normalsize
Use $\{(K,\mathbf{X}_K, \nu)_{\text{ideal}}(m): m \in\na\}$ to denote the corresponding Markov chain. The transitions in the ideal case are therefore such that with probability $1\wedge \pi(k+\nu)/\pi(k)$ there is a move to model $k+\nu$ with parameters $\mathbf{y}_{k+\nu}\sim\pi(\, \cdot\mid k+\nu)$ (and the direction $\nu$ is conserved). Otherwise, the model and parameters stay the same (and the direction $\nu$ is reversed). The two distinctive elements of the ideal sampler are the form of the acceptance probability and distribution of the proposal $\mathbf{y}_{k+\nu}$. Intuitively, if \autoref{algo_NRJ_andrieu_2013} proposes parameters with a distribution close to $\pi(\, \cdot\mid k+\nu)$ and accept them with a probability close to $1\wedge \pi(k+\nu)/\pi(k)$ (in the limit), the weak convergence should happen as the transition probabilities share the same behaviour. This is essentially what Theorem 1 in \cite{karr1975weak} indicates: if \autoref{algo_NRJ_andrieu_2013} and its ideal version are initialised in the same way (i.e.\ $(K,\mathbf{X}_K, \nu)_T(0)$ and $(K,\mathbf{X}_K, \nu)_{\text{ideal}}(0)$ follow the same distribution), and $P_T\longrightarrow P_{\text{ideal}}$ in some sense as $T\longrightarrow\infty$, then $\{(K,\mathbf{X}_K, \nu)_T(m): m \in\na\}$ converges weakly towards $\{(K,\mathbf{X}_K, \nu)_{\text{ideal}}(m): m \in\na\}$, denoted by $\{(K,\mathbf{X}_K, \nu)_T(m): m \in\na\}\Longrightarrow\{(K,\mathbf{X}_K, \nu)_{\text{ideal}}(m): m \in\na\}$, as $T\longrightarrow\infty$.

We already know that the acceptance probabilities are the same in the limit for both samplers as it is mentioned in \cite{karagiannis2013annealed} that $r_{\text{NRJ2}}((k,\mathbf{x}_{k}), (k+\nu,\mathbf{y}_{k+\nu}^{(T-1)}))$ is a consistent estimator of $\pi(k+\nu)/\pi(k)$ as $T\longrightarrow\infty$ under realistic assumptions. For our result, we more precisely consider the following assumption.
\begin{Assumption}\label{ass_thm_conv_algo2_1}
 The random variable $r_{\text{NRJ2}}((k,\mathbf{x}_{k}),(k+\nu,\mathbf{Y}_{k+\nu}^{(T-1)}))$ converges in distribution towards $\pi(k+\nu)/\pi(k)$ as $T\longrightarrow \infty$, for any given $(k,\mathbf{x}_{k}, \nu)$.
\end{Assumption}

The proposals for the parameters in \autoref{algo_NRJ_andrieu_2013} $\mathbf{y}_{k+\nu}^{(T-1)}$ should in practice be distributed in the limit as $\pi(\, \cdot\mid k+\nu)$. Indeed, consider as in our practical example in \autoref{sec_changepoint} and those in \cite{karagiannis2013annealed} that $K_{k\mapsto k+\nu}^{(t)}$ are $\rho_{k\mapsto k+\nu}^{(t)}$-reversible MH kernels in which the proposal distributions are the same for all $t$. %Denote $q_{\text{NRJ2}}$ the joint proposal distribution (that thus does not depend on $t$) that is used to generate $(\mathbf{y}_{k+\nu}^{(t)},\mathbf{u}_{k+\nu \mapsto k}^{(t)})$ starting from $(\mathbf{y}_{k+\nu}^{(t-1)},\mathbf{u}_{k+\nu \mapsto k}^{(t-1)})$ in
This more precisely means that
\small
\begin{align*}
 &K_{k\mapsto k+\nu}^{(t)}((\mathbf{y}_{k+\nu}^{(t-1)},\mathbf{u}_{k+\nu \mapsto k}^{(t-1)}),(\mathbf{y}_{k+\nu}^{(t)},\mathbf{u}_{k+\nu \mapsto k}^{(t)})) := q_{\text{NRJ2}}^{k,\nu}((\mathbf{y}_{k+\nu}^{(t-1)},\mathbf{u}_{k+\nu \mapsto k}^{(t-1)}),(\mathbf{y}_{k+\nu}^{(t)},\mathbf{u}_{k+\nu \mapsto k}^{(t)})) \cr
 &\qquad \times \left(1\wedge \frac{\rho_{k\mapsto k+\nu}^{(t)}(\mathbf{y}_{k+\nu}^{(t)},\mathbf{u}_{k+\nu \mapsto k}^{(t)}) \, q_{\text{NRJ2}}^{k,\nu}((\mathbf{y}_{k+\nu}^{(t)},\mathbf{u}_{k+\nu \mapsto k}^{(t)}),(\mathbf{y}_{k+\nu}^{(t-1)},\mathbf{u}_{k+\nu \mapsto k}^{(t-1)}))}{\rho_{k\mapsto k+\nu}^{(t)}(\mathbf{y}_{k+\nu}^{(t-1)},\mathbf{u}_{k+\nu \mapsto k}^{(t-1)}) \, q_{\text{NRJ2}}^{k,\nu}((\mathbf{y}_{k+\nu}^{(t-1)},\mathbf{u}_{k+\nu \mapsto k}^{(t-1)}),(\mathbf{y}_{k+\nu}^{(t)},\mathbf{u}_{k+\nu \mapsto k}^{(t)}))}\right) \cr
 &\qquad +\delta_{(\mathbf{y}_{k+\nu}^{(t-1)},\mathbf{u}_{k+\nu \mapsto k}^{(t-1)})}(\mathbf{y}_{k+\nu}^{(t)},\mathbf{u}_{k+\nu \mapsto k}^{(t)}) \, \Prob_{q_{\text{NRJ2}}}(\text{rejection}\mid \mathbf{y}_{k+\nu}^{(t-1)},\mathbf{u}_{k+\nu \mapsto k}^{(t-1)}),
\end{align*}
\normalsize
where $\Prob_{q_{\text{NRJ2}}}(\text{rejection}\mid \mathbf{y}_{k+\nu}^{(t-1)}, \mathbf{u}_{k+\nu \mapsto k}^{(t-1)})$ is the rejection probability starting from $(\mathbf{y}_{k+\nu}^{(t-1)},$ $\mathbf{u}_{k+\nu \mapsto k}^{(t-1)})$ and using $q_{\text{NRJ2}}^{k,\nu}$ as the proposal distribution (which is the same for all $t$). If $0<t^*<T$ and $T$ are such that $t^*/T$ is close to 1, then $\rho_{k\mapsto k+\nu}^{(t^*)}$ is essentially proportional to $\pi(\, \cdot \mid k+\nu) \otimes q_{k+\nu\mapsto k}$ (see \eqref{eqn_def_rho}), and this is true for all $t\geq t^*$. Therefore, the process associated to $K_{k\mapsto k+\nu}^{(t)}$ with $t \geq t^*$ is essentially a time-homogeneous Markov chain with $\pi(\, \cdot\mid k+\nu)\otimes q_{k+\nu \mapsto k}$ as a stationary distribution. This is why if $T$ is additionally such that $T-t^*$ is large enough, then $\mathbf{Y}_{k+\nu}^{(T-1)}$ is (approximately) distributed as $\pi(\, \cdot\mid k+\nu)$.

Consider $\{(\mathbf{Y}_{k+\nu}, \mathbf{U}_{k+\nu \mapsto k})(m): m \in\na\}$ to be the time-homogeneous $\pi(\, \cdot\mid k+\nu)\otimes q_{k+\nu \mapsto k}$-reversible Markov chain associated with the proposal distribution $q_{\text{NRJ2}}^{k,\nu}$ (thus generated by a regular MH algorithm with the same proposal distribution $q_{\text{NRJ2}}^{k,\nu}$ for all iteration $m$ with a stationary distribution that is fixed and set to be $\pi(\, \cdot\mid k+\nu)\otimes q_{k+\nu \mapsto k}$). The design of $q_{\text{NRJ2}}^{k,\nu}$ has an impact on how large the distance between $T$ and $t^*$ need to be to have $\mathbf{Y}_{k+\nu}^{(T-1)}$ approximately distributed as $\pi(\, \cdot\mid k+\nu)$. In our weak convergence result, we assume to simplify that it is such that the associated Markov chain is uniformly ergodic. It is highlighted in the proof what modifications and which additional technical conditions are required if geometric ergodicity is instead assumed.

\begin{Assumption}\label{ass_thm_conv_algo2_2}
 For all $k$ and $\nu$, the time-homogeneous $\pi(\, \cdot\mid k+\nu)\otimes q_{k+\nu \mapsto k}$-reversible Markov chain associated with the proposal distribution $q_{\text{NRJ2}}^{k,\nu}$, $\{(\mathbf{Y}_{k+\nu}, \mathbf{U}_{k+\nu \mapsto k})(m): m \in\na\}$, is uniformly ergodic.
\end{Assumption}

Finally, we assume regularity conditions on the PDF $q_{\text{NRJ2}}^{k,\nu}$.

\begin{Assumption}\label{ass_thm_conv_algo2_3}
 For all $k$ and $\nu$, $q_{\text{NRJ2}}^{k,\nu}((\mathbf{y}_{k+\nu}^{(t-1)},\mathbf{u}_{k+\nu \mapsto k}^{(t-1)}),(\mathbf{y}_{k+\nu}^{(t)},\mathbf{u}_{k+\nu \mapsto k}^{(t)}))$ and
 \begin{align}\label{eqn_condition_ratio_q}
  \frac{q_{\text{NRJ2}}^{k,\nu}((\mathbf{y}_{k+\nu}^{(t)},\mathbf{u}_{k+\nu \mapsto k}^{(t)}),(\mathbf{y}_{k+\nu}^{(t-1)},\mathbf{u}_{k+\nu \mapsto k}^{(t-1)}))}{q_{\text{NRJ2}}^{k,\nu}((\mathbf{y}_{k+\nu}^{(t-1)},\mathbf{u}_{k+\nu \mapsto k}^{(t-1)}),(\mathbf{y}_{k+\nu}^{(t)},\mathbf{u}_{k+\nu \mapsto k}^{(t)}))}
 \end{align}
 are bounded above by a positive constant that depends only on $k$ and $\nu$.
\end{Assumption}

Note that \eqref{eqn_condition_ratio_q} is equal to 1 if $q_{\text{NRJ2}}^{k,\nu}$ is symmetric. We are now ready to present the weak convergence result.

\begin{Theorem}[Weak convergence of \autoref{algo_NRJ_andrieu_2013}] \label{thm_conv_algo2}

	Under Assumptions~\ref{ass_thm_conv_algo2_1} to \ref{ass_thm_conv_algo2_3} and assuming that $(K,\mathbf{X}_K, \nu)_T(0)\sim \pi\otimes \mathcal{U}\{-1, 1\}$ and $(K,\mathbf{X}_K, \nu)_{\text{ideal}}(0)\sim \pi\otimes \mathcal{U}\{-1, 1\}$, we have
 \[
  \{(K,\mathbf{X}_K,\nu)_T(m): m\in\na\}\Longrightarrow \{(K,\mathbf{X}_K,\nu)_{\text{ideal}}(m): m\in\na\} \quad \text{as} \quad T\longrightarrow \infty.
 \]

\end{Theorem}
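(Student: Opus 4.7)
The strategy, essentially sketched in the text preceding the theorem, is to invoke Theorem~1 of \cite{karr1975weak}, which reduces the weak convergence of a sequence of Markov chains sharing a common initial law to appropriate pointwise convergence $P_T \to P_{\text{ideal}}$ of their one-step transition kernels. Since $\pi \otimes \mathcal{U}\{-1,1\}$ is invariant under $P_T$ by \autoref{prop_invariance} and under $P_{\text{ideal}}$ by the analogous argument, starting both chains from this law makes the initial-law hypothesis of Karr's theorem automatic and gives matching marginals for every $m$. The parameter-update branch uses identical MCMC kernels in both samplers, so the work reduces to the model-switch branch, where I need to establish two separate limits as $T\to\infty$: convergence of the acceptance probability to $1\wedge\pi(k+\nu)/\pi(k)$, and weak convergence of the endpoint proposal $\mathbf{Y}_{k+\nu}^{(T-1)}$ to $\pi(\,\cdot\mid k+\nu)$.

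The first limit is immediate: \autoref{ass_thm_conv_algo2_1} combined with the continuous mapping theorem applied to $x\mapsto 1\wedge x$ yields convergence in distribution of the acceptance probability to the deterministic constant $1\wedge\pi(k+\nu)/\pi(k)$, hence in probability. For the second, I would choose an index $t^*=t^*(T)$ with $t^*/T\to 1$ and $T-t^*\to\infty$. With $\gamma_t=t/T$, the bridging density $\rho^{(t)}_{k\mapsto k+\nu}$ is uniformly close to $\pi(k+\nu,\cdot)\otimes q_{k+\nu\mapsto k}$ for every $t\ge t^*$, and \autoref{ass_thm_conv_algo2_3} translates this proximity into a uniform bound
\[
\sup_z \bigl\|K^{(t)}_{k\mapsto k+\nu}(z,\cdot)-K^{\star}(z,\cdot)\bigr\|_{\mathrm{TV}}\le \eta(t)\to 0,
\]
where $K^{\star}$ is the time-homogeneous kernel of \autoref{ass_thm_conv_algo2_2} with proposal $q_{\text{NRJ2}}^{k,\nu}$ and stationary law $\pi(\,\cdot\mid k+\nu)\otimes q_{k+\nu\mapsto k}$. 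A standard telescoping together with Dobrushin-type contraction then gives, for any law $\mu_{t^*}$,
\[
\Bigl\|\mu_{t^*}\prod_{t=t^*+1}^{T-1}K^{(t)}_{k\mapsto k+\nu}-\pi(\,\cdot\mid k+\nu)\otimes q_{k+\nu\mapsto k}\Bigr\|_{\mathrm{TV}}\le \sum_{t>t^*}\eta(t)+C\lambda^{T-t^*},
\]
the last term coming from \autoref{ass_thm_conv_algo2_2}. Choosing $t^*$ so that both terms vanish and marginalising the auxiliary coordinate delivers the required weak limit.

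Combining the two pieces yields, for every bounded continuous $f$ and every starting state, pointwise convergence $P_T f \to P_{\text{ideal}} f$; this is precisely what \cite{karr1975weak} needs to promote the matching of initial laws to weak convergence of finite-dimensional distributions, hence of the full processes. The principal obstacle is the second piece: an inhomogeneous product of MH kernels whose stationary laws themselves vary with $t$ must be shown to behave, on the tail $t\ge t^*$, like iterates of a single uniformly ergodic $K^{\star}$. The crux is establishing $\eta(t)\to 0$ uniformly in the starting point, which amounts to quantifying the rate at which $\rho^{(t)}/\rho^{(t+1)}\to 1$ as $\gamma_t\uparrow 1$; \autoref{ass_thm_conv_algo2_3} provides exactly the boundedness needed for that uniform control, while \autoref{ass_thm_conv_algo2_2} supplies the geometric contraction that absorbs any residual error coming from $\mu_{t^*}$. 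Under geometric rather than uniform ergodicity one would have to replace the $\|\cdot\|_{\mathrm{TV}}$ bound by a $V$-norm bound and check that the burn-in distribution $\mu_{t^*}$ lies in the relevant small-set, but the skeleton of the argument is the same.
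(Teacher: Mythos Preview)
Your proposal is essentially correct and follows the same route as the paper: invoke Karr's theorem, note the initial laws coincide, observe the parameter-update kernels are identical, then on the model-switch branch split the work into (i) convergence of $\alpha_{\text{NRJ2}}$ to $1\wedge\pi(k+\nu)/\pi(k)$ via \autoref{ass_thm_conv_algo2_1} and (ii) weak convergence of the endpoint $\mathbf{Y}_{k+\nu}^{(T-1)}$ to $\pi(\,\cdot\mid k+\nu)$ via a time-splitting at $t^*$ with $t^*/T\to 1$ and $T-t^*\to\infty$, using \autoref{ass_thm_conv_algo2_3} to control the inhomogeneous kernels on $[t^*,T-1]$ and \autoref{ass_thm_conv_algo2_2} to contract to stationarity.

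Two minor points. First, Karr's Theorem~1 as used in the paper requires more than pointwise convergence $P_Tf\to P_{\text{ideal}}f$: one must check that $P_{\text{ideal}}h$ is bounded continuous for $h\in\bar{\mathcal C}^*$, and that $P_Th\to P_{\text{ideal}}h$ \emph{uniformly on compact subsets} of the state space. Your TV bounds are uniform in the starting point, so the latter follows from your construction, but your summary sentence (``pointwise convergence\ldots is precisely what Karr needs'') undersells what must be verified. Second, the paper handles the kernel-perturbation step slightly differently: rather than a Dobrushin-style telescoping in total variation, it bounds $|K^{(t)}-K^{(T)}|$ pointwise by $\epsilon/(T-t^*)$ using \autoref{ass_thm_conv_algo2_3}, passes to products, and then applies Scheff\'e's lemma before invoking uniform ergodicity. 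Your contraction argument is a cleaner alternative that achieves the same conclusion.
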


\section{About NRJ performance}\label{sec_optimality}

We provide in \autoref{sec_aymp_var} a theoretical result showing that the proposed NRJ samplers yield ergodic averages of lower asymptotic variance than the corresponding RJ samplers proposing uniformly at random either model $k - 1$ or model $k + 1$ if a model switch is attempted. Empirically, the level of improvement of NRJ over RJ increases as the samplers better approximate the ideal ones. We show in \autoref{sec_analysis_shape} that the level of improvement also depends on the shape of the target. We finish in \autoref{sec_scaling_limits} with a discussion about scaling limit results formalising a sharp divide in the exploration behaviour of NRJ versus RJ.

\subsection{Asymptotic variance of ergodic averages}\label{sec_aymp_var}

A corollary of Theorem 3.17 in \cite{andrieu2019peskun} allows us to compare ergodic averages produced by $P_{\text{NRJ}}$ with those produced by $P_{\text{RJ}}^{\text{unif}}$, where $P_{\text{NRJ}}$ is the Markov kernel simulated by any NRJ (such as Algorithm \ref{algo_NRJ}, \ref{algo_NRJ_andrieu_2013} or \ref{algo_NRJ_andrieu_2018}) and $P_{\text{RJ}}^{\text{unif}}$ that simulated by its reversible counterpart with $g(k, k + 1) / (1 - \tau)  = g(k, k - 1) / (1 - \tau) = 1 / 2$ (representing conditional probabilities given that a model switch is proposed). This result establishes that the asymptotic variance of ergodic averages produced by $P_{\text{NRJ}}$ is at most equal to that of ergodic averages produced by $P_{\text{RJ}}^{\text{unif}}$ for bounded test functions.
%We however cannot expect for ergodic averages of any function $f$ of the parameters $\mathbf{x}_k$ produced by $P_{\text{NRJ}}$ to have a smaller variance than those produced by $P_{\text{RJ}}^{\text{unif}}$ because the same parameter proposal schemes are used in both samplers.
In particular, the estimates of posterior model probabilities have a lower asymptotic variance under $P_{\text{NRJ}}$ than under $P_{\text{RJ}}^{\text{unif}}$.
\begin{Corollary}\label{cor_asymptotic_var}
 For any real-valued bounded function $f$ of $(k, \mathbf{x}_k)$ considered without loss of generality to have zero mean under the target,
 \[
  \text{var}_\lambda(f, P_{\text{NRJ}}) \leq \text{var}_\lambda(f, P_{\text{RJ}}^{\text{unif}}),
 \]
 where $\text{var}_\lambda(f, P) := \E[\{f(K(0), \mathbf{X}_K(0))\}^2] + 2\sum_{m>0} \lambda^m \E[f(K(0), \mathbf{X}_K(0))f(K(m), \mathbf{X}_K(m))] $ of $\{(K(m), \mathbf{X}_K(m)): m\in\na\}$ being a Markov chain of transition kernel $P$ at equilibrium and $\lambda \in[0, 1)$. If $P_{\text{NRJ}}$ is uniformly ergodic, $\text{var}_\lambda(f, P_{\text{NRJ}})$ converges to the asymptotic variance $\text{var}(f, P_{\text{NRJ}}) := \E[\{f(K(0), \mathbf{X}_K(0))\}^2] + 2\sum_{m>0} \E[f(K(0), \mathbf{X}_K(0))f(K(m), \mathbf{X}_K(m))] $ as $\lambda \longrightarrow 1$, and therefore, $\text{var}(f, P_{\text{NRJ}})$ $ \leq \text{var}(f, P_{\text{RJ}}^{\text{unif}})$ (the limit always exists for a reversible Markov chain).
\end{Corollary}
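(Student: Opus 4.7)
The approach is to apply Theorem 3.17 of \cite{andrieu2019peskun}, whose role is exactly to compare a non-reversible kernel on a lifted state space with a reversible kernel on the original space under a Peskun-type domination relation. The first step is to verify that this domination holds in our setting. Both $P_{\text{NRJ}}$ and $P_{\text{RJ}}^{\text{unif}}$ use the same MCMC kernels during parameter updates, so they differ only in model-switch transitions. For $P_{\text{RJ}}^{\text{unif}}$ with $g(k, k \pm 1) = 1/2$, one proposes $k' = k \pm 1$ uniformly and accepts with $\alpha_{\text{RJ}}$ in \eqref{eqn_acc_prob_RJ}; the ratio $g(k',k)/g(k,k')$ equals $1$, so $\alpha_{\text{RJ}}$ coincides with $\alpha_{\text{NRJ}}$ in \eqref{eqn_acc_prob_NRJ}. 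Integrating the NRJ kernel against $\nu \sim \mathcal{U}\{-1, 1\}$ (the stationary law for $\nu$ by Proposition~\ref{prop_invariance}) then recovers exactly the same probability as $P_{\text{RJ}}^{\text{unif}}$ for any off-diagonal transition $(k, \mathbf{x}_k) \to (k', \mathbf{y}_{k'})$ with $k'\neq k$; the surplus holding mass of $P_{\text{NRJ}}$ is used entirely to flip $\nu \to -\nu$ while leaving $(k, \mathbf{x}_k)$ unchanged. This is the canonical lifting structure covered by \cite{andrieu2019peskun}, and Theorem 3.17 therein yields
\[
  \text{var}_\lambda(f, P_{\text{NRJ}}) \leq \text{var}_\lambda(f, P_{\text{RJ}}^{\text{unif}})  \quad \text{for all bounded } f \text{ and all } \lambda \in [0,1).
\]

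The second step is to pass to the limit $\lambda \to 1$. For the non-reversible side, uniform ergodicity of $P_{\text{NRJ}}$ implies that the autocovariances $\E[f(K(0), \mathbf{X}_K(0)) f(K(m), \mathbf{X}_K(m))]$ decay geometrically in $m$. Hence the series defining $\text{var}_\lambda(f, P_{\text{NRJ}})$ is absolutely summable uniformly in $\lambda \in [0, 1)$, and dominated convergence yields $\text{var}_\lambda(f, P_{\text{NRJ}}) \longrightarrow \text{var}(f, P_{\text{NRJ}})$ as $\lambda \uparrow 1$. For the reversible side, the spectral theorem applied to $P_{\text{RJ}}^{\text{unif}}$ viewed as a self-adjoint contraction on $L_0^2(\pi)$ gives a representation $\text{var}_\lambda(f, P_{\text{RJ}}^{\text{unif}}) = \int_{-1}^1 (1+\lambda\mu)/(1-\lambda\mu)\, d\sigma_f(\mu)$, with $\sigma_f$ the spectral measure of $f$. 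Monotone convergence as $\lambda \uparrow 1$ then delivers the (possibly infinite) asymptotic variance $\text{var}(f, P_{\text{RJ}}^{\text{unif}})$. Combining the two limits with the $\lambda$-level inequality gives the stated bound.

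The main obstacle is to line up the exact formulation of Theorem 3.17 of \cite{andrieu2019peskun} with the setting here: in particular, checking whether their domination condition is phrased for $P_{\text{RJ}}^{\text{unif}}$ directly or for the trivial lifting of it to $\mathcal{K}\times\re^{d_k}\times\{-1,1\}$ (obtained by tensoring with a refreshment of $\nu$), and confirming that the theorem applies to test functions of $(k,\mathbf{x}_k)$ alone rather than $(k,\mathbf{x}_k,\nu)$. Once this bookkeeping is in place, everything else reduces to the two-step argument above.
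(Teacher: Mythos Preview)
Your proposal is correct and follows essentially the same route as the paper: invoke Theorem~3.17 of \cite{andrieu2019peskun} by verifying that $P_{\text{RJ}}^{\text{unif}}$ decomposes as $\tfrac{1}{2}T_{+1}+\tfrac{1}{2}T_{-1}$ plus holding mass (which is exactly the ``bookkeeping'' you flag as the main obstacle, and which the paper carries out explicitly by writing $T_\nu$ as the sum of the model-switch and parameter-update sub-kernels), then pass to $\lambda\to 1$ via geometric decay of the autocovariances under uniform ergodicity (the paper isolates this step as a separate lemma). Your spectral-theorem argument for the reversible limit is a slight elaboration of what the paper simply asserts with a reference.
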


We highlight in the proof of the corollary which additional technical condition is required for the limit to hold under geometric ergodicity.

\subsection{Dependence on the shape of the target}\label{sec_analysis_shape}

We have shown that it is possible to construct samplers as close as we want to their ideal counterparts, at least in the weak convergence sense. We focus in the rest of the section on the marginal ideal behaviour of $K$ associated with the ideal RJ and NRJ.
% to establish the dominance of the latter when the target distribution is not too concentrated.
We only consider iterations in which model switches are proposed to focus on this type of transitions. In particular we do not study the impact of the proportion of parameter updates $\tau$, but we discuss briefly how this parameter is selected in \autoref{sec_implementation}.

In \autoref{sec_existing_results}, existing results describing the behaviours of ideal RJ and NRJ when the marginal distribution is uniform or log concave are presented. NRJ outperform RJ in the former case, but not necessarily in the latter if we consider using functions $g$ incorporating information about this marginal distribution instead of the uniform as discussed earlier. To analyse this latter case further, we use a parameter $\phi\geq 1$ to characterise log concave distributions in \autoref{sec_worst_logconcave}, and present a family of ``worst'' (for NRJ) log concave distributions for which the larger is $\phi$ the more concentrated is the PMF. NRJ outperform RJ when $\phi$ is not too large and the target is a member of this family.

\subsubsection{Existing results}\label{sec_existing_results}

Denote by $\{K_{\text{ideal}}^{\text{RJ}, \, g}(m): m\in\na\}$ and $\{K_{\text{ideal}}^{\text{NRJ}}(m): m\in\na\}$ the Markov chains produced by ideal RJ and NRJ, where we highlighted that the behaviour of RJ depends on $g$. We show here how this proposal distribution can impact performance.

We consider a scenario where $\mathcal{K}:=\{1,\ldots,\text{K}_{\max}\}$. When the target is uniform on this set, the process $\{K_{\text{ideal}}^{\text{NRJ}}(m): m\in\na\}$ evolves deterministically; all proposals are accepted and it thus goes from $1$ to $\text{K}_{\max}$ without stopping, and changes direction at $\text{K}_{\max}$ to return to 1. It is thus periodic and the distribution of $K_{\text{ideal}}^{\text{NRJ}}(m)$ does not converge towards the target as $m \longrightarrow \infty$. This is however not an issue when approximating expectations with respect to the target. A randomised version of $\{K_{\text{ideal}}^{\text{NRJ}}(m): m\in\na\}$ exists, see \cite{diaconis2000analysis}\footnote{The difference is that instead of systematically changing direction at 1 and $\text{K}_{\max}$, the sampler  changes direction probabilistically after on average $\text{K}_{\max}$ steps, making it aperiodic.}. These authors prove that their process also explores the space in $\mathcal{O}(\text{K}_{\max})$ steps while it takes $\mathcal{O}(\text{K}_{\max}^2)$ for $\{K_{\text{ideal}}^{\text{RJ}, \, g^*}(m): m\in\na\}$, $g^*$ being the optimal proposal distribution. The usual symmetric distribution $g^*(k, k+1)=g^*(k, k-1)=1/2$ is the optimal (conditional) proposal distribution (given that a model switch is proposed) in this case among all symmetric stochastic tridiagonal matrices \citep{boyd2006fastest}; i.e.\ when one restricts oneself to proposals of the form $k\mapsto k'\in\{k-1, k+1\}$. This thus establishes the superiority of NRJ over RJ in this case among samplers with proposals of the form $k\mapsto k'\in\{k-1, k+1\}$. %We do not consider the sampler of \cite{diaconis2000analysis} as a competitor to ours because they are, in essence, the same.  This difference makes it non-deterministic and aperiodic.

Superiority for uniform targets is an interesting theoretical result, but this is not a scenario of interest in Bayesian model selection. We believe that it is more likely that the posterior distributions reflect a balance between too simple models (that are more stable but do not capture well the dynamics in the data) and too complex models (that overfit and have less generalisation power), in the spirit of Occam's razor. Unimodal distributions, which are such that $\pi(1)\leq \ldots\leq \pi(k^*)\geq \ldots \geq \pi(\text{K}_{\max})$, are in this sense more interesting to analyse. \cite{hildebrand2002analysis} generalised the result of \cite{diaconis2000analysis} on the Markov chain similar to $\{K_{\text{ideal}}^{\text{NRJ}}(m): m\in\na\}$ to log concave distributions, defined as distributions such that $\pi(k)/\pi(k-1)\geq \pi(k+1)/\pi(k)$ for all $k\in\{2,\ldots,\text{K}_{\max} - 1\}$. Log concave distributions belong to the family of unimodal distributions. Indeed, if we consider for instance $k>k^*$ (the mode), we observe that the ratios $\pi(k+1)/\pi(k)$ are smaller and smaller as we get further away from the mode.

An adaptation of the proof of \cite{hildebrand2002analysis} allows us to prove that $\mathcal{O}(\text{K}_{\max})$ steps are sufficient for $\{K_{\text{ideal}}^{\text{NRJ}}(m): m\in\na\}$ to traverse the state-space, if we assume that the distribution is log concave, but not uniform. For $\{K_{\text{ideal}}^{\text{RJ}, \, g}(m): m\in\na\}$, no such results are available. To establish the superiority of NRJ when the target is not too concentrated, we need to identify the optimal proposal distribution $g^*$ for RJ and to prove that the number of required steps is larger. We take here a step in this direction.

We choose the competitor to NRJ to be the RJ with the distribution $g^*$ given by
\begin{align}\label{eqn_g_sqrt}
  g^*(k, k')\propto \sqrt{\pi(k')/\pi(k)} \quad \text{for} \quad k'\in\{k-1, k+1\}.
\end{align}
This choice finds its justification in \cite{zanella2019informed}, in which it is shown that a class of what the author calls \textit{informed} distributions with $g^*$ as a special case are optimal within reversible samplers in some situations. In fact, it is possible to numerically show that the optimal distribution in terms of speed of convergence among distributions $g(k, \cdot \,)$ defined on $\{k-1, k+1\}$ is very close to $g^*$ with a negligible speed difference when the log concave distribution belongs to the family defined in the next section. % Other justifications are provided in supplementary material (see \autoref{sec_supp}).
Note that the optimal proposal distribution $g^*(k,k+1)=g^*(k,k-1)=1/2$ is retrieved when the target is uniform.

For any log concave target, $g^*$ is such that
\[
 \frac{1}{\pi(k)/\pi(k+1) + 1} \leq g^*(k, k+1)\leq \frac{1}{\pi(k-1)/\pi(k) + 1}.
\]
The acceptance probabilities are controlled in the same way by ratios of posterior model probabilities. As long as the target is not too concentrated, meaning ratios not too far from 1, $\{K_{\text{ideal}}^{\text{RJ}, \, g^*}(m): m\in\na\}$ thus still has a diffusive behaviour that makes it traverse the state-space in of order of $\text{K}_{\max}^2$ steps. However, for concentrated targets,  $g^*(k, k+1)$ gets close to 1 when the chain is at the left of the mode as $\pi(k)/\pi(k+1)$ and $\pi(k-1)/\pi(k)$ are close to 0. The stochastic process thus moves persistently towards the mode and wanders around it afterwards.

\begin{Remark}
 \cite{diaconis2000analysis}, and afterwards \cite{hildebrand2004rates}, also studied V-shaped distributions, which are a class of multimodal distributions. They showed that under regularity conditions it takes on the order of $\text{K}_{\max}^2$ and $\text{K}_{\max}^2 \log \text{K}_{\max}$ steps to converge towards the target for the non-reversible and reversible (with a uniform proposal distribution) samplers, respectively, suggesting that the non-reversible sampler makes the chain move quicker from a mode to another than its reversible counterpart for some multimodal distributions.
\end{Remark}

\subsubsection{Log concave distributions: a worst case scenario}\label{sec_worst_logconcave}

One way to characterise any log concave distribution is through the minimum of the ratios $\pi(k - 1)/\pi(k)$, for $k \leq k^*$, and $\pi(k + 1)/\pi(k)$, for $k \geq k^*$. Consider that this minimum is $1/\phi$. The distribution with a constant decreasing factor from the mode of $1/\phi$ leads to RJ with $g^*$ with the most significant advantage over RJ with a symmetric proposal. This is because the distribution is the most concentrated, which at the same time leaves not much room for persistent movement for NRJ.

We now introduce a class of distributions with this characteristic. This class is such that the mode $k^*$ is at the middle of the domain:
\begin{align}\label{eqn_distributions_phi}
 \frac{\pi(k+1)}{\pi(k)}=\frac{1}{\phi} \quad \text{for} \quad k\geq k^*, \quad \text{and} \quad \frac{\pi(k-1)}{\pi(k)}=\frac{1}{\phi} \quad \text{for} \quad k\leq k^*, \quad \text{with} \quad \phi > 1.
\end{align}

Numerically, if we set the target to be the distribution in \eqref{eqn_distributions_phi}, we observe in \autoref{fig_3}
\begin{wrapfigure}{r}{0.35\textwidth}
\begin{center}
\vspace{-5mm}
\includegraphics[width=0.37\textwidth]{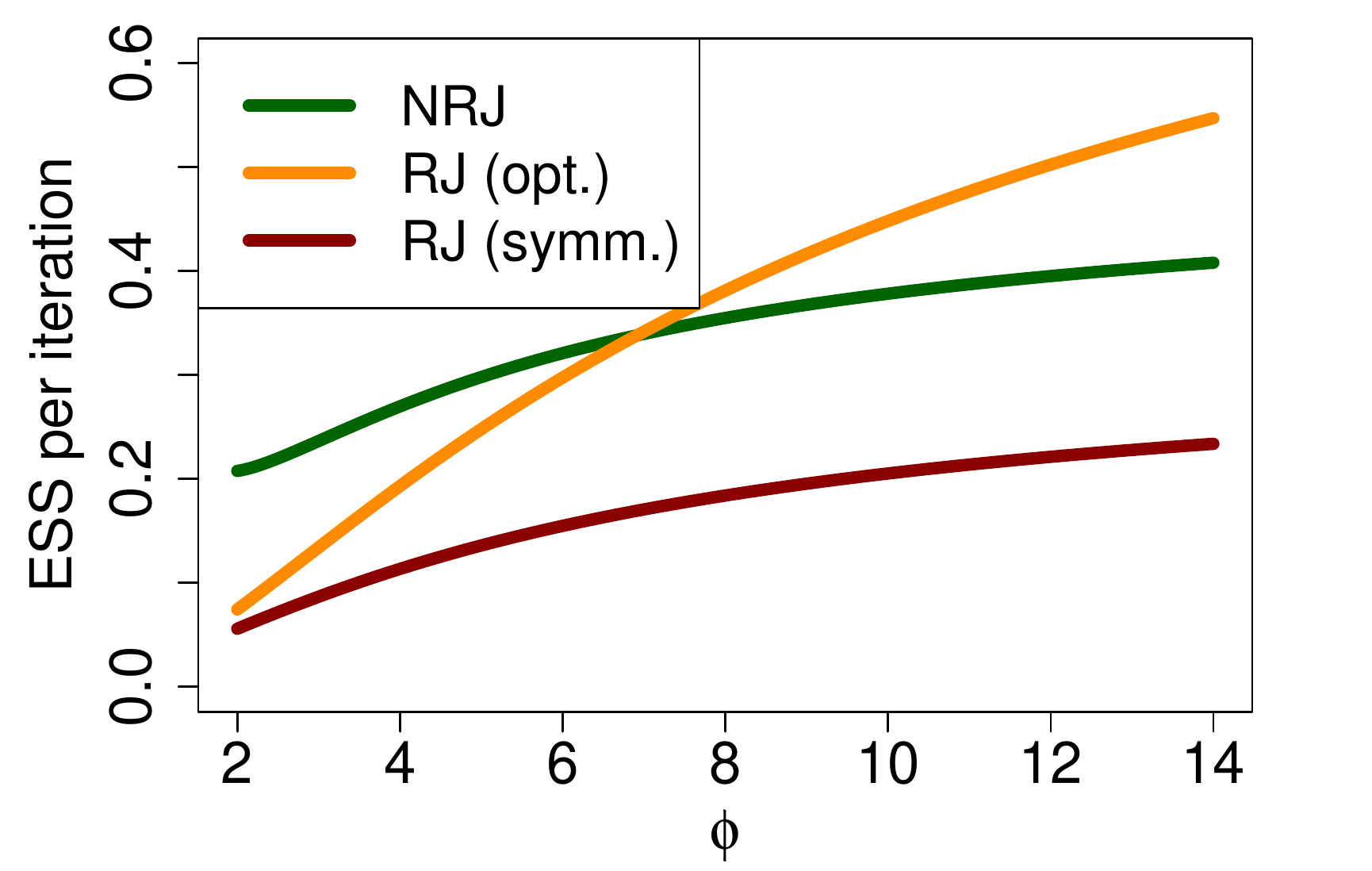}
\end{center}
\vspace{-9mm}
\caption{\small ESS as a function of $\phi$ for NRJ and RJ (with $g^*$ and symmetric $g$), when $\text{K}_{\max}:=11$}\label{fig_3}
\vspace{-5mm}
\end{wrapfigure}
     that NRJ outperforms RJ in terms of ESS when the target is not too concentrated by a factor, for instance, up to 2.8 when $\text{K}_{\max}:=11$. The concentration threshold is in this case around $\phi = 7$; beyond this the target is too concentrated and RJ with $g^*$ slowly starts to perform better. Beyond this threshold, RJ with $g^*$ is more efficient because there are basically 3 possible values for $K$: the mode $k^*$, $k^*+1$ and $k^*-1$. Indeed, when $\phi$ is exactly 7, the total mass outside of these values is $2 \sum_{k = k^* + 2}^{\text{K}_{\max}} \pi(k^*) / \phi^{k - k^*} \approx 3.57\%$. This is essentially true for any value of $\text{K}_{\max}$ as this percentage is equal to the limiting value (to two decimal places) as $\text{K}_{\max}\longrightarrow\infty$. When there are 3 possible values, starting from the mode $k^*$, both NRJ and RJ with $g^*$ go either to the right or the left with equal probability (on average for NRJ given that $\nu\sim\mathcal{U}\{-1,1\}$). Let us say that they go to $k^*+1$. The difference is that NRJ tries to go to $k^*+2$ (because of the direction), which is likely to be rejected, and therefore, it stays for one iteration at $k^*+1$; RJ directly goes back to $k^*$ given that $g^*(k^*+1, k^*)=\phi / (\phi + 1)$ is close to 1. RJ thus seems to have an advantage in terms of required number of steps to traverse the state-space.

To summarise, NRJ is expected to perform better than RJ with $g^*$ for any log concave distribution such that the minimum of the ratios $\pi(k - 1)/\pi(k)$ and $\pi(k + 1)/\pi(k)$ is larger than $1 / \phi^*$, where $\phi^*\approx 7$. We noticed that $g^*$ uses information about the target which is obviously not available prior pilot runs. Given that NRJ always outperform RJ with a symmetric proposal (\autoref{cor_asymptotic_var}), we thus recommend as practical guidelines to start by using NRJ, and if after pilot runs the target appears strongly concentrated, then it may be beneficial to switch to RJ with $g^*$. In the multiple change-point example in \autoref{sec_changepoint}, the target is for instance not too concentrated and RJ with $g^*$ performs similarly to RJ with the symmetric proposal.

% We finish this section by noting that when the target is strongly concentrated, attempts to go to models other than the mode have a high rejection rate. After the rejection, attempting a parameter update within the same iteration leaves the distribution invariant. This strategy may be implemented  to improve the mixing of the parameters.

\subsection{Scaling limits of model indicator process}\label{sec_scaling_limits}

Another way to evaluate the performance of algorithms is through the identification and analysis of scaling limits of their associated stochastic processes as the dimension $d$ of the state-space goes to infinity. \cite{roberts1997weak} and \cite{roberts1998optimal} applied this strategy to optimally tune the random walk Metropolis (RWM) and Metropolis-adjusted Langevin algorithm (MALA), but their analyses can also be used to establish that MALA is more efficient than RWM. This follows from the fact that to obtain non-trivial continuous limiting stochastic processes we need to speed up time by factors $d$ and $d^{1/3}$ for RWM and MALA, respectively. We explore such scaling limits for the processes $\{K_{\text{ideal}}^{\text{RJ}, \, g}(m): m\in\na\}$ and $\{K_{\text{ideal}}^{\text{NRJ}}(m): m\in\na\}$.

%A characteristic that makes the analysis difficult is the fact these processes take values in a discrete set. \cite{syed2019non} face the same challenge for their scaling limit analysis in parallel tempering (PT) of an index process which keeps track of the sequence of annealing parameters on a given machine. Fortunately for these authors, their \textit{position} variable takes values in $\{0, \ldots, N\}$ and has a uniform distribution, where $N$ is the number of machines. By dividing its value by $N$, they thus obtain a limiting continuous random variable on the unit interval, as $N \longrightarrow \infty$. They prove that the reversible and non-reversible PT weakly converge towards a diffusion and piecewise deterministic Markov process, respectively, identifying a sharp divide in the limiting behaviour of these schemes and establishing that the convergence is faster by an order of magnitude for non-reversible PT, as the required time accelerations are $N^2$ and $N$.

In our framework, we have no guarantee that the model indicator variable will converge towards a continuous random variable as $\text{K}_{\max}$ increases. In the supplementary material (\autoref{sec_theoretical_ana}), we present strong and technical assumptions on $\pi(k)$ under which results analogous to those of \cite{syed2019non} are obtained: the reversible process suitably rescaled converges to a diffusion while the non-reversible version converges to a piecewise-deterministic Markov process (see Theorems \ref{thm_conv_RJ} and \ref{thm_conv_NRJ} in \autoref{sec_theoretical_ana}). The required time rescalings lead to conclusions consistent with the results presented in the previous section showing that $\mathcal{O}(\text{K}_{\max}^2)$ and $\mathcal{O}(\text{K}_{\max})$ steps are required to explore the state-space for $\{K_{\text{ideal}}^{\text{RJ}, \, g}(m): m\in\na\}$ and $\{K_{\text{ideal}}^{\text{NRJ}}(m): m\in\na\}$, respectively.

\section{Numerical experiments}\label{sec_numerical}

Recall that in the usual non-ideal situation, the acceptance ratio in $\alpha_{\text{NRJ}}$ can be viewed as the ideal ratio $\pi(k' )/\pi(k)$ corrupted by some multiplicative noise; see \eqref{eqn_acc_noisy}. In practice, the noise fluctuates around 1. In \autoref{sec_noise}, we show how the difference in performance between NRJ and RJ varies when the noise amplitude changes (in a sense made precise in that section), or in other words as we move away or towards ideal NRJ and RJ. The methods presented in \autoref{sec_towards_ideal} are then applied to illustrate how their beneficial effect translates in practice for different noise behaviours. We also show how performances vary when the total number of models increases % and the shape of the marginal PMF $\pi(k)$ varies
on a simple target distribution for which we can precisely control the noise behaviour and number of models. % and the shape of the PMF.
In \autoref{sec_changepoint}, we evaluate the performance of NRJ and RJ in a real multiple change-point problem.

\subsection{Simulation study}\label{sec_noise}

 Let the target distribution be
\[
 \pi(k, \mathbf{x}_k) = p_{\phi, \text{K}_{\max}}(k)\prod_{i=1}^k \varphi(x_{i,k}),
\]
where $p_{\phi, \text{K}_{\max}}$ is the PMF defined in \autoref{sec_worst_logconcave} in \eqref{eqn_distributions_phi}, $\varphi$ is the density of a standard normal and $\mathbf{x}_k:=(x_{1,k},\ldots,x_{k,k})$. When switching from model $k$ to model $k+1$ in this case, one parameter needs to be added. It is not necessary to move the parameters that were in model $k$ given that they have the same distributions as the first $k$ parameters of model $k+1$. %This situation can be thought of as principal component regression (PCR) for which the samplers include an additional principal component (PC) or exclude one when there is a model switch. In fact, it is more accurate to say that it corresponds to the PCR situation where the scale parameters are known; consequently, the regression coefficients are independent and their distributions do not change when PCs are added. Here, it is additionally assume that the parameters have the same distribution.
In this context, it is straightforward to specify the functions $T_{k\mapsto k+1}$ that are required for the implementation of RJ and NRJ: they are such that the proposals for the parameters of model $k+1$ are $\mathbf{y}_{k+1} =(\mathbf{x}_k, u_{k \mapsto k+1})$. This also defines the functions $T_{k+1\mapsto k}$ for the (deterministic) reverse moves. Note that $\mathbf{u}_{k+1\mapsto k} = \varnothing$ for all $k$.

We have $\pi(k+1)/\pi(k)=p_{\phi, \text{K}_{\max}}(k+1)/p_{\phi, \text{K}_{\max}}(k)$, and the noise term $\varepsilon$ is given by
\begin{align}\label{eqn_noise}
  \frac{\pi(\mathbf{y}_{k+1}\mid k+1) \, q_{k+1\mapsto k}(\mathbf{u}_{k+1\mapsto k}) }{\pi(\mathbf{x}_{k}\mid k) \, q_{k\mapsto k+1}(\mathbf{u}_{k\mapsto k+1})\, |J_{T_{k\mapsto k+1}}(\mathbf{x}_{k}, \mathbf{u}_{k\mapsto k+1})|^{-1}}=\frac{\varphi(u_{k \mapsto k+1})}{q_{k\mapsto k+1}(u_{k \mapsto k+1})}.
\end{align}
We can therefore precisely control the noise behaviour by setting $q_{k\mapsto k+1}=\mathcal{N}(0,\sigma^2)$, where $\sigma>0$ is the varying parameter. Indeed, in this case
\[
 \frac{\varphi(u_{k \mapsto k+1})}{q_{k\mapsto k+1}(u_{k \mapsto k+1})} = \sigma \exp\left[-\frac{u_{k \mapsto k+1}^2}{2}\left(1-\frac{1}{\sigma^2}\right)\right],
\]
which behaviour varies with $\sigma$ given that $u_{k \mapsto k+1}\sim \mathcal{N}(0,\sigma^2)$. This is also true for the reverse move.
%Regarding the reverse move $k+1\mapsto k$, the noise term is
%\[
% \frac{q_{k\mapsto k+1}(x_{k+1,k+1})}{f(x_{k+1,k+1})} = \frac{1}{\sigma} \exp\left[\frac{x_{k+1,k+1}^2}{2}\left(1-\frac{1}{\sigma^2}\right)\right],
%\]
%which only depends on $\sigma$ ($x_{k+1,k+1}\sim \mathcal{N}(0, 1)$ regardless of the value of $\sigma$).
A small $\sigma$ represents a proposal distribution that is more concentrated around the mode than the target, whereas it is less concentrated when $\sigma$ is large.

For implementing \autoref{algo_NRJ_andrieu_2013} or the corresponding RJ, we only need to create paths for the proposals $u_{k \mapsto k+1}$ used to switch from model $k$ to model $k+1$. This is realised by looking at the noise term \eqref{eqn_noise}, and also by remembering that it is not necessary to move the parameters that were in model $k$. We thus essentially create a bridge between model $k$ to model $k+1$ made of weighted geometric averages of $f$ and $q_{k\mapsto k+1}$. The annealing intermediate distributions indeed have intuitive forms:
\small\begin{align*}
 \rho_{k\mapsto k+1}^{(t)}(u_{k \mapsto k+1}^{(t)}) &\propto % \left[\exp\left(-\frac{1}{2\sigma^2} \, (u_{k \mapsto k+1}^{(t)})^2\right)\right]^{1-\gamma_t} \left[\exp\left(-\frac{1}{2} \, (u_{k \mapsto k+1}^{(t)})^2\right)\right]^{\gamma_t}=
 \exp\left(-\frac{(u_{k \mapsto k+1}^{(t)})^2}{2} \, [(1-\gamma_t)\sigma^{-2}+\gamma_t]\right),
 % \rho_{k+1\mapsto k}^{(t)}(x_{k+1,k+1}^{(t)})&\propto \left[\exp\left(-\frac{1}{2\sigma^2} \, (x_{k+1,k+1}^{(t)})^2\right)\right]^{\gamma_t} \left[\exp\left(-\frac{1}{2} \, (x_{k+1,k+1}^{(t)})^2\right)\right]^{1-\gamma_t}=\exp\left(-\frac{(x_{k+1,k+1}^{(t)})^2}{2} \, [\gamma_t\sigma^{-2}+(1-\gamma_t)]\right),
\end{align*}\normalsize
where $\gamma_t := t/T$. Therefore, to go from model $k$ to model $k+1$, we target normal distributions with mean 0 and variances $[(1-t/T)\sigma^{-2}+t/T]^{-1}$; we thus start with variances close to $\sigma^2$ (corresponding to the initial proposal distribution) to finish with variances close to 1 (the target distribution). For the reverse move, we do the opposite. As we can sample from the distributions $\rho_{k\mapsto k+1}^{(t)}$, we use them as transitions kernels: $K_{k \mapsto k+1}^{(t)}(u_{k \mapsto k+1}^{(t)}, \cdot \,) := \rho_{k\mapsto k+1}^{(t)}$ (which satisfy the symmetry \eqref{eqn_symmetry} and reversibility \eqref{eqn_reversibility} conditions).

The results are presented in \autoref{fig_results_4_1}. They are based on 1,000 runs of 100,000 iterations for each value of $\sigma$ and $\text{K}_{\max}$; % and $\phi$; the ESS are computed considering only the iterations in which model switches are proposed.
recall that the impact of varying $\phi$ was discussed in \autoref{sec_optimality}. As expected, the further $\sigma$ is from 1 (the latter corresponding to ideal samplers), the lower is the ESS. We notice that the impact is almost symmetric in $\sigma$ if we consider distances from the distribution $\mathcal{N}(0, 1)$ (e.g.\ the normal with $\sigma=1/2$ is two times more concentrated, whereas the normal with $\sigma=2$ is two times less concentrated; they both are at a distance of two). %, but the former is on the more concentrated side and the latter on the less concentrated one).
We also notice that in extreme cases, for instance when $\sigma$ is close to 0, NRJ and RJ have similar performances. This is explained by the fact that the direction-assisted scheme characterising NRJ does not help any more; almost all moves are rejected, which implies that direction changes very often. This leads to the same diffusive behaviour as RJ. Applying Algorithms~\ref{algo_NRJ_andrieu_2013}  and \ref{algo_NRJ_andrieu_2018} improve performances. It is possible to obtain essentially flat lines around the maximum value of 0.21 ESS per iteration by increasing $T$ and $N$, leading to samplers that are at least 2.5 times more efficient than RJ for any value of $\sigma$. Note that we do not show the results for the RJ corresponding to Algorithms~\ref{algo_NRJ_andrieu_2013} and \ref{algo_NRJ_andrieu_2018} as it does not add information to \autoref{fig_results_4_1} (a) given that the lines would be on top of each other.

The ESS also decreases as the total number of models $\text{K}_{\max}$ increases (see \autoref{fig_results_4_1} (b)). This is expected as the difference between $k$ and $k'$ (representing the current model and the next one to explore) is constant, equal to 1. The exploration abilities of the stochastic processes thus diminish as a smaller fraction of the state-space is traversed at each iteration. In theory, a way to compensate is to generate a random variable at each iteration that dictates the difference between $k$ and $k'$, allowing larger jumps. However, as mentioned in \autoref{sec_NRJ}, proposal distributions for transitions to models at a distance of more than 1 are often very difficult to design. % The question is thus whether or not it is possible. And if yes, is it worth it given that it requires the tuning of an additional distribution?
Note that the total probability mass of the 15 most likely models is essentially 1 when $\phi:=2$ (and $\text{K}_{\max}\geq 15$), which explains why the ESS becomes constant beyond this value. Note also that we do not show the results for Algorithms~\ref{algo_NRJ_andrieu_2013} and \ref{algo_NRJ_andrieu_2018} as $\sigma:=1$, meaning that ideal samplers are applied.

\begin{figure}[ht]
\begin{center}
  $\begin{array}{cc}
  \vspace{-0mm} \includegraphics[width=0.4\textwidth]{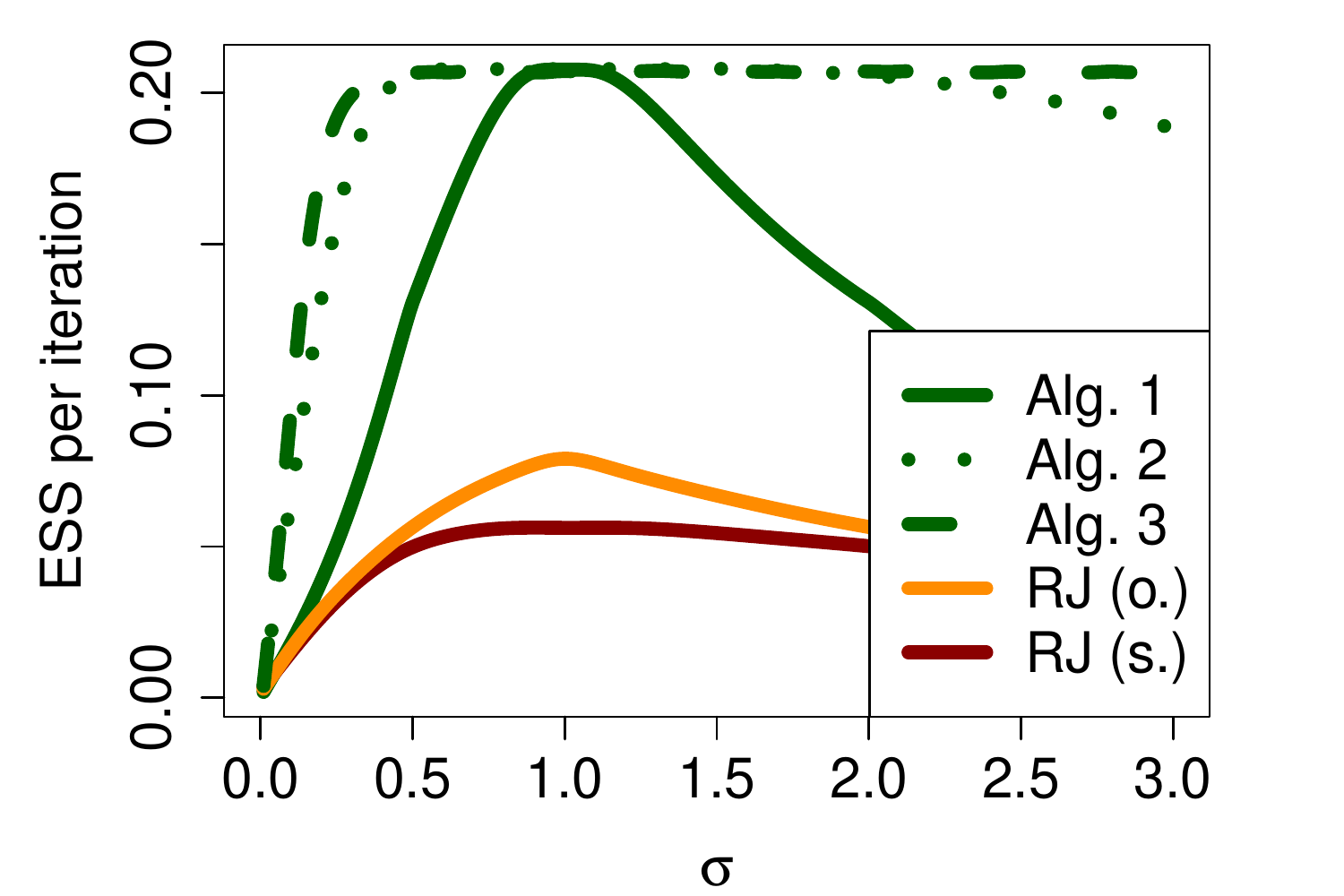} & \includegraphics[width=0.4\textwidth]{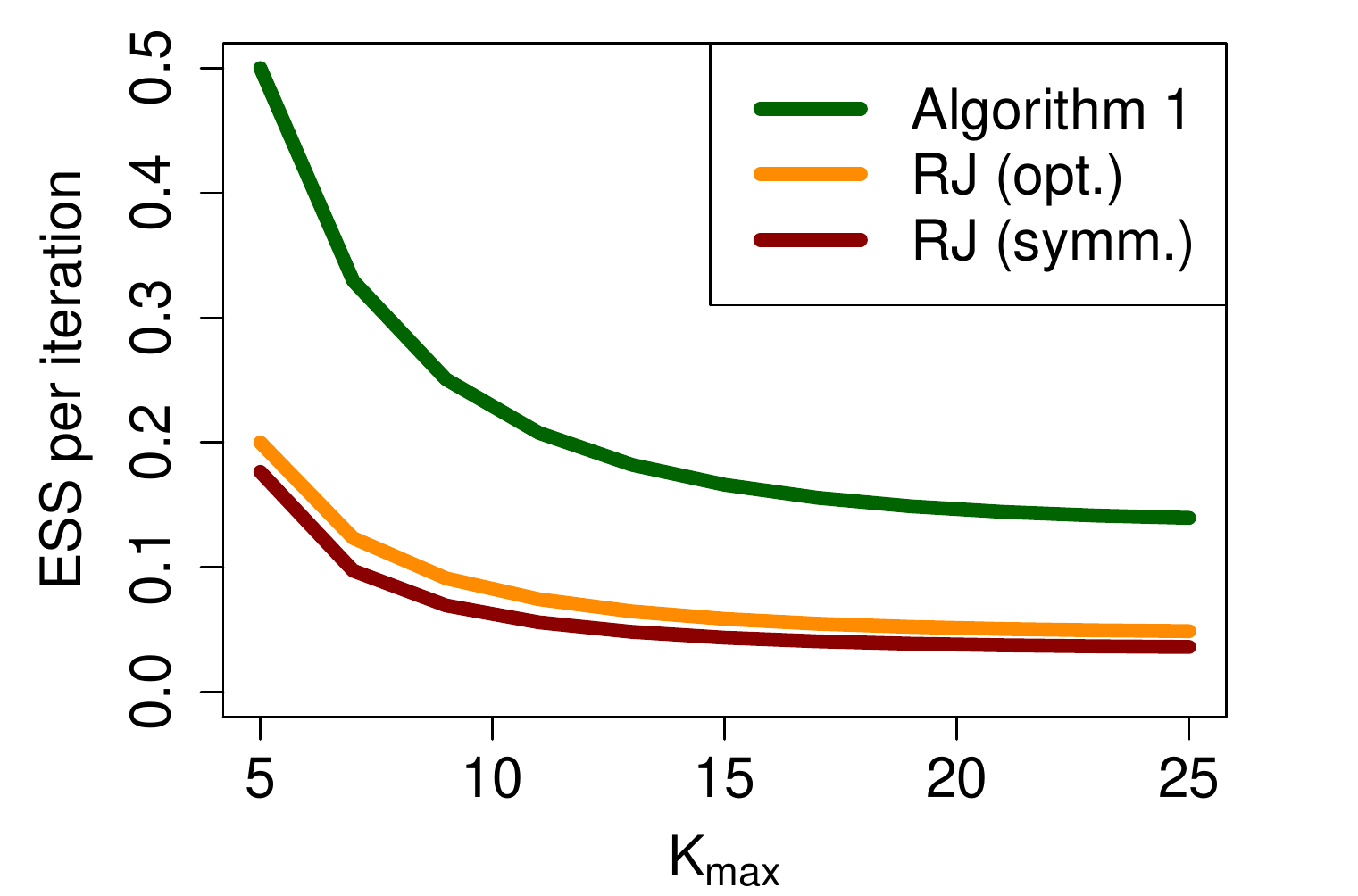} \cr % & \hspace{-5mm} \includegraphics[width=0.35\textwidth]{Fig3_varying_phi.pdf} \cr
  \hspace{4mm} \textbf{(a)} & \hspace{4mm} \textbf{(b)} % & \textbf{(c)}
  \end{array}$
 \end{center}
  \vspace{-8mm}
\caption{\small (a) ESS as a function of $\sigma$ for NRJ (\autoref{algo_NRJ}, \autoref{algo_NRJ_andrieu_2013} with $T := 15$, and \autoref{algo_NRJ_andrieu_2018} with $T := 15$ and $N:=15$) and RJ (with optimal and symmetric $g$), when $\phi:=2$ and $\text{K}_{\max}:=11$; (b) ESS as a function of $\text{K}_{\max}$ for NRJ (\autoref{algo_NRJ}) and RJ (with optimal and symmetric $g$), when $\phi:=2$ and $\sigma := 1$ %; (c) ESS as a function of $\phi$ for NRJ (\autoref{algo_NRJ}) and RJ (with optimal and symmetric $g$), when $\text{K}_{\max}:=11$ and $\sigma := 1$
}\label{fig_results_4_1}
\end{figure}

\subsection{Performance evaluation in multiple change-point problems}\label{sec_changepoint}

In this section, we evaluate the performance of RJ and NRJ algorithms when applied to sample from the posterior of the model in \cite{green1995reversible} for multiple change-point analysis, based on the coal mining disaster data set detailed in \cite{raftery1986bayesian}. The $n$ data points $\mathbf{t} := (t_1, \ldots, t_n)$ represent times of occurrence of disasters. %, to which we add the ``origin'' $t_0 := 0$.
It is assumed that they arose from a non-homogeneous Poisson process that has an intensity given by a step function $\lambda_k$ with $k + 1$ steps, where $k\in\mathcal{K}:=\{0,\ldots, \text{K}_{\max}\}$, $\text{K}_{\max}$ being a known positive integer.

We use the same prior distributions and the same proposals for the RJ and NRJ as \cite{green1995reversible}. For implementing \autoref{algo_NRJ_andrieu_2018} and the corresponding RJ, we proceed as in \cite{karagiannis2013annealed}. All details are provided in the supplementary material (\autoref{sec_changepoint_details}).

The performance of the different algorithms are summarised in \autoref{table_performance}. The results for \autoref{algo_NRJ_andrieu_2018} and the corresponding RJ are based on 1,000 runs with 100,000 iterations and burn-ins of 10,000. To reach the \emph{same computational budget} as these samplers, \autoref{algo_NRJ} and its reversible counterpart are run with an increased number of iterations. The performance of ideal samplers with the same run length as \autoref{algo_NRJ_andrieu_2018} is also presented in \autoref{table_performance} to show the kind of performance that can be achieved. To measure performance, we display ESS per iteration. We also use the relative difference in total variation (TV) with the ideal NRJ: $(\text{TV}(P) - \text{TV}(P_\text{ideal}^{\text{NRJ}})) / \text{TV}(P_\text{ideal}^{\text{NRJ}})$, where $\text{TV}(P)$ is the TV between the model distribution estimated using the Markov kernel $P$ and the posterior model probabilities.\footnote{We used accurate approximations to the posterior model probabilities. We verified that the TV goes to 0 for all algorithms as the number of iterations increases.}

We observe for this multiple change-point example that NRJ samplers always perform better the corresponding RJ samplers at no additional computational cost both in terms of relative TV error and ESS per iteration. Additionally, the displayed relative difference in TV is obtained by running the vanilla samplers for the same amount of compute time as \autoref{algo_NRJ_andrieu_2018} and its corresponding RJ. It is thus clear that the vanilla samplers provide estimates of the marginal posterior model probabilities which are inaccurate. This is consistent with previous experimental results in \cite{karagiannis2013annealed}.  To summarise, this example illustrates that, \emph{at fixed computational complexity}, \autoref{algo_NRJ_andrieu_2018} is an algorithm that can outperform vanilla samplers and the RJ schemes proposed in \cite{karagiannis2013annealed} and \cite{andrieu2018utility}.

%A comparison based on 1,000 runs of RJ and NRJ with 100,000 iterations and burn-ins of 10,000 is presented in \autoref{fig_results_4_2}; the ESS are computed considering only the iterations in which model switches are proposed. Only the results for RJ with symmetric $g$ are shown as those with $g^*$ defined in \eqref{eqn_g_sqrt} are similar. We notice that \autoref{algo_NRJ} and the corresponding RJ offer similar performance (\autoref{fig_results_4_2} (a)). This is due to too many rejected proposals, which induces in NRJ a lot of direction changes and a diffusive behaviour (as explained in \autoref{sec_noise}). The true nature of NRJ characterised by persistent movement is revealed when the proportion of rejections decreases as a by-product of applying the methods of \cite{karagiannis2013annealed} and  \cite{andrieu2018utility} (\autoref{fig_results_4_2} (b)) which allow to get closer to ideal samplers (\autoref{fig_results_4_2} (c)). The ESS per iteration is (on average) 2.1 times larger for \autoref{algo_NRJ_andrieu_2018} than for the corresponding RJ; this ratio is 4.0 for ideal samplers. % To measure the efficiency of ideal samplers, we in this case implemented marginal samplers for $K$, with a target distribution based on accurate approximations of the posterior model probabilities.

\begin{table}[ht]
\footnotesize
\centering
\begin{tabular}{l rr}
\toprule
\textbf{Algorithms} &  \textbf{Rel. diff. in TV} & \textbf{ESS per it.} \cr
\midrule
 Ideal NRJ & ---  &  0.35 \cr
 Ideal RJ    & 0.94 & 0.09  \cr
\midrule
 \autoref{algo_NRJ_andrieu_2018} & 0.94  &  0.15 \cr
 Corresponding RJ & 1.50 & 0.07 \cr
 \midrule
 Vanilla NRJ & 15.76 & 0.02 \cr
 Vanilla RJ & 16.66 & 0.01 \cr
 \bottomrule
\end{tabular}
  \caption{\small Performance of vanilla samplers (i.e.\ \autoref{algo_NRJ} and the corresponding RJ), \autoref{algo_NRJ_andrieu_2018} with $T=100$ and $N=10$ and the corresponding RJ, and ideal samplers} \label{table_performance}
\end{table}

%\begin{figure}[ht]
%\begin{center}
%  $\begin{array}{ccc}
%  \vspace{-2mm}
%  \hspace{-2mm}\includegraphics[width=0.33\textwidth]{4_2_vanilla_100K.pdf} & \hspace{-6mm}\includegraphics[width=0.33\textwidth]{4_2_closer_ideal_100K.pdf} & \hspace{-6mm}\includegraphics[width=0.33\textwidth]{4_2_ideal_100K.pdf} \cr
%  \hspace{-2mm}\textbf{(a) Vanilla samplers} & \hspace{-6mm} \textbf{(b) \autoref{algo_NRJ_andrieu_2018} \& corr. RJ} & \hspace{-6mm} \textbf{(c) Ideal samplers}
%  \end{array}$
% \end{center}
%  \vspace{-5mm}
%\caption{\small ESS for NRJ and RJ (with symmetric $g$) when the samplers are: (a) vanilla samplers (i.e.\ \autoref{algo_NRJ} and the corresponding RJ), (b) \autoref{algo_NRJ_andrieu_2018} with $T:=100$ and $N:=10$ and the corresponding RJ, (c) ideal samplers}\label{fig_results_4_2}
%\end{figure}
%\normalsize

\section{Discussion}\label{sec_discussion}

In this paper, we have introduced non-reversible trans-dimensional samplers that can be applied to Bayesian nested model selection. They are derived from RJ algorithms by making simple modifications which require no additional computational cost during implementation; the model indicator process now follows a direction $\nu$ which is conserved as long as the model switches are accepted, but reversed at the next rejection. Empirically, these samplers outperform their reversible counterparts when the marginal posterior distribution of $K$ is not too concentrated. We now discuss some implementation aspects that have not been addressed in previous sections and possible directions for future research.

\subsection{Other implementation aspects}\label{sec_implementation}

Several functions need to be specified for implementing trans-dimensional samplers: $g$ (which corresponds to the specification of $\tau$ for NRJ), $q_{k\mapsto k'}$ and $T_{k\mapsto k'}$. Significant amount of work has been carried out to address the specification of the last two when no prior information about the problems can be exploited (contrary to the examples in \autoref{sec_numerical}) or a more automatic perspective is adopted (see, e.g., \cite{green2003trans} and \cite{brooks2003efficient}). The approaches of these authors are arguably the most popular. They are directly applicable in the NRJ framework. We believe a particularly good way to proceed is to design the functions $q_{k\mapsto k'}$ and $T_{k\mapsto k'}$ according to the approach of \cite{green2003trans} to afterwards use them in \autoref{algo_NRJ_andrieu_2018} to benefit from the strategies of \cite{karagiannis2013annealed} and \cite{andrieu2018utility} that aim to ensure good mixing properties.

Little attention has been devoted to the impact of the specification of $\tau$. The choice and tuning of this parameter representing the proportion of parameter updates during an algorithm run is a non-trivial problem common to all trans-dimensional samplers whose solution depends on their ability at sampling both the parameters and model indicator. \cite{GAGNON201932} essentially devoted a whole paper on its impact on a specific reversible jump sampler' outputs. For a fixed computational budget, a value closer to 1 leads to more accurate parameter estimation (of the visited models), while a value closer to 0 yields better posterior model probability approximations. Studying more precisely the quantitative impact of $\tau$ on the performance of the samplers is beyond the scope of this paper. However, from a qualitative point of view, $\tau$ has no impact on the order between the asymptotic variances of NRJ and RJ; i.e. \autoref{cor_asymptotic_var} holds whatever being $\tau$.

 \cite{GAGNON201932} prove weak convergence results for RJ under strong assumptions and identify ranges of values for which a suitable balance between a lot of model switches (but few parameter updates) and a lot of parameter updates (but few model switches) is reached. Values around 0.4 are suitable in the situation where $q_{k \mapsto k'}$ and $T_{k \mapsto k'}$ are well designed; otherwise, smaller values should be used. In scenarios in which NRJ is better at sampling $K$ than RJ, it is expected that larger values for $\tau$ than in RJ would be suitable. Further investigations are however required.

\subsection{Possible directions for future research}\label{sec_future}

We identified in \autoref{sec_optimality} a specific ideal RJ (associated with $g^*$) as the main competitor to NRJ within all ideal RJ algorithms when the marginal posterior distribution of $K$ belongs to a family of unimodal PMF and samplers are restricted to model switching proposals of the form $k\mapsto k'\in\{k-1, k+1\}$. We next provided arguments explaining why ideal NRJ outperform this ideal RJ when the target is not too concentrated and numerically showed the range of concentration parameters $\phi$ in the PMF \eqref{eqn_distributions_phi} for which this is the case. It would be interesting to conduct an exhaustive theoretical analysis to expand the scope of the conclusions and make more precise the expected gain.
% As mentioned in \autoref{sec_noise}, it is possibly more efficient to generate a jump size $\epsilon:=|k'-k|$ at each iteration for a proposal for the model to visit next instead of setting $k'=k+\nu$ (which implies that $1=|k'-k|$). Using this approach leads to $k'=k+\epsilon\, \nu$. The random variable $\epsilon$ can be distributed as a Poisson with parameter $\lambda>0$, for instance. As with optimal scaling problems (see, e.g., \cite{roberts1997weak} and \cite{bedard2007weak}), the idea is to prove weak convergence results to find the optimal value for $\lambda$ for a given marginal posterior distribution on $K$ (and hopefully simple rules to identify this optimal value given that we usually do not have access to this distribution). Such results can be used to find optimal values for $\tau$ as well.

It would also be interesting to % take steps towards automatic NRJ, and obviously, to
develop NRJ that can be applied to non-nested models. However, developing efficient non-reversible samplers in such scenarios is much more difficult because, contrary to the nested case, there is no natural order among the models.
% If it possibleIf however this is possible, the approach presented in this paper can be directly applied. We note that if approximations to the posterior model probabilities are available, and if the models can be rearranged using these to obtain (as best as possible) a unimodal symmetric PMF while ensuring that it is possible to make all transitions from models $k$ to models $k' \in \{k - 1, k + 1\}$ in this new ordering, then our approach can be employed and is expected to perform well. We however expect such a situation to happen rarely, justifying further research.}

\section*{Acknowledgements}

The authors thank three anonymous referees for helpful suggestions that led to an improved paper. Philippe Gagnon acknowledges support from FRQNT (Le Fonds de recherche du Qu\'{e}bec - Nature et technologies). Arnaud Doucet was partially supported by the U.S. Army Research Laboratory and the U. S. Army Research Office, and by the U.K. Ministry of Defence (MoD) and the U.K. Engineering and Physical Research Council (EPSRC) under grant number EP/R013616/1. He is also supported by the EPSRC grants EP/R018561/1 and EP/R034710/1.

\bibliographystyle{Chicago}
\bibliography{reference}

\begin{thebibliography}{}

\bibitem[\protect\citeauthoryear{Andrieu, Doucet, Y{\i}ld{\i}r{\i}m, and
  Chopin}{Andrieu et~al.}{2018}]{andrieu2018utility}
Andrieu, C., A.~Doucet, S.~Y{\i}ld{\i}r{\i}m, and N.~Chopin (2018).
\newblock On the utility of {M}etropolis--{H}astings with asymmetric acceptance
  ratio.
\newblock {\em arXiv:1803.09527\/}.

\bibitem[\protect\citeauthoryear{Andrieu and Livingstone}{Andrieu and
  Livingstone}{2019}]{andrieu2019peskun}
Andrieu, C. and S.~Livingstone (2019).
\newblock Peskun-{T}ierney ordering for {Markov chain and process Monte Carlo}:
  beyond the reversible scenario.
\newblock {\em arXiv:1906.06197\/}.

\bibitem[\protect\citeauthoryear{Bierkens, Fearnhead, Roberts, et~al.}{Bierkens
  et~al.}{2019}]{bierkens2019zig}
Bierkens, J., P.~Fearnhead, G.~Roberts, et~al. (2019).
\newblock The zig-zag process and super-efficient sampling for {B}ayesian
  analysis of big data.
\newblock {\em Ann. Statist.\/}~{\em 47\/}(3), 1288--1320.

\bibitem[\protect\citeauthoryear{Bierkens and Roberts}{Bierkens and
  Roberts}{2017}]{bierkens2017piecewise}
Bierkens, J. and G.~Roberts (2017).
\newblock A piecewise deterministic scaling limit of lifted
  {M}etropolis--{H}astings in the {C}urie--{W}eiss model.
\newblock {\em Ann. Appl. Probab.\/}~{\em 27\/}(2), 846--882.

\bibitem[\protect\citeauthoryear{Bouchard-C{\^o}t{\'e}, Vollmer, and
  Doucet}{Bouchard-C{\^o}t{\'e} et~al.}{2018}]{bouchard2018bouncy}
Bouchard-C{\^o}t{\'e}, A., S.~J. Vollmer, and A.~Doucet (2018).
\newblock The {B}ouncy {P}article {S}ampler: A nonreversible rejection-free
  {M}arkov chain {M}onte {C}arlo method.
\newblock {\em J. Amer. Statist. Assoc.\/}, 1--13.

\bibitem[\protect\citeauthoryear{Boyd, Diaconis, Sun, and Xiao}{Boyd
  et~al.}{2006}]{boyd2006fastest}
Boyd, S., P.~Diaconis, J.~Sun, and L.~Xiao (2006).
\newblock Fastest mixing {M}arkov chain on a path.
\newblock {\em The American Mathematical Monthly\/}~{\em 113\/}(1), 70--74.

\bibitem[\protect\citeauthoryear{Brooks, Giudici, and Roberts}{Brooks
  et~al.}{2003}]{brooks2003efficient}
Brooks, S.~P., P.~Giudici, and G.~O. Roberts (2003).
\newblock Efficient construction of reversible jump {M}arkov chain {M}onte
  {C}arlo proposal distributions.
\newblock {\em J. R. Stat. Soc. Ser. B. Stat. Methodol.\/}~{\em 65\/}(1),
  3--39.

\bibitem[\protect\citeauthoryear{Chen, Lov{\'a}sz, and Pak}{Chen
  et~al.}{1999}]{chen1999lifting}
Chen, F., L.~Lov{\'a}sz, and I.~Pak (1999).
\newblock Lifting {M}arkov chains to speed up mixing.
\newblock In {\em Proceedings of the thirty-first annual ACM symposium on
  Theory of computing}, pp.\  275--281.

\bibitem[\protect\citeauthoryear{Diaconis, Holmes, and Neal}{Diaconis
  et~al.}{2000}]{diaconis2000analysis}
Diaconis, P., S.~Holmes, and R.~M. Neal (2000).
\newblock Analysis of a nonreversible {M}arkov chain sampler.
\newblock {\em Ann. Appl. Probab.\/}, 726--752.

\bibitem[\protect\citeauthoryear{Ethier and Kurtz}{Ethier and
  Kurtz}{1986}]{ethier1986markov}
Ethier, S.~N. and T.~G. Kurtz (1986).
\newblock {\em Markov Processes: Characterization and Convergence}.
\newblock Wiley.

\bibitem[\protect\citeauthoryear{Gagnon}{Gagnon}{2019}]{gagnon2019RJ}
Gagnon, P. (2019).
\newblock A step further towards automatic and efficient reversible jump
  algorithms.
\newblock arXiv:1911.02089.

\bibitem[\protect\citeauthoryear{Gagnon, B\'{e}dard, and Desgagn\'{e}}{Gagnon
  et~al.}{2019}]{GAGNON201932}
Gagnon, P., M.~B\'{e}dard, and A.~Desgagn\'{e} (2019).
\newblock Weak convergence and optimal tuning of the reversible jump algorithm.
\newblock {\em Math. Comput. Simulation\/}~{\em 161}, 32--51.

\bibitem[\protect\citeauthoryear{Gagnon, B{\'e}dard, and Desgagn{\'e}}{Gagnon
  et~al.}{2020}]{gagnon2020automatic}
Gagnon, P., M.~B{\'e}dard, and A.~Desgagn{\'e} (2020).
\newblock An automatic robust {B}ayesian approach to principal component
  regression.
\newblock {\em Journal of Applied Statistics\/}, 1--21.

\bibitem[\protect\citeauthoryear{Green}{Green}{1995}]{green1995reversible}
Green, P.~J. (1995).
\newblock Reversible jump {M}arkov chain {M}onte {C}arlo computation and
  {B}ayesian model determination.
\newblock {\em Biometrika\/}~{\em 82\/}(4), 711--732.

\bibitem[\protect\citeauthoryear{Green}{Green}{2003}]{green2003trans}
Green, P.~J. (2003).
\newblock Trans-dimensional {M}arkov chain {M}onte {C}arlo.
\newblock In {\em Highly structured stochastic systems}, pp.\  179--196. OXFORD
  UNIV PRESS.

\bibitem[\protect\citeauthoryear{Hildebrand}{Hildebrand}{2002}]{hildebrand2002analysis}
Hildebrand, M. (2002).
\newblock Analysis of the {D}iaconis-{H}olmes-{N}eal {M}arkov chain sampler for
  log concave probabilities.
\newblock \url{https://www.albany.edu/~martinhi/dvifiles/dhnlc4.dvi}.

\bibitem[\protect\citeauthoryear{Hildebrand}{Hildebrand}{2004}]{hildebrand2004rates}
Hildebrand, M. (2004).
\newblock Rates of convergence of the {D}iaconis-{H}olmes-{N}eal {M}arkov chain
  sampler with a {V}-shaped stationary probability.
\newblock {\em Markov. Process. Relat. Fields\/}~{\em 10}, 687--704.

\bibitem[\protect\citeauthoryear{Karagiannis and Andrieu}{Karagiannis and
  Andrieu}{2013}]{karagiannis2013annealed}
Karagiannis, G. and C.~Andrieu (2013).
\newblock Annealed importance sampling reversible jump {MCMC} algorithms.
\newblock {\em J. Comp. Graph. Stat.\/}~{\em 22\/}(3), 623--648.

\bibitem[\protect\citeauthoryear{Karr}{Karr}{1975}]{karr1975weak}
Karr, A.~F. (1975).
\newblock Weak convergence of a sequence of {M}arkov chains.
\newblock {\em Z. Wahrsch. Verw. Gebiete\/}~{\em 33\/}(1), 41--48.

\bibitem[\protect\citeauthoryear{Neal}{Neal}{2001}]{neal2001annealed}
Neal, R.~M. (2001).
\newblock Annealed importance sampling.
\newblock {\em Stat. Comput.\/}~{\em 11\/}(2), 125--139.

\bibitem[\protect\citeauthoryear{Neal}{Neal}{2011}]{neal2011mcmc}
Neal, R.~M. (2011).
\newblock {MCMC} using {H}amiltonian dynamics.
\newblock In {\em Handbook of {M}arkov Chain {M}onte {C}arlo}, pp.\  113--160.
  CRC Press New York, NY.

\bibitem[\protect\citeauthoryear{Raftery and Akman}{Raftery and
  Akman}{1986}]{raftery1986bayesian}
Raftery, A.~E. and V.~Akman (1986).
\newblock Bayesian analysis of a {P}oisson process with a change-point.
\newblock {\em Biometrika\/}, 85--89.

\bibitem[\protect\citeauthoryear{Richardson and Green}{Richardson and
  Green}{1997}]{richardson1997bayesian}
Richardson, S. and P.~J. Green (1997).
\newblock On {B}ayesian analysis of mixtures with an unknown number of
  components.
\newblock {\em J. R. Stat. Soc. Ser. B. Stat. Methodol.\/}~{\em 59\/}(4),
  731--792.

\bibitem[\protect\citeauthoryear{Roberts, Gelman, and Gilks}{Roberts
  et~al.}{1997}]{roberts1997weak}
Roberts, G.~O., A.~Gelman, and W.~R. Gilks (1997).
\newblock Weak convergence and optimal scaling of random walk {M}etropolis
  algorithms.
\newblock {\em Ann. Appl. Probab.\/}~{\em 7\/}(1), 110--120.

\bibitem[\protect\citeauthoryear{Roberts and Rosenthal}{Roberts and
  Rosenthal}{1998}]{roberts1998optimal}
Roberts, G.~O. and J.~S. Rosenthal (1998).
\newblock Optimal scaling of discrete approximations to {L}angevin diffusions.
\newblock {\em J. R. Stat. Soc. Ser. B. Stat. Methodol.\/}~{\em 60\/}(1),
  255--268.

\bibitem[\protect\citeauthoryear{Roberts and Rosenthal}{Roberts and
  Rosenthal}{2004}]{roberts2004general}
Roberts, G.~O. and J.~S. Rosenthal (2004).
\newblock General state space {M}arkov chains and {MCMC} algorithms.
\newblock {\em Probab. Surv.\/}~{\em 1}, 20--71.

\bibitem[\protect\citeauthoryear{Sakai and Hukushima}{Sakai and
  Hukushima}{2016}]{sakai2016irreversible}
Sakai, Y. and K.~Hukushima (2016).
\newblock Irreversible simulated tempering.
\newblock {\em J. Phys. Soc. Jpn.\/}~{\em 85\/}(10), 104002.

\bibitem[\protect\citeauthoryear{Scheff{\'e}}{Scheff{\'e}}{1947}]{scheffe1947useful}
Scheff{\'e}, H. (1947).
\newblock A useful convergence theorem for probability distributions.
\newblock {\em Ann. Math. Statist.\/}, 434--438.

\bibitem[\protect\citeauthoryear{Syed, Bouchard-C{\^o}t{\'e}, Deligiannidis,
  and Doucet}{Syed et~al.}{2019}]{syed2019non}
Syed, S., A.~Bouchard-C{\^o}t{\'e}, G.~Deligiannidis, and A.~Doucet (2019).
\newblock Non-reversible parallel tempering: a scalable highly parallel {MCMC}
  scheme.
\newblock {\em arXiv:1905.02939\/}.

\bibitem[\protect\citeauthoryear{Vanetti, Bouchard-C{\^o}t{\'e}, Deligiannidis,
  and Doucet}{Vanetti et~al.}{2017}]{vanetti2017piecewise}
Vanetti, P., A.~Bouchard-C{\^o}t{\'e}, G.~Deligiannidis, and A.~Doucet (2017).
\newblock Piecewise deterministic {M}arkov chain {M}onte {C}arlo.
\newblock {\em arXiv:1707.05296\/}.

\bibitem[\protect\citeauthoryear{Vermaak, Andrieu, Doucet, and Godsill}{Vermaak
  et~al.}{2004}]{vermaak2004reversible}
Vermaak, J., C.~Andrieu, A.~Doucet, and S.~Godsill (2004).
\newblock Reversible jump {M}arkov chain {M}onte {C}arlo strategies for
  {B}ayesian model selection in autoregressive processes.
\newblock {\em J. Time Series Anal.\/}~{\em 25\/}(6), 785--809.

\bibitem[\protect\citeauthoryear{Vettori, Huser, Segers, and Genton}{Vettori
  et~al.}{2019}]{vettori2019bayesian}
Vettori, S., R.~Huser, J.~Segers, and M.~G. Genton (2019).
\newblock Bayesian model averaging over tree-based dependence structures for
  multivariate extremes.
\newblock {\em J. Comput. Graph. Statist.\/}, 1--17.

\bibitem[\protect\citeauthoryear{Zanella}{Zanella}{2020}]{zanella2019informed}
Zanella, G. (2020).
\newblock Informed proposals for local {MCMC} in discrete spaces.
\newblock {\em J. Amer. Statist. Assoc.\/}~{\em 115\/}(530), 852--865.

\end{thebibliography}

\section{Supplementary material}\label{sec_supp}

We present in \autoref{sec_proofs} the proofs of \autoref{prop_invariance}, \autoref{thm_conv_algo2} and \autoref{cor_asymptotic_var} of our paper. In \autoref{sec_theoretical_ana}, weak convergence results for the ideal samplers as the size of the state-space increases are presented. The details about the multiple change-point example of our paper are provided in \autoref{sec_changepoint_details}

\subsection{Proofs}\label{sec_proofs}

\begin{proof}[Proof of \autoref{prop_invariance}]
 It suffices to prove that the probability to reach the state $k', \mathbf{y}_{k'}\in A, \nu'$ in one step is equal to the probability of this state under the target:
 \begin{align}\label{eqn_inv}
  \sum_{k, \nu}\int \pi(k, \mathbf{x}_k) \times (1 / 2) \left(\int_{A} P((k,\mathbf{x}_k, \nu), (k', \mathbf{y}_{k'}, \nu'))\, d\mathbf{y}_{k'} \right) \, d\mathbf{x}_k
  &= \int_A \pi(k', \mathbf{y}_{k'})\times (1 / 2) \, d\mathbf{y}_{k'},
 \end{align}
 where $P$ is the transition kernel. Note that we abuse notation here by denoting the measure $ d\mathbf{y}_{k'}$ on the left-hand side (LHS) given that we in fact use the vector $\mathbf{u}_{k\mapsto k'}$ when switching models, which often do not have the same dimension as $\mathbf{y}_{k'}$.

 We consider two distinct events: a model switch is proposed, that we denote $S$, or a parameter update is proposed (therefore denoted $S^c$). We know that the probabilities of these events are $1 - \tau$ and $\tau$, respectively, regardless of the current state of the Markov chain. We rewrite the LHS of \eqref{eqn_inv} as
 \begin{align}\label{eqn_inv_1}
  &\sum_{k, \nu}\int \pi(k, \mathbf{x}_k) \times (1 / 2) \left(\int_{A} P((k,\mathbf{x}_k, \nu), (k', \mathbf{y}_{k'}, \nu'))\, d\mathbf{y}_{k'} \right) \, d\mathbf{x}_k \cr
  &\quad = \Prob(S)\times (1 / 2) \sum_{k, \nu}\int_{A} \int \pi(k, \mathbf{x}_k) \, P((k,\mathbf{x}_k, \nu), (k', \mathbf{y}_{k'}, \nu')\mid S) \, d\mathbf{x}_k \, d\mathbf{y}_{k'} \cr
  &\qquad + \Prob(S^c) \times (1 / 2)\sum_{k, \nu}\int_{A} \int \pi(k, \mathbf{x}_k) \, P((k,\mathbf{x}_k, \nu), (k', \mathbf{y}_{k'}, \nu')\mid S^c) \, d\mathbf{x}_k \, d\mathbf{y}_{k'},
 \end{align}
 using Fubini's theorem. We analyse the two terms separately. We know that
  \[
   P((k,\mathbf{x}_k, \nu), (k', \mathbf{y}_{k'}, \nu')\mid S^c)=\delta_{(k,\nu)}(k',\nu')\, P_{S^c}(\mathbf{x}_{k'}, \mathbf{y}_{k'}),
  \]
  where $P_{S^c}$ is the transition kernel associated with the method used to update the parameters. Therefore, the second term on the right-hand side (RHS) of \eqref{eqn_inv_1} is equal to
  \begin{align*}
   &\Prob(S^c) \times (1 / 2)\sum_{k, \nu}\int_{A} \int \pi(k, \mathbf{x}_k) \, P((k,\mathbf{x}_k, \nu), (k', \mathbf{y}_{k'}, \nu')\mid S^c) \, d\mathbf{x}_k \, d\mathbf{y}_{k'} \cr
   &\quad =\Prob(S^c) \, \pi(k') \times (1 / 2)\int \pi(\mathbf{x}_{k'} \mid k') \left(\int_{A} P_{S^c}(\mathbf{x}_{k'}, \mathbf{y}_{k'}) \, d\mathbf{y}_{k'} \right) \, d\mathbf{x}_{k'}.
  \end{align*}
  We also know that $P_{S^c}$ leaves the conditional distribution $\pi(\, \cdot \mid k')$ invariant, implying that
  \begin{align}\label{eqn_conclusion_up}
   &\Prob(S^c) \, \pi(k') \times (1 / 2)\int \pi(\mathbf{x}_{k'} \mid k') \left(\int_{A} P_{S^c}(\mathbf{x}_{k'}, \mathbf{y}_{k'}) \, d\mathbf{y}_{k'} \right) \, d\mathbf{x}_{k'} \cr
   &\quad = \Prob(S^c) \, \pi(k') \times (1 / 2)\int_A  \pi(\mathbf{y}_{k'} \mid k') \, d\mathbf{y}_{k'}= \Prob(S^c) \int_A \pi(k', \mathbf{y}_{k'})\times (1 / 2) \, d\mathbf{y}_{k'}.
  \end{align}

  For the model switching case (the first term on the RHS of \eqref{eqn_inv_1}), we use the fact that there is a connection between $P((k,\mathbf{x}_k, \nu), (k', \mathbf{y}_{k'}, \nu')\mid S)$ and the kernel associated to a specific RJ. Consider that $g(k,k+1)=g(k,k-1)$ for all $k$ and that all other proposal distributions are the same as the NRJ. In this case, $\alpha_{\text{RJ}}=\alpha_{\text{NRJ}}$. Given the reversibility of RJ, the probability to go from model $k$ with parameters in $B$ to model $k+1$ with parameters in $A$ is
\begin{align}\label{eqn_reversibility_proof}
 &\Prob(S) \int_B \pi(k, \mathbf{x}_k) \left(\int_A P_{\text{RJ}}((k, \mathbf{x}_k),(k+1,\mathbf{y}_{k+1})\mid S) \, d\mathbf{y}_{k+1} \right) \, d\mathbf{x}_{k} \cr
 &\qquad = \Prob(S) \int_A \pi(k+1, \mathbf{y}_{k+1}) \left(\int_B P_{\text{RJ}}((k+1, \mathbf{y}_{k+1}),(k,\mathbf{x}_{k})\mid S) \, d\mathbf{x}_{k} \right) \, d\mathbf{y}_{k+1},
\end{align}   	
where $P_{\text{RJ}}$ is the transition kernel associated with the RJ. Note that
\[
P_{\text{RJ}}((k, \mathbf{x}_k),(k+1,\mathbf{y}_{k+1})\mid S)=(1/2)\, P((k,\mathbf{x}_k, 1), (k+1, \mathbf{y}_{k+1}, 1)\mid S),
\]
given that the difference between both kernels is that in RJ, once it is decided that a model switch is attempted, there is an additional probability of $1/2$ of trying model $k+1$. Analogously, $P_{\text{RJ}}((k+1,\mathbf{y}_{k+1}), (k, \mathbf{x}_k)\mid S)=(1/2)\, P((k+1,\mathbf{y}_{k+1}, -1), (k, \mathbf{x}_{k}, -1)\mid S)$. Using that and taking $B$ equals the whole parameter (and auxiliary) space in \eqref{eqn_reversibility_proof}, we have
\begin{align*}
 &\Prob(S)\int \pi(k, \mathbf{x}_k)\times(1/2) \left(\int_A P((k,\mathbf{x}_k, 1), (k+1, \mathbf{y}_{k+1}, 1)\mid S) \, d\mathbf{y}_{k+1} \right) \, d\mathbf{x}_{k} \cr
 &\quad = \Prob(S)\int_A \pi(k+1, \mathbf{y}_{k+1})\times(1/2) \left(\int P((k+1,\mathbf{y}_{k+1}, -1), (k, \mathbf{x}_{k}, -1)\mid S) \, d\mathbf{x}_{k} \right) \, d\mathbf{y}_{k+1}.
\end{align*}
We thus analyse the probability to reach $k+1$ with parameters in $A$ and direction $+1$. We know that the only other way of reaching this state (other than coming from $k$) is by being at $k+1$ with parameters in $A$ and direction $-1$ and rejecting, which probability is
\[
 \Prob(S)\int_A \pi(k+1, \mathbf{y}_{k+1})\times(1/2) \left(1 - \int P((k+1,\mathbf{y}_{k+1}, -1), (k, \mathbf{x}_{k}, -1)\mid S) \, d\mathbf{x}_{k} \right) \, d\mathbf{y}_{k+1}.
\]
Therefore, the total probability to reach $k+1$ with parameters in $A$ and direction $+1$ in one step (given that a model switch is proposed) is
\begin{align*}
 &\Prob(S)\int \pi(k, \mathbf{x}_k)\times(1/2) \left(\int_A P((k,\mathbf{x}_k, 1), (k+1, \mathbf{y}_{k+1}, 1)\mid S) \, d\mathbf{y}_{k+1} \right) \, d\mathbf{x}_{k} \cr
 & + \Prob(S)\int_A \pi(k+1, \mathbf{y}_{k+1})\times(1/2) \left(1 - \int P((k+1,\mathbf{y}_{k+1}, -1), (k, \mathbf{x}_{k}, -1)\mid S) \, d\mathbf{x}_{k} \right) \, d\mathbf{y}_{k+1} \cr
 &\qquad = \Prob(S)\int_A \pi(k+1, \mathbf{y}_{k+1})\times(1/2) \, d\mathbf{y}_{k+1}.
\end{align*}
Combining this with \eqref{eqn_conclusion_up} allows to conclude the proof.
\end{proof}

\begin{proof}[Proof of \autoref{thm_conv_algo2}]

We show that Algorithm 2 converges towards its ideal version as $T\longrightarrow\infty$. As mentioned, for the ideal version, we consider the case where $q_{k\mapsto k'}:=\pi(\, \cdot\mid k')$, the conditional distribution of the parameters of model $k'$. In this case, we set $\mathbf{y}_{k'}:=\mathbf{u}_{k\mapsto k'}$ to be the proposal for the parameters of model $k'$, and thus the function $T_{k\mapsto k'}$ to be the identity function.

To show the convergence, we use Theorem 1 in \cite{karr1975weak}. We thus have to verify three assumptions, and this will allow to conclude that $\{(K,\mathbf{X}_K,\nu)_T(m): m\in\na\}\Longrightarrow \{(K,\mathbf{X}_K,\nu)_{\text{ideal}}(m): m\in\na\}$  as $T\longrightarrow \infty$. We focus on the movements involving model switches as the same parameter update schemes are used in both samplers. Here are the three assumptions.

\textbf{1.} The distributions that are used to initialise Algorithm 2 converge towards that used to initialise the ideal NRJ.
	
	This is verified as we assume that the Markov chains produced by both Algorithm 2 and its ideal counterpart start at stationarity, i.e.\ $(K,\mathbf{X}_K, \nu)_T(0)\sim \pi\otimes \mathcal{U}\{-1, 1\}$ and $(K,\mathbf{X}_K, \nu)_{\text{ideal}}(0)\sim \pi \otimes \mathcal{U}\{-1, 1\}$.
	
\textbf{2.} For $h\in \bar{\mathcal{C}}^*$ (the space of bounded uniformly continuous functions), we have that
\begin{align*}
	 P_{\text{ideal}}(k,\mathbf{x}_k, \nu)h:=\sum_{k',\nu'}\int h(k', \mathbf{y}_{k'}, \nu') P_{\text{ideal}}((k,\mathbf{x}_k, \nu),(k',\mathbf{y}_{k'}, \nu')) \, d\mathbf{y}_{k'}
	\end{align*}
 is a bounded continuous function.

 This kernel is such that
	\begin{align*}
	 P_{\text{ideal}}(k,\mathbf{x}_k, \nu)h&= \left(1\wedge \frac{\pi(k+\nu)}{\pi(k)}\right)\int h(k+\nu, \mathbf{u}_{k\mapsto k+\nu}, \nu) \, \pi(\mathbf{u}_{k\mapsto k+\nu}\mid k+\nu) \, d\mathbf{u}_{k\mapsto k+\nu} \cr
	  &\qquad + h(k, \mathbf{x}_k, -\nu)\left(1-1\wedge \frac{\pi(k+\nu)}{\pi(k)}\right),
	\end{align*}
	which is bounded and continuous.
	
\textbf{3.} For every $h\in \bar{\mathcal{C}}^*$, the Markov kernel associated with Algorithm 2 $P_Th$ converges towards $P_{\text{ideal}}h$ uniformly on each compact subset of the state-space as $T\longrightarrow\infty$.

 We first show the pointwise convergence. Let us denote the conditional joint density of all the random variables involved in the proposal $\mathbf{y}_{k+\nu}^{(T-1)}$ given $(k,\mathbf{x}_{k}, \nu)$ by
 \small
	\begin{align*}
	 &q(\mathbf{u}_{k \mapsto k+\nu}^{(0)},\mathbf{y}_{k+\nu}^{(1:T-1)},\mathbf{u}_{k+\nu \mapsto k}^{(1:T-1)}):= q_{k\mapsto k+\nu}(\mathbf{u}_{k \mapsto k+\nu}^{(0)})  \prod_{t=1}^{T-1} K_{k\mapsto k+\nu}^{(t)}((\mathbf{y}_{k+\nu}^{(t-1)},\mathbf{u}_{k+\nu \mapsto k}^{(t-1)}),(\mathbf{y}_{k+\nu}^{(t)},\mathbf{u}_{k+\nu \mapsto k}^{(t)})) ,
	\end{align*}
\normalsize
	where $K_{k\mapsto k+\nu}^{(t)}$ is a MH kernel reversible with respect to $\rho_{k\mapsto k+\nu}^{(t)}$. We have that
	\begin{align*}
	 &P_Th(k,\mathbf{x}_k, \nu) :=\int h(k+\nu, \mathbf{y}_{k+\nu}^{(T-1)},\nu) \,  q(\mathbf{u}_{k \mapsto k+\nu}^{(0)},\mathbf{y}_{k+\nu}^{(1:T-1)},\mathbf{u}_{k+\nu \mapsto k}^{(1:T-1)}) \cr
& \hspace{50mm} \times \alpha_{\text{NRJ2}}((k,\mathbf{x}_{k}),(k+\nu,\mathbf{y}_{k+\nu}^{(T-1)})) \, d(\mathbf{u}_{k \mapsto k+\nu}^{(0)},\mathbf{y}_{k+\nu}^{(1:T-1)},\mathbf{u}_{k+\nu \mapsto k}^{(1:T-1)}) \cr
	 & \qquad +h(k, \mathbf{x}_k, -\nu)\int q(\mathbf{u}_{k \mapsto k+\nu}^{(0)},\mathbf{y}_{k+\nu}^{(1:T-1)},\mathbf{u}_{k+\nu \mapsto k}^{(1:T-1)}) \cr
& \hspace{40mm} \times (1-\alpha_{\text{NRJ2}}((k,\mathbf{x}_{k}),(k+\nu,\mathbf{y}_{k+\nu}^{(T-1)}))) \,d(\mathbf{u}_{k \mapsto k+\nu}^{(0)},\mathbf{y}_{k+\nu}^{(1:T-1)},\mathbf{u}_{k+\nu \mapsto k}^{(1:T-1)}).
	\end{align*}
	Using the triangle inequality, we thus have that
	\begin{align}\label{eqn_diff_kernels}
	 &|P_Th(k,\mathbf{x}_k, \nu)-P_{\text{ideal}}h(k,\mathbf{x}_k, \nu)| \cr
	 &\leq \left|\int h(k+\nu, \mathbf{y}_{k+\nu}^{(T-1)},\nu) \,  q(\mathbf{u}_{k \mapsto k+\nu}^{(0)},\mathbf{y}_{k+\nu}^{(1:T-1)},\mathbf{u}_{k+\nu \mapsto k}^{(1:T-1)}) \, \right. \cr
&\hspace{40mm} \left. \times \alpha_{\text{NRJ2}}((k,\mathbf{x}_{k}),(k+\nu,\mathbf{y}_{k+\nu}^{(T-1)})) \, d(\mathbf{u}_{k \mapsto k+\nu}^{(0)},\mathbf{y}_{k+\nu}^{(1:T-1)},\mathbf{u}_{k+\nu \mapsto k}^{(1:T-1)})\right. \cr
	 &\qquad\left. -\int h(k+\nu, \mathbf{u}_{k\mapsto k+\nu}, \nu) \, \pi(\mathbf{u}_{k\mapsto k+\nu}\mid k+\nu) \left(1\wedge \frac{\pi(k+\nu)}{\pi(k)}\right) \, d\mathbf{u}_{k\mapsto k+\nu}\right| \cr
	 &+\left|h(k, \mathbf{x}_k, -\nu)\int q(\mathbf{u}_{k \mapsto k+\nu}^{(0)},\mathbf{y}_{k+\nu}^{(1:T-1)},\mathbf{u}_{k+\nu \mapsto k}^{(1:T-1)}) \right. \cr &\hspace{40mm} \left. \times (1-\alpha_{\text{NRJ2}}((k,\mathbf{x}_{k}),(k+\nu,\mathbf{y}_{k+\nu}^{(T-1)}))) \,d(\mathbf{u}_{k \mapsto k+\nu}^{(0)},\mathbf{y}_{k+\nu}^{(1:T-1)},\mathbf{u}_{k+\nu \mapsto k}^{(1:T-1)}) \right. \cr
	 &\qquad \left.- h(k, \mathbf{x}_k, -\nu)\left(1-1\wedge \frac{\pi(k+\nu)}{\pi(k)}\right)\right|.	
	\end{align}	
	We analyse the first absolute value on the RHS. We write the integrals as (conditional) expectations (given $(k,\mathbf{x}_{k}, \nu)$):
	\begin{align*}
	 &\left|\E\left[h(k+\nu, \mathbf{Y}_{k+\nu}^{(T-1)},\nu) \,\alpha_{\text{NRJ2}}((k,\mathbf{x}_{k}),(k+\nu,\mathbf{Y}_{k+\nu}^{(T-1)}))\right] \right. \cr
&\hspace{40mm} \left. - \E\left[ h(k+\nu, \mathbf{U}_{k\mapsto k+\nu}, \nu)\left(1\wedge \frac{\pi(k+\nu)}{\pi(k)}\right) \right]\right| \cr
	 &\quad\leq \left|\E\left[h(k+\nu, \mathbf{Y}_{k+\nu}^{(T-1)},\nu) \,\alpha_{\text{NRJ2}}((k,\mathbf{x}_{k}),(k+\nu,\mathbf{Y}_{k+\nu}^{(T-1)}))\right] \right. \cr
 & \hspace{40mm} \left. -\E \left[h(k+\nu, \mathbf{Y}_{k+\nu}^{(T-1)},\nu) \left(1\wedge \frac{\pi(k+\nu)}{\pi(k)}\right)\right]\right| \cr
	 &\quad+\left|\left(1\wedge \frac{\pi(k+\nu)}{\pi(k)}\right)\E \left[h(k+\nu, \mathbf{Y}_{k+\nu}^{(T-1)},\nu) \right] - \left(1\wedge \frac{\pi(k+\nu)}{\pi(k)}\right)\E\left[ h(k+\nu, \mathbf{U}_{k\mapsto k+\nu}, \nu) \right]\right|,
	\end{align*}
	using again the triangle inequality. We now show that both absolute values on the RHS converge towards 0. For the first one, we have
	\begin{align*}
	 &\left|\E\left[h(k+\nu, \mathbf{Y}_{k+\nu}^{(T-1)},\nu) \,\alpha_{\text{NRJ2}}((k,\mathbf{x}_{k}),(k+\nu,\mathbf{Y}_{k+\nu}^{(T-1)}))\right] \right. \cr
&\hspace{40mm} \left. -\E \left[h(k+\nu, \mathbf{Y}_{k+\nu}^{(T-1)},\nu) \left(1\wedge \frac{\pi(k+\nu)}{\pi(k)}\right)\right]\right| \cr
	 &\qquad \leq M \, \E\left|\alpha_{\text{NRJ2}}((k,\mathbf{x}_{k}),(k+\nu,\mathbf{Y}_{k+\nu}^{(T-1)})) - \left(1\wedge \frac{\pi(k+\nu)}{\pi(k)}\right)\right| \longrightarrow 0,
	\end{align*}
	using that there exists a positive constant $M$ such that $\abs{h}\leq M$ and that $r_{\text{NRJ2}}((k,\mathbf{x}_{k}),(k+\nu,\mathbf{Y}_{k+\nu}^{(T-1)}))\longrightarrow \pi(k+\nu)/\pi(k)$ in distribution (by assumption). The convergence of the expectation follows from the fact that if a random variable $X_n$ converges towards a constant $c$ in distribution, then $X_n-c$ converges towards 0 in probability and $\E|g(X_n)-g(c)|\longrightarrow 0$ for any bounded uniformly continuous function $g$ ($\min(1, x)$ with $x\geq 0$ is a function having these characteristics). For the second absolute value, we have
 \begin{align*}
 &\left(1\wedge \frac{\pi(k+\nu)}{\pi(k)}\right)\left|\E\left[h(k+\nu, \mathbf{Y}_{k+\nu}^{(T-1)},\nu) \right] - \E\left[ h(k+\nu, \mathbf{U}_{k\mapsto k+\nu}, \nu) \right]\right| \cr
 &\qquad \leq \left|\E\left[h(k+\nu, \mathbf{Y}_{k+\nu}^{(T-1)},\nu) \right] - \E\left[ h(k+\nu, \mathbf{U}_{k\mapsto k+\nu}, \nu) \right]\right| \longrightarrow 0,
 \end{align*}	
 if the (conditional) distribution of $\mathbf{Y}_{k+\nu}^{(T-1)}$ (given $(k,\mathbf{x}_{k}, \nu)$) converges towards $\pi(\, \cdot\mid k+\nu)$ given that $h$ is a bounded continuous function.

 Let us now prove this convergence in distribution. The conditional distribution of $\mathbf{Y}_{k+\nu}^{(T-1)}$ given $(k,\mathbf{x}_{k}, \nu)$ is written as
 \begin{align*}
  &\Prob(\mathbf{Y}_{k+\nu}^{(T-1)}\in A \mid k,\mathbf{x}_{k}, \nu) :=\int_{\mathbf{y}_{k+\nu}^{(T-1)}\in A} q_{k\mapsto k+\nu}(\mathbf{u}_{k \mapsto k+\nu}^{(0)})  \cr
  &\hspace{20mm}\times \prod_{t=1}^{T-1} K_{k\mapsto k+\nu}^{(t)}((\mathbf{y}_{k+\nu}^{(t-1)},\mathbf{u}_{k+\nu \mapsto k}^{(t-1)}),(\mathbf{y}_{k+\nu}^{(t)},\mathbf{u}_{k+\nu \mapsto k}^{(t)})) \,d(\mathbf{u}_{k \mapsto k+\nu}^{(0)},\mathbf{y}_{k+\nu}^{(1:T-1)},\mathbf{u}_{k+\nu \mapsto k}^{(1:T-1)}) \cr
  &\quad=\int_{\mathbf{y}_{k+\nu}^{(T-1)}\in A} q_{k\mapsto k+\nu}(\mathbf{u}_{k \mapsto k+\nu}^{(0)})  \prod_{t=1}^{t^*-1} K_{k\mapsto k+\nu}^{(t)}((\mathbf{y}_{k+\nu}^{(t-1)},\mathbf{u}_{k+\nu \mapsto k}^{(t-1)}),(\mathbf{y}_{k+\nu}^{(t)},\mathbf{u}_{k+\nu \mapsto k}^{(t)})) \cr
 &\qquad \times \prod_{t=t^*}^{T-1} K_{k\mapsto k+\nu}^{(t)}((\mathbf{y}_{k+\nu}^{(t-1)},\mathbf{u}_{k+\nu \mapsto k}^{(t-1)}),(\mathbf{y}_{k+\nu}^{(t)},\mathbf{u}_{k+\nu \mapsto k}^{(t)})) \,d(\mathbf{u}_{k \mapsto k+\nu}^{(0)},\mathbf{y}_{k+\nu}^{(1:T-1)},\mathbf{u}_{k+\nu \mapsto k}^{(1:T-1)}).
 \end{align*}
 Under \autoref{ass_thm_conv_algo2_3}, one can show that $t^*$ and $T$ can be chosen such that $(T-t^*)/T$ is small and
 \begin{align*}
   \abs{K_{k\mapsto k+\nu}^{(t)}((\mathbf{y}_{k+\nu}^{(t-1)},\mathbf{u}_{k+\nu \mapsto k}^{(t-1)}),(\mathbf{y}_{k+\nu}^{(t)},\mathbf{u}_{k+\nu \mapsto k}^{(t)})) - K_{k\mapsto k+\nu}^{(T)}((\mathbf{y}_{k+\nu}^{(t-1)},\mathbf{u}_{k+\nu \mapsto k}^{(t-1)}),(\mathbf{y}_{k+\nu}^{(t)},\mathbf{u}_{k+\nu \mapsto k}^{(t)}))}< \frac{1}{T-t^*} \,\epsilon,
 \end{align*}
 for all $t\geq t^*$ and any $\epsilon>0$, where $K_{k\mapsto k+\nu}^{(T)}$ is the MH kernel for which $\rho_{k\mapsto k+\nu}^{(T)}:=\pi(\, \cdot\mid k+\nu)\otimes q_{k+\nu \mapsto k}$ is used instead in the acceptance probability. One can thus show that
 \begin{align*}
   \abs{\prod_{t=t^*}^{T-1} K_{k\mapsto k+\nu}^{(t)}((\mathbf{y}_{k+\nu}^{(t-1)},\mathbf{u}_{k+\nu \mapsto k}^{(t-1)}),(\mathbf{y}_{k+\nu}^{(t)},\mathbf{u}_{k+\nu \mapsto k}^{(t)})) - \prod_{t=t^*}^{T-1} K_{k\mapsto k+\nu}^{(T)}((\mathbf{y}_{k+\nu}^{(t-1)},\mathbf{u}_{k+\nu \mapsto k}^{(t-1)}),(\mathbf{y}_{k+\nu}^{(t)},\mathbf{u}_{k+\nu \mapsto k}^{(t)}))}< \epsilon,
 \end{align*}
 and therefore,
\begin{align*}
 &\left| q_{k\mapsto k+\nu}(\mathbf{u}_{k \mapsto k+\nu}^{(0)})  \prod_{t=1}^{t^*-1} K_{k\mapsto k+\nu}^{(t)}((\mathbf{y}_{k+\nu}^{(t-1)},\mathbf{u}_{k+\nu \mapsto k}^{(t-1)}),(\mathbf{y}_{k+\nu}^{(t)},\mathbf{u}_{k+\nu \mapsto k}^{(t)})) \right. \cr
  &\hspace{40mm}\left. \times \prod_{t=t^*}^{T-1} K_{k\mapsto k+\nu}^{(t)}((\mathbf{y}_{k+\nu}^{(t-1)},\mathbf{u}_{k+\nu \mapsto k}^{(t-1)}),(\mathbf{y}_{k+\nu}^{(t)},\mathbf{u}_{k+\nu \mapsto k}^{(t)})) \right. \cr
 &\left. -q_{k\mapsto k+\nu}(\mathbf{u}_{k \mapsto k+\nu}^{(0)})  \prod_{t=1}^{t^*-1} K_{k\mapsto k+\nu}^{(t)}((\mathbf{y}_{k+\nu}^{(t-1)},\mathbf{u}_{k+\nu \mapsto k}^{(t-1)}),(\mathbf{y}_{k+\nu}^{(t)},\mathbf{u}_{k+\nu \mapsto k}^{(t)})) \right. \cr
  &\hspace{40mm}\left. \times \prod_{t=t^*}^{T-1} K_{k\mapsto k+\nu}^{(T)}((\mathbf{y}_{k+\nu}^{(t-1)},\mathbf{u}_{k+\nu \mapsto k}^{(t-1)}),(\mathbf{y}_{k+\nu}^{(t)},\mathbf{u}_{k+\nu \mapsto k}^{(t)}))\right| \cr
 & \qquad <\epsilon.
\end{align*}
We have that the integral of the two functions in the absolute value converges towards 0 as well as a result of Scheff\'{e}'s lemma (see \cite{scheffe1947useful}):
\footnotesize
 \begin{align}\label{eqn_conv_stationary}
  & \left| \Prob(\mathbf{Y}_{k+\nu}^{(T-1)}\in A \mid k,\mathbf{x}_{k}, \nu)  -  \int_{\mathbf{y}_{k+\nu}^{(T-1)}\in A} q_{k\mapsto k+\nu}(\mathbf{u}_{k \mapsto k+\nu}^{(0)}) \prod_{t=1}^{t^*-1} K_{k\mapsto k+\nu}^{(t)}((\mathbf{y}_{k+\nu}^{(t-1)},\mathbf{u}_{k+\nu \mapsto k}^{(t-1)}),(\mathbf{y}_{k+\nu}^{(t)},\mathbf{u}_{k+\nu \mapsto k}^{(t)})) \right.\cr
 & \left. \hspace{30mm}  \times \prod_{t=t^*}^{T-1} K_{k\mapsto k+\nu}^{(T)}((\mathbf{y}_{k+\nu}^{(t-1)},\mathbf{u}_{k+\nu \mapsto k}^{(t-1)}),(\mathbf{y}_{k+\nu}^{(t)},\mathbf{u}_{k+\nu \mapsto k}^{(t)})) \,d(\mathbf{u}_{k \mapsto k+\nu}^{(0)},\mathbf{y}_{k+\nu}^{(1:T-1)},\mathbf{u}_{k+\nu \mapsto k}^{(1:T-1)})\right|<\epsilon.
 \end{align}
 \normalsize
 We also have that
 \begin{align}\label{eqn_ergodicity}
  &\int_{\mathbf{y}_{k+\nu}^{(T-1)}\in A} \prod_{t=t^*}^{T-1} K_{k\mapsto k+\nu}^{(T)}((\mathbf{y}_{k+\nu}^{(t-1)},\mathbf{u}_{k+\nu \mapsto k}^{(t-1)}),(\mathbf{y}_{k+\nu}^{(t)},\mathbf{u}_{k+\nu \mapsto k}^{(t)})) \,d(\mathbf{y}_{k+\nu}^{(t^*:T-1)},\mathbf{u}_{k+\nu \mapsto k}^{(t^*:T-1)}) \cr
  &\quad\leq \left|\int_{\mathbf{y}_{k+\nu}^{(T-1)}\in A} \prod_{t=t^*}^{T-1} K_{k\mapsto k+\nu}^{(T)}((\mathbf{y}_{k+\nu}^{(t-1)},\mathbf{u}_{k+\nu \mapsto k}^{(t-1)}),(\mathbf{y}_{k+\nu}^{(t)},\mathbf{u}_{k+\nu \mapsto k}^{(t)})) \,d(\mathbf{y}_{k+\nu}^{(t^*:T-1)},\mathbf{u}_{k+\nu \mapsto k}^{(t^*:T-1)}) \right. \cr
  &\qquad \left. - \Prob_{\rho_{k\mapsto k+\nu}^{(T)}}(\mathbf{Y}_{k+\nu}^{(T-1)}\in A)\right| +\Prob_{\rho_{k\mapsto k+\nu}^{(T)}}(\mathbf{Y}_{k+\nu}^{(T-1)}\in A),
 \end{align}
 where $\Prob_{\rho_{k\mapsto k+\nu}^{(T)}}$ is the probability measure using the density $\rho_{k\mapsto k+\nu}^{(T)}$. We choose $t^*$ and $T$ such that the absolute value above is smaller than $\epsilon$ which does not depend on $(\mathbf{y}_{k+\nu}^{(t^*-1)},\mathbf{u}_{k+\nu \mapsto k}^{(t^*-1)})$. This is possible given that the time-homogeneous $\pi(\, \cdot\mid k+\nu)\otimes q_{k+\nu \mapsto k}$-reversible Markov chain associated with the proposal distribution $q_{\text{NRJ2}}^{k,\nu}$, $\{(\mathbf{Y}_{k+\nu}, \mathbf{U}_{k+\nu \mapsto k})(m): m \in\na\}$, is uniformly ergodic (by assumption). This yields the convergence of the (conditional) distribution of $\mathbf{Y}_{k+\nu}^{(T-1)}$ (given $(k,\mathbf{x}_{k}, \nu)$) towards $\pi(\, \cdot\mid k+\nu)$.

 It is proved that the second absolute value in \eqref{eqn_diff_kernels} converges towards 0 using the same arguments, which allows to establish the pointwise convergence $P_Th(k,\mathbf{x}_k, \nu)\longrightarrow P_{\text{ideal}}h(k,\mathbf{x}_k, \nu)$. The uniform convergence on each compact subset of the state-space follows from the uniform ergodicity of the MH kernels.
 \end{proof}

We now highlight what modifications and which additional technical conditions are required if geometric ergodicity is instead assumed. The absolute value on the RHS in \eqref{eqn_ergodicity} is in this case bounded above by $M(\mathbf{y}_{k+\nu}^{(t^*-1)},\mathbf{u}_{k+\nu \mapsto k}^{(t^*-1)}) \, \rho^{T-1-t^*}$, where $M(\mathbf{y}_{k+\nu}^{(t^*-1)},\mathbf{u}_{k+\nu \mapsto k}^{(t^*-1)})$ is finite for all $(\mathbf{y}_{k+\nu}^{(t^*-1)},\mathbf{u}_{k+\nu \mapsto k}^{(t^*-1)})$ and $\rho<1$. If the following integral is finite
\begin{align*}
 &\int q_{k\mapsto k+\nu}(\mathbf{u}_{k \mapsto k+\nu}^{(0)})  \prod_{t=1}^{t^*-1} K_{k\mapsto k+\nu}^{(t)}((\mathbf{y}_{k+\nu}^{(t-1)},\mathbf{u}_{k+\nu \mapsto k}^{(t-1)}),(\mathbf{y}_{k+\nu}^{(t)},\mathbf{u}_{k+\nu \mapsto k}^{(t)})) \, \cr
 &\hspace{50mm} \times M(\mathbf{y}_{k+\nu}^{(t^*-1)},\mathbf{u}_{k+\nu \mapsto k}^{(t^*-1)}) \,d(\mathbf{u}_{k \mapsto k+\nu}^{(0)},\mathbf{y}_{k+\nu}^{(1:t^*-1)},\mathbf{u}_{k+\nu \mapsto k}^{(1:t^*-1)}),
\end{align*}
then we know that we have the same conclusion as above, i.e.\ we can choose $t^*$ and $T$ such that the  absolute value on the RHS in \eqref{eqn_ergodicity} is smaller than $\epsilon$. That integral shall be finite when the process associated with the kernels $K_{k\mapsto k+\nu}^{(t)}$ do not reach states $(\mathbf{y}_{k+\nu}^{(t^*-1)},\mathbf{u}_{k+\nu \mapsto k}^{(t^*-1)})$ such that $M(\mathbf{y}_{k+\nu}^{(t^*-1)},\mathbf{u}_{k+\nu \mapsto k}^{(t^*-1)})$ is extremely large (or at least if it does, it is with small enough probability).

This condition thus suffices to show the pointwise convergence $P_Th(k,\mathbf{x}_k, \nu)\longrightarrow P_{\text{ideal}}h(k,\mathbf{x}_k, \nu)$. To establish the uniform convergence under geometric ergodicity, we use the same strategy as that applied to show \eqref{eqn_conv_stationary}. We can choose $t^*$ and $T$ such that the first $t^*$ steps (after having generated $\mathbf{u}_{k \mapsto k+\nu}^{(0)}$) with density $ \prod_{t=1}^{t^*} K_{k\mapsto k+\nu}^{(t)}((\mathbf{y}_{k+\nu}^{(t-1)},\mathbf{u}_{k+\nu \mapsto k}^{(t-1)}),(\mathbf{y}_{k+\nu}^{(t)},\mathbf{u}_{k+\nu \mapsto k}^{(t)}))$ are essentially MH steps with an invariant distribution given by $\rho_{k\mapsto k+\nu}^{(0)}:=\pi(\, \cdot\mid k)\times q_{k\mapsto k+\nu}\times |J_{T_{k\mapsto k+\nu}}|^{-1}$. This implies that
\begin{align*}
 &\left| \int \prod_{t=1}^{t^*} K_{k\mapsto k+\nu}^{(t)}((\mathbf{y}_{k+\nu}^{(t-1)},\mathbf{u}_{k+\nu \mapsto k}^{(t-1)}),(\mathbf{y}_{k+\nu}^{(t)},\mathbf{u}_{k+\nu \mapsto k}^{(t)})) \, d(\mathbf{y}_{k+\nu}^{(1:t^*)},\mathbf{u}_{k+\nu \mapsto k}^{(1:t^*)}) \right. \cr
 &\qquad \left.- \int \prod_{t=1}^{t^*} K_{k\mapsto k+\nu}^{(0)}((\mathbf{y}_{k+\nu}^{(t-1)},\mathbf{u}_{k+\nu \mapsto k}^{(t-1)}),(\mathbf{y}_{k+\nu}^{(t)},\mathbf{u}_{k+\nu \mapsto k}^{(t)})) \, d(\mathbf{y}_{k+\nu}^{(1:t^*)},\mathbf{u}_{k+\nu \mapsto k}^{(1:t^*)}) \right|<\epsilon,
\end{align*}
 which in turns implies that
 \begin{align*}
 &\left| \int \prod_{t=1}^{t^*} K_{k\mapsto k+\nu}^{(0)}((\mathbf{y}_{k+\nu}^{(t-1)},\mathbf{u}_{k+\nu \mapsto k}^{(t-1)}),(\mathbf{y}_{k+\nu}^{(t)},\mathbf{u}_{k+\nu \mapsto k}^{(t)})) \, d(\mathbf{y}_{k+\nu}^{(1:t^*)},\mathbf{u}_{k+\nu \mapsto k}^{(1:t^*)}) \right. \cr
 &\qquad \left.- \int \rho_{k\mapsto k+\nu}^{(0)}(\mathbf{y}_{k+\nu}^{(t^*)},\mathbf{u}_{k+\nu \mapsto k}^{(t^*)}) \, d(\mathbf{y}_{k+\nu}^{(t^*)},\mathbf{u}_{k+\nu \mapsto k}^{(t^*)}) \right|<M_1(\mathbf{y}_{k+\nu}^{(0)},\mathbf{u}_{k+\nu \mapsto k}^{(0)}) \, \rho_1^{t^*},
\end{align*}
where $M(\mathbf{y}_{k+\nu}^{(0)},\mathbf{u}_{k+\nu \mapsto k}^{(0)})$ is finite for all $(\mathbf{y}_{k+\nu}^{(0)},\mathbf{u}_{k+\nu \mapsto k}^{(0)})$ and $\rho_1<1$. The uniform convergence $P_Th\longrightarrow P_{\text{ideal}}h$ on each compact subset of the state-space as $T\longrightarrow\infty$ follows if
\[
 \int q_{k\mapsto k+\nu}(\mathbf{u}_{k \mapsto k+\nu}^{(0)}) M_1(\mathbf{y}_{k+\nu}^{(0)},\mathbf{u}_{k+\nu \mapsto k}^{(0)}) \, d\mathbf{u}_{k \mapsto k+\nu}^{(0)}
\]
is finite and continuous in $\mathbf{x}_{k}$ (recall that $\mathbf{x}_{k}$ and $\mathbf{u}_{k \mapsto k+\nu}^{(0)}$ are mapped to $(\mathbf{y}_{k+\nu}^{(0)},\mathbf{u}_{k+\nu \mapsto k}^{(0)})$ using $T_{k\mapsto k+\nu}$).

\begin{proof}[Proof of Corollary 1]
 The proof is an application of Theorem 3.17 in \cite{andrieu2019peskun} which will allow to establish that
 \[
  \text{var}_\lambda(f, P_{\text{NRJ}}) \leq \text{var}_\lambda(f, P_{\text{RJ}}^{\text{unif}}),
 \]
 where $\text{var}_\lambda(f, P_{\text{NRJ}}) := \E[\{f(K(0), \mathbf{X}_K(0))\}^2] + 2\sum_{m>0} \lambda^m \E[f(K(0), \mathbf{X}_K(0))f(K(m), \mathbf{X}_K(m))]$ of $\{(K(m), \mathbf{X}_K(m)): m\in\na\}$ being a Markov chain of transition kernel $P$ at equilibrium and $\lambda \in[0, 1)$. The limit of the RHS $\lim_{\lambda \longrightarrow 1} \text{var}_\lambda(f, P_{\text{RJ}}^{\text{unif}})$ exists and is equal to $\text{var}(f, P_{\text{RJ}}^{\text{unif}})$ because the Markov chain is reversible (see \cite{andrieu2019peskun}). We will be able to conclude that the limit of the LHS exists as well using \autoref{lemma_asymp_var} that is presented after this proof.

In order to apply Theorem 3.17, we must verify that
\begin{align*}
 &P_{\text{RJ}}^{\text{unif}}((k, \mathbf{x}_k), (k', \mathbf{y}_{k'})) = \frac{1}{2} T_{+1}((k, \mathbf{x}_k), (k', \mathbf{y}_{k'})) + \frac{1}{2} T_{-1}((k, \mathbf{x}_k), (k', \mathbf{y}_{k'})) \cr
 &\qquad + \delta_{(k, \mathbf{x}_k)}(k', \mathbf{y}_{k'})\left(1 - \frac{1}{2} T_{+1}((k, \mathbf{x}_k), (k', \re^{d_{k'}})) - \frac{1}{2} T_{-1}((k, \mathbf{x}_k), (k', \re^{d_{k'}}))\right),
\end{align*}
where $T_{-1}$ and $T_{+1}$ are two sub-stochastic kernels associated with accepted proposals when the current and next values for the direction variable are $\nu = -1$ and $\nu = +1$, respectively.

We set
\[
 T_{\nu}((k, \mathbf{x}_k), (k', \mathbf{y}_{k'})) := P_{S, \nu}((k,\mathbf{x}_k), (k+\nu, \mathbf{y}_{k+\nu})) \Prob(S) + \delta_{k}(k')\, P_{S^c, \nu}(\mathbf{x}_{k'}, \mathbf{y}_{k'}) \Prob(S^c),
\]
where $S$ is used to denote that a model switch is proposed, $P_{S, \nu}$ is the conditional transition kernel given $S$ and $\nu$ and $P_{S^c, \nu}$ is the conditional transition kernel given $S^c$ and $\nu$. Note that $P_{S^c, \nu}$ is in fact independent of $\nu$ (the parameters are updated in the same way whether $\nu = -1$ or $\nu = +1$), therefore we simplify the notation by denoting this kernel by $P_{S^c} := P_{S^c, \nu}$.

We thus have that
\begin{align*}
 &\frac{1}{2} T_{+1}((k, \mathbf{x}_k), (k', \mathbf{y}_{k'})) + \frac{1}{2} T_{-1}((k, \mathbf{x}_k), (k', \mathbf{y}_{k'})) \cr
 &\quad = \frac{1}{2} (P_{S, +1}((k,\mathbf{x}_k), (k+1, \mathbf{y}_{k+1})) \Prob(S) + \delta_{k}(k')\, P_{S^c}(\mathbf{x}_{k'}, \mathbf{y}_{k'}) \Prob(S^c)) \cr
 &\qquad + \frac{1}{2} (P_{S, -1}((k,\mathbf{x}_k), (k-1, \mathbf{y}_{k-1})) \Prob(S) + \delta_{k}(k')\, P_{S^c}(\mathbf{x}_{k'}, \mathbf{y}_{k'}) \Prob(S^c)) \cr
 &\quad = \Prob(S)\left(\frac{1}{2}P_{S, +1}((k,\mathbf{x}_k), (k+1, \mathbf{y}_{k+1})) + \frac{1}{2}P_{S, -1}((k,\mathbf{x}_k), (k-1, \mathbf{y}_{k-1})) \right) \cr
 &\qquad + \Prob(S^c) P_{S^c}(\mathbf{x}_{k'}, \mathbf{y}_{k'}),
\end{align*}
which corresponds as explained in the proof of Proposition 1 to the sub-stochastic kernel associated with accepted proposals for standard RJ. This concludes the proof.
\end{proof}

\begin{Lemma}\label{lemma_asymp_var}
 Assume that $P_{\text{NRJ}}$ is uniformly ergodic. Then, for any real-valued bounded function $f$ of $(k, \mathbf{x}_k)$,
\begin{align*}
\lim_{\lambda\to 1}\sum_{m>0} \lambda^m \E[f(K(0), \mathbf{X}_K(0))f(K(m), \mathbf{X}_K(m))]  = \sum_{m>0} \E[f(K(0), \mathbf{X}_K(0))f(K(m), \mathbf{X}_K(m))],
\end{align*}
where $\{(K(m), \mathbf{X}_K(m)): m\in\na\}$ is a Markov chain of transition kernel $P_{\text{NRJ}}$ at equilibrium.
\end{Lemma}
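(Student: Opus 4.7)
The plan is to show that uniform ergodicity forces the autocovariance sequence to decay geometrically in $m$, which gives an absolutely summable dominating sequence and lets us apply dominated convergence to move the limit $\lambda \to 1$ inside the sum.

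First, I would recall that by uniform ergodicity of $P_{\text{NRJ}}$, there exist constants $C > 0$ and $\rho \in (0,1)$ such that
\[
 \sup_{(k,\mathbf{x}_k,\nu)} \left\|P_{\text{NRJ}}^m((k,\mathbf{x}_k,\nu),\cdot) - \pi\otimes \mathcal{U}\{-1,1\}\right\|_{\mathrm{TV}} \leq C\rho^m
\]
for all $m\in\na$. Since $f$ is bounded and centred (i.e.\ $\E f(K(0),\mathbf{X}_K(0))=0$ under the target, as assumed in the corollary statement), the Markov property yields
\[
 \big|\E[f(K(m),\mathbf{X}_K(m)) \mid (K,\mathbf{X}_K,\nu)(0)=(k,\mathbf{x}_k,\nu)]\big| \leq 2\|f\|_\infty\, C\rho^m,
\]
since the left-hand side equals $\int f\, d P_{\text{NRJ}}^m((k,\mathbf{x}_k,\nu),\cdot) - \int f\, d(\pi\otimes\mathcal{U}\{-1,1\})$ and $f$ is bounded by $\|f\|_\infty$.

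Second, conditioning on the initial state and applying the tower property gives
\[
 \big|\E[f(K(0),\mathbf{X}_K(0))f(K(m),\mathbf{X}_K(m))]\big| \leq \|f\|_\infty \cdot 2\|f\|_\infty\, C\rho^m = 2 C \|f\|_\infty^2\, \rho^m.
\]
Hence the series $\sum_{m>0}\E[f(K(0),\mathbf{X}_K(0))f(K(m),\mathbf{X}_K(m))]$ converges absolutely, and for every $\lambda\in[0,1)$ the summand $\lambda^m \E[f(K(0),\mathbf{X}_K(0))f(K(m),\mathbf{X}_K(m))]$ is dominated in absolute value by $2C\|f\|_\infty^2\,\rho^m$, which is summable independent of $\lambda$.

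Finally, I would invoke dominated convergence for series (or equivalently Abel's theorem applied to the absolutely convergent series): as $\lambda \to 1$, $\lambda^m \to 1$ pointwise in $m$, and the dominating sequence $\{2C\|f\|_\infty^2 \rho^m\}$ is summable, so
\[
 \lim_{\lambda\to 1}\sum_{m>0}\lambda^m \E[f(K(0),\mathbf{X}_K(0))f(K(m),\mathbf{X}_K(m))] = \sum_{m>0}\E[f(K(0),\mathbf{X}_K(0))f(K(m),\mathbf{X}_K(m))],
\]
which is the required identity. There is no real obstacle here once the geometric bound on autocovariances is established; the only mild subtlety is that the centering of $f$ is essential (without it, one only gets boundedness rather than geometric decay of the autocovariance). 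Under geometric rather than uniform ergodicity, the same argument works provided a suitable integrability condition ensures that the bound on the conditional expectation, which would then depend on the starting state through a drift function $V$, integrates against the stationary distribution; this is the additional technical condition alluded to in the corollary statement.
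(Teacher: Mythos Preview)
Your proof is correct and follows essentially the same route as the paper: both establish the geometric bound $|\E[f(K(0),\mathbf{X}_K(0))f(K(m),\mathbf{X}_K(m))]| \leq M\rho^m$ from uniform ergodicity and the centring of $f$, then use absolute summability to interchange limit and sum. The paper phrases the final step as uniform convergence of the partial sums $S_n(\lambda)$ on $[0,1)$ rather than dominated convergence for series, but this is the same argument in slightly different packaging, and your remark on the geometric-ergodicity extension matches the paper's as well.
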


\begin{proof}
    To simplify the notation, define $ \langle f, P_{\text{NRJ}}^m f \rangle := \E[f(K(0), \mathbf{X}_K(0))f(K(m), \mathbf{X}_K(m))]$. Define the sequence of functions $S_n:\lambda\mapsto \sum_{0<k\leq n}\lambda^k \langle f, P_{\text{NRJ}}^k f \rangle$ defined for $\lambda\in[0,1)$ and its limit $S(\lambda) = \sum_{0<k}\lambda^k \langle f, P_{\text{NRJ}}^k f \rangle$. We now show that the partial sum $S_n$ converges uniformly to $S$ on $[0,1)$, and given that for each $n\in \na$, the function $\lambda \to \lambda^n \langle f, P_{\text{NRJ}}^n f \rangle$ admits a limit when $\lambda\to 1$,  we have that $S$ admits a limit when $\lambda\to 1$, given by
 $$
 \lim_{\lambda \to 1}S(\lambda)=S(1)=\sum_{k>0} \langle f, P_{\text{NRJ}}^k f \rangle
 $$
 First, note that
 \begin{align*}
   \sup_{\lambda\in[0,1)}\left|S_n(\lambda)-S(\lambda)\right|=   \sup_{\lambda\in[0,1)}\left|\sum_{k>n}\lambda^k \langle f, P_{\text{NRJ}}^k f \rangle \right| &\leq
   \sup_{\lambda\in[0,1)} \sum_{k>n} \lambda^k \left|\langle f, P_{\text{NRJ}}^k f \rangle\right| \cr
   &= \sum_{k>n}\left|\langle f, P_{\text{NRJ}}^k f \rangle\right|.
 \end{align*}
Thus, to prove that $\sup_{\lambda\in[0,1)}\left|S_n(\lambda)-S(\lambda)\right|\longrightarrow 0$, it is sufficient to prove that the series $\sum_{k>0}\left|\langle f, P_{\text{NRJ}}^k f \rangle\right|$ converges.

Given that $f$ is bounded we can consider without loss of generality that its expectation is 0 and that it takes values between $-1$ and $+1$ (we can re-normalise it). Because $P_{\text{NRJ}}$ is assumed to be uniformly ergodic, there exists constants $\rho\in(0,1)$ and $M\in(0,\infty)$ such that for any $m \in \na$,
\begin{equation}\label{eq_unif_erg}
\sup_{(k, \mathbf{x}_k, \nu)} \|\delta_{k, \mathbf{x}_k, \nu} P_{\text{NRJ}}^m - \pi \otimes \mathcal{U}\{-1, +1\}\|_{\text{tv}} \leq M\rho^m,
\end{equation}
where for any signed measure $\mu$, $\|\mu\|_{\text{tv}}$ denotes its total variation. Note that $\|\mu\|_{\text{tv}} = (1/2) \sup_{f : \mathcal{X} \to [-1, +1]} |\mu f|$ (see for instance \cite{roberts2004general}, Proposition 3).

We have that
\begin{align*}
  |\langle f, P_{\text{NRJ}}^k f \rangle|= |\E f(K, \mathbf{X}_k, \nu) P_{\text{NRJ}}^k f(K, \mathbf{X}_k, \nu)| &\leq \E |f(K, \mathbf{X}_k, \nu)||P_{\text{NRJ}}^k f(K, \mathbf{X}_k, \nu)| \cr
  &\leq \E|P_{\text{NRJ}}^k f(K, \mathbf{X}_k, \nu)| \cr
  &= \E|P_{\text{NRJ}}^k f(K, \mathbf{X}_k, \nu) - \pi f| \cr
  &\leq \E \sup_f \left| P_{\text{NRJ}}^k f(K, \mathbf{X}_k, \nu) - \pi f \right| \cr
  &\leq M\rho^k,
\end{align*}
which is clearly summable. As a consequence, $S_n$ converges uniformly to $S$ on $[0,1)$ which concludes the proof.
\end{proof}

We now highlight what modifications and which additional technical conditions are required if geometric ergodicity is instead assumed. The constant $M$ in \eqref{eq_unif_erg} would depend on $(K, \mathbf{X}_k, \nu)$. Therefore, if $\E M(K, \mathbf{X}_k, \nu)$ is finite the result is also valid.

%
%\subsection{Other justifications for the conjecture of \autoref{sec_worst_logconcave}}\label{sec_other_justifications}
%
%If we look at the probability to reach $k'=2$ from $k=1$, it is given by $\tilde{g}(1,2)(1\wedge (\pi(2) \, \tilde{g}(2,1)) / (\pi(1) \, \tilde{g}(1,2)))$, considering an arbitrary proposal distribution $\tilde{g}$. One can show that $g^*(1,2)=1$ maximises this probability for any $\tilde{g}(2,1)$. By symmetry, $g^*(\text{K}_{\max}, \text{K}_{\max} - 1) = 1$ maximises the probability to reach $k'=\text{K}_{\max}-1$ from $k=\text{K}_{\max}$. Consequently, these transitions are optimal in the sense of \cite{peskun1973optimum} (the function $\tilde{g}$ that maximises the off-diagonal elements of the transition matrix is optimal). It makes intuitively a lot of sense to have a 0 probability to propose to go to states 0 or $\text{K}_{\max}+1$ which have 0 mass.
%
%When the distribution is \eqref{eqn_distributions_phi}, $\text{K}_{\max}$ is odd and the current state is the mode, i.e.\ $k=k^*=(\text{K}_{\max} + 1)/2$, $g^*(k, k + 1) = g^*(k, k - 1) = 1/2$, which again seems optimal when the PMF is symmetric. Additionally, for $3\leq k \leq k^* - 2$, $g^*(k, k + 1)=\phi / (\phi + 1)= 1 - g^*(k, k - 1)$ and the acceptance probabilities for these moves are exactly 1. The MH correction is thus unnecessary. By symmetry, it is the same when $k^* + 2\leq k \leq \text{K}_{\max} - 2$. This minimises the diagonal elements, which in a sense ``maximises'' the off-diagonal ones.

\subsection{Weak convergence results for the ideal samplers}\label{sec_theoretical_ana}

 We analyse the asymptotic scenario in which the number of models grows to infinity. It will be noticed that the reversible and non-reversible Markov chains produced respectively by ideal RJ and NRJ have two distinct asymptotic behaviours which are consistent with what is observed for fixed numbers of models (see, e.g., \autoref{fig_traces}), explaining their different state-space exploration speed.

We prove convergence towards continuous-time stochastic processes that take values on the real line. We thus need to consider functions of $K$ to achieve that. Firstly, we consider that the model indicator $K$ takes values in $\mathcal{K}^n:=\{1,\ldots,\lfloor \sqrt{n} \log n \rfloor\}$, where $\lfloor \cdot \rfloor$ is the floor function. We added the superscript $n$ to highlight the dependence on this variable. We select $\mathcal{K}^n$ in this way to obtain a random variable $S_K^n:=(K^n - \psi(n)) / \sqrt{n}$ that is (in the limit) continuous in addition to taking values on the real line, for a given function $\psi$ (which can be thought of as the mean that can be for instance $\lfloor \sqrt{n} \log n \rfloor/2$). Imagine that the mode is around $\lfloor \sqrt{n} \log n \rfloor/2$ (so the mass is moving towards infinity), this transformation puts the mass around 0 and makes the different values of the centred variable ($-1, 0, 1$ and so on) close to each other (e.g.\ $|1-0|/\sqrt{n}\longrightarrow 0$). We squeeze the state-space as in the proof of existence of Brownian motion from random walks. We assume that $\pi^n(k)>0$ for all $k\in \mathcal{K}^n$. For $t \geq 0$, we define the following rescaled stochastic process:
\[
 Z_{\text{RJ}}^n(t):=\frac{K_{\text{RJ}}^n(\lfloor nt \rfloor) - \psi(n)} {\sqrt{n}},
\]
where $\{K_{\text{RJ}}^n(m): m\in\na\}$ is a Markov chain produced by the ideal RJ corresponding to the ideal NRJ described in Section 2.2 in our paper. We consider that this RJ updates parameters and switches models with probabilities $\tau$ and $1-\tau$, respectively, and that $g(k,k+1)=g(k,k-1)=0.5(1-\tau)$, so it proposes to increase or decrease the model indicator with the same probability and $\alpha_{\text{RJ}}=\alpha_{\text{NRJ}}$. The continuous-time stochastic process $\{Z_{\text{RJ}}^n(t): t\geq 0\}$ is a sped up and modified version of $\{K_{\text{RJ}}^n(m): m\in\na\}$. The decreasing size of the jumps of $\{Z_{\text{RJ}}^n(t): t\geq 0\}$ as $n$ increases (the size is $1/\sqrt{n}$), combined with its time acceleration, result in a continuous and non-trivial limiting process, as specified in \autoref{thm_conv_RJ}. This time acceleration can be thought of as squeezing the time axis to make the iterations close to each other, again as in the proof of existence of Brownian motion from random walks.

\setcounter{Theorem}{1}

\begin{Theorem}[Weak convergence of RJ]\label{thm_conv_RJ}
 Assume that:
 \begin{description}
  \itemsep 0mm

  \item[(a)] the function $\psi$ can be chosen such that $S_K^n$ is asymptotically distributed as $f_{S}\in \mathcal{C}^1(\re)$, a strictly positive probability density function (PDF), where $\mathcal{C}^1(\re)$ denotes the space of real-valued functions on $\re$ with continuous first derivative;

  \item[(b)] the function $(\log f_{S}(\cdot))'$ is Lipschitz continuous;

  \item[(c)] $\psi$ can be chosen such that
 \begin{align}\label{eqn_cond_RJ}
  \frac{1}{\pi^n(k)} \, \frac{\pi^n(k + 1) - \pi^n(k)}{1/\sqrt{n}} - (\log f_{S}(S_k^n))'
 \end{align}
 is bounded for all $n$ and converges towards 0 as $n\longrightarrow\infty$, for all $k$;

 \item[(d)] $\lim_{n\longrightarrow\infty} \sqrt{n} \,\pi^n(1)= \lim_{n\longrightarrow\infty} \sqrt{n} \, \pi^n(\lfloor \sqrt{n} \log n \rfloor) = 0$.

\end{description}

     If $K_{\text{RJ}}^n(0)\sim \pi^n$, then $\{Z_{\text{RJ}}^n(t): t\geq 0\}$ converges weakly towards a Langevin diffusion as $n\longrightarrow \infty$, i.e.\
\begin{align*}\label{eqn_stoc_RJ}
 \{Z_{\text{RJ}}^n(t): t\geq 0\} \Longrightarrow \{Z_{\text{RJ}}(t): t\geq 0\} \quad \text{as} \quad  n\longrightarrow \infty,
\end{align*}
where the process $\{Z_{\text{RJ}}(t): t\geq 0\}$ is such that $Z_{\text{RJ}}(0)\sim f_{S}$ and
\[
 dZ_{\text{RJ}}(t)= \frac{1-\tau}{2} \, (\log f_{S}(Z_{\text{RJ}}(t)))' \, dt + \sqrt{1-\tau} 	\, dB(t),
\]
with $\{B(t): t\geq 0\}$ being a Wiener process.
\end{Theorem}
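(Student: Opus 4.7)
The plan is to invoke the standard machinery for weak convergence of Markov chains to diffusions via convergence of infinitesimal generators (as in Chapter 4 of Ethier and Kurtz). The generator of the target Langevin diffusion is
\begin{align*}
 \mathcal{L} h(z) := \frac{1-\tau}{2}(\log f_S(z))'\, h'(z) + \frac{1-\tau}{2}\, h''(z),
\end{align*}
acting on, say, $C_c^\infty(\re)$, which is a core. I would show that the discrete generator $\mathcal{A}_n$ of the rescaled chain $\{Z_{\text{RJ}}^n(t):t\geq 0\}$ converges to $\mathcal{L}$ uniformly on compact subsets of $\re$ for $h\in C_c^\infty(\re)$. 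Since $Z_{\text{RJ}}^n(0)$ has distribution equal to that of $S_K^n$ (because $K_{\text{RJ}}^n(0)\sim \pi^n$), condition~(a) provides convergence of initial distributions, and Theorem~4.8.2 of Ethier--Kurtz then yields the claimed weak convergence on the Skorokhod space $D([0,\infty),\re)$.

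To compute $\mathcal{A}_n$, I would write $z_k:=(k-\psi(n))/\sqrt{n}$ and use that only the $1-\tau$ fraction of iterations corresponding to model switches can change $K$. Writing $r_+^n(k):=\pi^n(k+1)/\pi^n(k)$ and $r_-^n(k):=\pi^n(k-1)/\pi^n(k)$, and applying the time acceleration by a factor $n$,
\begin{align*}
 \mathcal{A}_n h(z_k) = \frac{(1-\tau)n}{2}\bigl[(1\wedge r_+^n(k))\Delta_+ h(z_k) + (1\wedge r_-^n(k))\Delta_- h(z_k)\bigr],
\end{align*}
where $\Delta_\pm h(z_k) := h(z_k\pm 1/\sqrt{n}) - h(z_k)$. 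A second-order Taylor expansion of $h$ together with assumption~(c) rewritten as $\sqrt{n}(r_+^n(k)-1) = (\log f_S(z_k))' + o(1)$, and the analogous expansion $\sqrt{n}(r_-^n(k)-1) = -(\log f_S(z_{k-1}))' + o(1)$ obtained from the same condition applied at $k-1$ (using the Lipschitz continuity of $(\log f_S)'$ from assumption~(b) to replace $z_{k-1}$ by $z_k$ at cost $O(1/\sqrt{n})$), gives, after separately handling the cases $(\log f_S(z_k))'\gtrless 0$ in the $1\wedge\cdot$ operations, the drift contribution $\tfrac{1-\tau}{2}(\log f_S(z_k))'h'(z_k)$ and the diffusion contribution $\tfrac{1-\tau}{2}h''(z_k)$. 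The boundedness in~(c) is needed to control the Taylor remainder uniformly, and it implies $\sup_k|\mathcal{A}_n h(z_k)|$ is bounded so the martingale problem is well posed.

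The remaining issue is the finite boundary of $\mathcal{K}^n$: at $k=1$ and $k=\lfloor\sqrt{n}\log n\rfloor$, only one of $r_\pm^n(k)$ is defined. I would use condition~(d) together with invariance to show the chain at stationarity spends negligible time within $O(1/\sqrt{n})$ of the boundary, so these boundary states do not affect the limit. Concretely, since $Z_{\text{RJ}}^n(0)\sim S_K^n$ has, for any compact $C\subset\re$, the property that $S_K^n\in C$ eventually with high probability (by~(a)), and invariance propagates this control in time, one can modify $\mathcal{A}_n$ away from the boundary without changing the limit. Alternatively, one cuts off $h\in C_c^\infty(\re)$ whose support lies in a fixed compact $C$; then for $n$ large $\mathcal{A}_n h(z_k)$ involves no boundary terms because $C$ lies strictly inside the rescaled grid.

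The main obstacle, in my view, will be the careful treatment of the one-sided Metropolis correction $1\wedge r$: at points where $(\log f_S)'$ changes sign, the absolute-value type structure of $r_+-r_-$ and $r_++r_-$ must be handled cleanly so that in the drift one gets exactly $(\log f_S)'$ (not $|(\log f_S)'|$) and in the diffusion coefficient one gets $1$ (not $1-O(|\cdot|/\sqrt{n})$); this is where the cancellation between the ``accepted'' and ``rejected'' sides of $\pm$ moves, combined with the $o(1)$ in~(c) being uniform in $k$ for $z_k$ in compacts, matters. The rest (tightness via boundedness of $\mathcal{A}_n h$ for $h\in C_c^\infty$, identification of the martingale problem, uniqueness for $\mathcal{L}$ under assumption~(b)) is standard.
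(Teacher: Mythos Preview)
Your proposal is correct and follows essentially the same route as the paper: both invoke the Ethier--Kurtz generator-convergence machinery (Chapter~4, Theorem~8.2 and Corollary~8.6), compute the discrete generator via a second-order Taylor expansion of $h\in C_c^\infty(\re)$, and resolve the $1\wedge r_\pm$ structure by a case analysis on whether the ratios exceed~1 (equivalently, on the sign of $(\log f_S)'$), with the boundary handled through assumption~(d). The only cosmetic difference is that the paper works with $L^1$ convergence of the pseudo-generator under stationarity rather than uniform convergence on compacts, and its Lemma~3 spells out four cases (including local minima/maxima where both ratios lie on the same side of~1)---exactly the ``main obstacle'' you correctly flag.
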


\begin{proof}
 It is a straightforward adaptation of Theorem 1 in \cite{GAGNON201932}. For sake of completeness, it is detailed in \autoref{sec_proofs_supp}.
\end{proof}

The notation ``$\Longrightarrow$'' represents here weak convergence of processes in the Skorokhod topology (see Section 3 of \cite{ethier1986markov} for more details about this type of convergence).

The two main assumptions are (a) and (c). The former requires to find a transformation of $K^n$ such that the limit in distribution of the transformed random variable is a continuous random variable with density $f_S$. The latter requires that the ``discrete version'' of the derivative of $\log \pi^n$ share the same asymptotic behaviour as the derivative of $\log f_S$. Indeed, in \cite{GAGNON201932}, it is explained that the left term in \eqref{eqn_cond_RJ} can be seen as the discrete version of the derivative of $\log \pi^n$ because $\pi^n$ is also the PMF of $S_k^n$ (evaluated at a different point) and $S_{k+1}^n - S_k^n=1/\sqrt{n}$. Assumption (b) is standard in the weak convergence literature; it ensures the existence of a unique strong solution to the stochastic differential equation given above. Assumption (d) is a regularity condition. In \cite{GAGNON201932}, to illustrate how a PMF that satisfies the conditions looks like, the authors show one that is such that $S_K^n$ converges in distribution towards a standard normal.

We now analyse the behaviour of the stochastic process produced by the ideal NRJ algorithm. We consider as before that $\mathcal{K}^n=\{1,\ldots,\lfloor \sqrt{n} \log n \rfloor\}$ and $\pi^n(k)>0$ for all $k\in \mathcal{K}^n$. For $t \geq 0$, we define the following rescaled stochastic process:
\begin{align}\label{eqn_stoc_NRJ}
 \mathbf{Z}_{\text{NRJ}}^n(t):=\left(\frac{K_{\text{NRJ}}^n(\lfloor \sqrt{n}t \rfloor) - \psi(n)}{\sqrt{n}}, \nu(\lfloor  \sqrt{n}t \rfloor)\right),
\end{align}
where $\{(K_{\text{NRJ}}^n, \nu)(m): m\in\na\}$ is a Markov chain produced by ideal NRJ described in Section 2.2 in our paper. Note that the distribution of $\nu$ does not change with $n$.

\begin{Theorem}[Weak convergence of NRJ]\label{thm_conv_NRJ}
 Assume that the same conditions (a)-(d) as in \autoref{thm_conv_RJ} are satisfied. Assume additionally that there exist two positive constants $c$ and $x_0$ such that $|(\log f_{S}(x))'|\geq c$ for all $|x|\geq x_0$. If $(K_{\text{NRJ}}^n, \nu)(0)\sim \pi^n\otimes \mathcal{U}\{-1, 1\}$, then $\{\mathbf{Z}_{\text{NRJ}}^n(t): t\geq 0\}$ converges weakly towards a piecewise deterministic Markov process (PDMP) as $n\longrightarrow\infty$,  i.e.\
\[
 \{\mathbf{Z}_{\text{NRJ}}^n(t): t\geq 0\} \Longrightarrow \{\mathbf{Z}_{\text{NRJ}}(t): t\geq 0\} \quad \text{as} \quad  n\longrightarrow \infty,
\]
where the process $\{\mathbf{Z}_{\text{NRJ}}(t): t\geq 0\}$ is such that $\mathbf{Z}_{\text{NRJ}}(0)\sim f_{S} \otimes \mathcal{U}\{-1, 1\}$ with generator
\[
 Gh(x,y):=(1-\tau) y h_x(x, y)+\max\{0, -y\, (\log f_{S}(x))'\} (1-\tau) (h(x, -y) - h(x, y)),
\]
where $h(\, \cdot \,, y)\in \mathcal{C}^1(\re)$ and such that itself and $h_x(\, \cdot \,, y)$ vanish at infinity, for $y\in\{-1,1\}$, $h_x$ denoting the first derivative of $h$ with respect to its first argument.
\end{Theorem}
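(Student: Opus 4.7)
The plan is to establish weak convergence via convergence of generators, applying a standard result in the spirit of Corollary 4.8.7 in Ethier--Kurtz. Because the limit is a piecewise-deterministic Markov process (PDMP) rather than a diffusion, the time acceleration factor is $\sqrt{n}$ rather than $n$ (unlike Theorem~\ref{thm_conv_RJ}): one expects a non-trivial limit precisely when a single discrete step contributes $O(1/\sqrt{n})$ to the state (the spatial increment $\pm 1/\sqrt{n}$) and $O(1/\sqrt{n})$ to the direction-flip probability. Three things must then be checked: (i)~convergence of the initial distributions, which is immediate from Assumption~(a) together with $\nu(0)\sim\mathcal{U}\{-1,1\}$; (ii)~convergence of the one-step generators $A_n h\to Gh$ on a core of $G$; and (iii)~well-posedness of the martingale problem for $G$.

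The core computation concerns (ii). For a test function $h(x,y)$ with $h(\cdot,y)\in\mathcal{C}^1_c(\re)$ and writing $x=(k-\psi(n))/\sqrt{n}$, the one-step generator is
\[
  A_n h(x,y)=\sqrt{n}\,(1-\tau)\Big\{\alpha_n(x,y)\big[h(x+y/\sqrt{n},y)-h(x,y)\big]+(1-\alpha_n(x,y))\big[h(x,-y)-h(x,y)\big]\Big\},
\]
with $\alpha_n(x,y)=1\wedge \pi^n(k+y)/\pi^n(k)$ and where parameter-update steps contribute zero to $A_n h$. Assumption~(c) yields the expansion $\pi^n(k+y)/\pi^n(k)=1+y\,(\log f_S(x))'/\sqrt{n}+o(1/\sqrt{n})$, so $\alpha_n(x,y)\to 1$ and
\[
  \sqrt{n}\,(1-\alpha_n(x,y))\longrightarrow \max\{0,-y(\log f_S(x))'\}.
\]
A Taylor expansion of $h$ in its first argument gives $\sqrt{n}\,[h(x+y/\sqrt{n},y)-h(x,y)]\to y\,h_x(x,y)$. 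Combining these two limits yields exactly $Gh(x,y)$. The bound in Assumption~(c) together with compact support of $h$ upgrades this pointwise convergence to uniform convergence on compacta, which is what the weak-convergence theorem requires.

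For (iii), the operator $G$ is the generator of the standard one-dimensional zig-zag process with switching intensity $\lambda(x,y)=(1-\tau)\max\{0,-y(\log f_S(x))'\}$. The additional hypothesis $|(\log f_S(x))'|\geq c$ for $|x|\geq x_0$ gives a uniform lower bound on $\lambda$ in the ``outgoing'' regime, ensuring non-explosion and invariance of $f_S\otimes\mathcal{U}\{-1,1\}$; standard PDMP theory then gives a well-defined Feller semigroup for which $\mathcal{C}^1_c(\re)\otimes\mathcal{F}(\{-1,1\})$ is a core. Combined with (i) and (ii), and noting that $\{\mathbf{Z}^n_{\text{NRJ}}(t):t\geq 0\}$ is automatically tight as a sequence of piecewise-constant càdlàg processes with uniformly bounded drift rate on compacta, weak convergence in the Skorokhod topology follows.

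The main technical obstacle is handling the $\min$ in $\alpha_n$ across the transition region where $y(\log f_S(x))'$ changes sign: the limit $\max\{0,-y(\log f_S(x))'\}$ is only Lipschitz, so the uniform convergence of $A_n h$ must be argued without differentiating the acceptance probability. This is circumvented by splitting the state-space into the two regions $\{y(\log f_S(x))'\geq 0\}$ and its complement, treating each via Assumption~(c), and using continuity of $(\log f_S)'$ (Assumption~(b)) to handle the interface. The remaining verifications (that $\mathcal{C}^1_c\otimes\mathcal{F}(\{-1,1\})$ is a core for $G$, and that the confinement condition suffices for non-explosion) follow standard arguments familiar from the PDMP sampler literature.
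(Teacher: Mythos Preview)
Your proposal is correct and follows essentially the same route as the paper: both compute the one-step generator with the $\sqrt{n}$ time scaling, use Assumption~(c) to expand the acceptance probability, Taylor-expand the drift term, handle the $\min$ in $\alpha_n$ by a case split on the sign of $y(\log f_S(x))'$, and appeal to the zig-zag/PDMP literature (Bierkens et al.) for non-explosion and the Feller property via the confinement hypothesis $|(\log f_S(x))'|\geq c$ outside a bounded set.

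The only noteworthy difference is in the mode of convergence you invoke. You argue for uniform convergence of $A_n h\to Gh$ on compacta using $\mathcal{C}^1_c$ test functions, whereas the paper verifies Condition~(c) of Theorem~8.2 in Chapter~4 of Ethier--Kurtz, namely $\E\big[|\varrho_{\text{NRJ}}^n(t)-Gh(\mathbf{Z}_{\text{NRJ}}^n(t))|\big]\to 0$ at stationarity, obtaining pointwise convergence first and then closing with the dominated convergence theorem (the boundedness hypothesised in Assumption~(c) supplies the dominating function). The paper's four-case split (both signs of $\nu$ crossed with both regimes $\pi^n(k+\nu)/\pi^n(k)\gtrless 1$) is the same in spirit as your two-region split, just written out more explicitly. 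Either formulation plugs into the Ethier--Kurtz machinery and yields the result; your compact-support restriction makes the uniform-on-compacta argument cleaner, while the paper's $L^1$-at-stationarity route avoids having to argue that $\mathcal{C}^1_c\otimes\mathcal{F}(\{-1,1\})$ is a core.
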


\begin{proof}
See \autoref{sec_proofs_supp}.
\end{proof}

The additional regularity condition on $f_S$ in \autoref{thm_conv_NRJ} essentially ensures that outside of a bounded set, this PDF decreases  sufficiently quickly. Indeed, given that $(\log f_{S}(x))'=f_{S}'(x)/f_{S}(x)$ and $f_{S}$ is strictly positive, it is required that the tail decay is bounded from below (relatively to $f_{S}$). This guarantees that the limiting PDMP has some important properties (e.g.\ non-explosiveness and $f_{S}\otimes \mathcal{U}\{-1, 1\}$ is an invariant distribution, see \cite{bierkens2017piecewise}).

The PDMP in \autoref{thm_conv_NRJ} corresponds to a zig-zag Markov process (\cite{bierkens2019zig}), and in fact, a bouncy particle sampler (BPS, \cite{bouchard2018bouncy}) given that they both coincide when the position variable is unidimensional. This position variable evolves with constant drift $1-\tau$ either to the right or left of the real line depending on the direction variable, and changes direction with rate $\max\{0, -y\, (\log f_{S}(x))'\} (1-\tau)$ when the position is $x$ and direction $y$. PDMP are known for being non-diffusive and having persistency-driven paths. We constructed NRJ to induce such a behaviour, but we do not know \textit{a priori} when this will happen and how this will translate. An analysis was conducted in Section 4 in our paper to provide some answers.  \autoref{thm_conv_NRJ} and \autoref{thm_conv_RJ} indicate that in the (asymptotic) theoretical framework considered, the model indicator's paths produced by RJ and NRJ behave exactly as expected; the former show diffusive patterns and the latter not. This suggests that (at least under those conditions) NRJ outperform RJ. We even have a guarantee for the speed of convergence towards the target distribution for NRJ: \cite{bierkens2017piecewise} prove that the PDMP in \autoref{thm_conv_NRJ} is exponentially ergodic. We additionally know that the convergence is an order of magnitude slower for $\{K_{\text{RJ}}^n(m): m\in\na\}$. Indeed, the different behaviour of $\{K_{\text{NRJ}}^n(m): m\in\na\}$ compared with $\{K_{\text{RJ}}^n(m): m\in\na\}$ requires to accelerate the time by a factor of only $\sqrt{n}$ in the definition of $\{\mathbf{Z}_{\text{NRJ}}^n(t): t\geq 0\}$ comparatively to $n$ in that of $\{Z_{\text{RJ}}^n(t): t\geq 0\}$ to obtain non-trivial limiting stochastic processes. This highlights again that $\{K_{\text{NRJ}}^n(m): m\in\na\}$ explores its state-space more quickly.

\subsubsection{Proofs of Theorems~\ref{thm_conv_RJ} and \ref{thm_conv_NRJ}}\label{sec_proofs_supp}

\begin{proof}[Proof of \autoref{thm_conv_RJ}]
 In order to prove the result, we demonstrate the convergence of the finite-dimen\-sional distributions of $\{Z_{\text{RJ}}^n(t): t\geq 0\}$ to those of $\{Z_{\text{RJ}}(t): t\geq 0\}$. To achieve this, we verify Condition (c) of Theorem 8.2 from chapter 4 of \cite{ethier1986markov}. The weak convergence then follows from Corollary 8.6 of Chapter 4 of \cite{ethier1986markov}. The remaining conditions of Theorem 8.2 and the conditions specified in Corollary 8.6 are either straightforward or easily derived from the proof given here.

 The proof of the convergence of the finite-dimensional distributions relies on the convergence of (what we call) the ``pseudo-generator'', a quantity that we define as:
 \[
  \varrho_{\text{RJ}}^n(t):= n\,\E[h(Z_{\text{RJ}}^n(t + 1/n)) - h(Z_{\text{RJ}}^n(t))\mid \mathcal{F}^{Z_{\text{RJ}}^n}(t)],
 \]
 where $h\in\mathcal{C}_c^\infty(\re)$, the space of infinitely differentiable functions on $\re$ with compact support. Theorem 2.1 from Chapter 8 of \cite{ethier1986markov} allows us to restrict our attention to this set of functions when studying the limiting behaviour of the pseudo-generator.   In our situation, the pseudo-generator has a more precise expression:
 \begin{align}\label{eqn_generator_RJ}
  \varrho_{\text{RJ}}^n(t) &= \frac{n (1-\tau)}{2}\left((h(S_{K+1}^n) - h(S_{K}^n))\left(1 \wedge \frac{\pi^n(K^n + 1)}{\pi^n(K^n)}\right)\right) \cr
  &\qquad + \frac{n (1-\tau)}{2}\left((h(S_{K-1}^n) - h(S_{K}^n))\left(1 \wedge \frac{\pi^n(K^n - 1)}{\pi^n(K^n)}\right)\right).
 \end{align}
 Note that the Markov process $\{K_{\text{RJ}}^n(m): m\in\na\}$ is time-homogeneous, and because of this we replaced the random variable $Z_{\text{RJ}}^n(t)$ by $S_{K}^n$ and $Z_{\text{RJ}}^n(t + 1/n)$ by $S_{K+1}^n$ or $S_{K-1}^n$ given that we will work under expectations. Indeed, Condition (c) of Theorem 8.2 from chapter 4 of \cite{ethier1986markov} essentially reduces to the following convergence:
 \[
  \E\left[\abs{\varrho_{\text{RJ}}^n(t) - Gh(Z_{\text{RJ}}^n(t))}\right]\longrightarrow 0 \quad \text{as} \quad n\longrightarrow \infty,
 \]
 where $G$ is the generator of the limiting diffusion with
 \[
  Gh(Z_{\text{RJ}}^n(t)) := \frac{1-\tau}{2}\, (\log f_{S}(Z_{\text{RJ}}^n(t)))' h'(Z_{\text{RJ}}^n(t)) + \frac{1-\tau}{2} \, h''(Z_{\text{RJ}}^n(t)).
 \]
 Note that there exists a positive constant $M$ such that $h$ and all its derivatives are bounded in absolute value by this constant. We choose $M$ such that it is a Lipschitz constant for the function $(\log f_{S}(\cdot))'$.

 The key here is to use Taylor expansions in \eqref{eqn_generator_RJ} to obtain derivatives of $h$ as in $G$. By noting that $S_{K+1}^n=S_{K}^n + 1/\sqrt{n}$ and $S_{K-1}^n=S_{K}^n - 1/\sqrt{n}$, and using Taylor expansions of $h$ around $S_K^n$, we obtain
 \begin{align*}
  h(S_{K}^n + 1/\sqrt{n}) - h(S_{K}^n)&=\frac{1}{\sqrt{n}}\, h'(S_{K}^n) + \frac{1}{2n}\, h''(S_{K}^n) + \frac{1}{6n^{3/2}}h'''(W), \cr
  h(S_{K}^n - 1/\sqrt{n}) - h(S_{K}^n)&=-\frac{1}{\sqrt{n}}\, h'(S_{K}^n) + \frac{1}{2n}\, h''(S_{K}^n) - \frac{1}{6n^{3/2}}h'''(T),
 \end{align*}
 where $W$ and $T$ belong to $(S_{K}^n, S_{K}^n+1/\sqrt{n})$ and $(S_{K}^n - 1/\sqrt{n}, S_{K}^n)$, respectively. We also note that the first term on the RHS of \eqref{eqn_generator_RJ} equals 0 when $K^n = \lfloor \sqrt{n} \log n \rfloor$ because $\pi^n(\lfloor \sqrt{n} \log n \rfloor + 1)=0$. For the analogous reason, the second term on the RHS of \eqref{eqn_generator_RJ} equals 0 when $K^n = 1$. Therefore,
 \begin{align}\label{eqn_diff_gens}
  & \varrho_{\text{RJ}}^n(t) - Gh(Z_{\text{RJ}}^n(t)) = \ind(2 \leq K^n \leq \lfloor \sqrt{n} \log n \rfloor - 1) \, \frac{1-\tau}{2} \, h'(S_{K}^n) \cr
  &\qquad \times \left(\sqrt{n}\left(1 \wedge \frac{\pi^n(K^n + 1)}{\pi^n(K^n)} - 1 \wedge \frac{\pi^n(K^n - 1)}{\pi^n(K^n)}\right) - (\log f_{S}(S_{K}^n))'\right) \cr
  &\quad + \ind(K^n = 1)\, \frac{1-\tau}{2} \, h'(S_{K}^n)\left(\sqrt{n} \left(1 \wedge \frac{\pi^n(K^n + 1)}{\pi^n(K^n)}\right) - (\log f_{S}(S_{K}^n))'\right) \cr
  &\quad - \ind(K^n = \lfloor \sqrt{n} \log n \rfloor) \, \frac{1-\tau}{2} \, h'(S_{K}^n)\left(\sqrt{n} \left(1 \wedge \frac{\pi^n(K^n - 1)}{\pi^n(K^n)}\right) - (\log f_{S}(S_{K}^n))'\right) \cr
  &\quad + \ind(2 \leq K^n \leq \lfloor \sqrt{n} \log n \rfloor - 1)\, \frac{1-\tau}{4}\,h''(S_{K}^n)\cr
  &\qquad \times\left(1 \wedge \frac{\pi^n(K^n + 1)}{\pi^n(K^n)} + 1 \wedge \frac{\pi^n(K^n - 1)}{\pi^n(K^n)} - 2\right) \cr
  &\quad + \ind(K^n = 1) \, \frac{1-\tau}{4} h''(S_{K}^n)\left( 1 \wedge \frac{\pi^n(K^n + 1)}{\pi^n(K^n)} - 2\right) \cr
  &\quad + \ind(K^n = \lfloor \sqrt{n} \log n \rfloor)\, \frac{1-\tau}{4} h''(S_{K}^n)\left( 1 \wedge \frac{\pi^n(K^n - 1)}{\pi^n(K^n)} - 2\right) \cr
  &\quad + \frac{1-\tau}{12\sqrt{n}}\,h'''(W) \left( 1 \wedge \frac{\pi^n(K^n + 1)}{\pi^n(K^n)}\right)\ind(1 \leq K^n \leq \lfloor \sqrt{n} \log n \rfloor - 1) \cr
  &\quad - \frac{1-\tau}{12\sqrt{n}}\,h'''(T) \left( 1 \wedge \frac{\pi^n(K^n - 1)}{\pi^n(K^n)}\right)\ind(2 \leq K^n \leq \lfloor \sqrt{n} \log n \rfloor).
 \end{align}

  We now prove that expectation of the absolute value of each term on the RHS in \eqref{eqn_diff_gens} converges towards 0 as $n\longrightarrow \infty$. We start with the last terms  and make our way up. It is clear that the expectation of the absolute value of each of the last two terms converges towards 0 as $n\longrightarrow \infty$ given that $|h'''|\leq M$ and $0\leq 1\wedge x\leq 1$ for positive $x$. We now analyse the fourth one (starting from the bottom). As $n\longrightarrow \infty$,
 \begin{align*}
  \E\left[\abs{\ind(K^n = 1) \, \frac{1-\tau}{2} h''(S_{K}^n)\left( 1 \wedge \frac{\pi^n(K^n + 1)}{\pi^n(K^n)} - 2\right)}\right]\leq \frac{(1-\tau) M}{2}  \Prob(K^n = 1)\longrightarrow 0,
 \end{align*}
 using $|h''|\leq M$ and
 \[
  0\leq\left| 1 \wedge \frac{\pi^n(K^n + 1)}{\pi^n(K^n)} - 2\right|\leq 2.
 \]
 Recall that $\Prob(K^n = 1)\longrightarrow 0$ by assumption. The proof for the third term (starting from the bottom) is similar.

 Applying Lemmas~\ref{lemma_1} to \ref{lemma_3} (that follow), each of the remaining terms is seen to converge towards 0 in $L^1$ as $n\longrightarrow\infty$, which concludes the proof.
\end{proof}

\begin{Lemma}\label{lemma_1}
 As $n\longrightarrow\infty$, we have
 \[
  \E\left[\abs{\ind(2 \leq K^n \leq \lfloor \sqrt{n} \log n \rfloor - 1)\, \frac{1-\tau}{4}\,h''(S_{K}^n)
  \left(1 \wedge \frac{\pi^n(K^n + 1)}{\pi^n(K^n)} + 1 \wedge \frac{\pi^n(K^n - 1)}{\pi^n(K^n)} - 2\right)}\right]\longrightarrow 0.
 \]
\end{Lemma}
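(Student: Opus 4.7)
The strategy is a direct uniform estimate. Since $h\in\mathcal{C}_c^\infty(\re)$, both $h''$ and its support are bounded; and since Assumption~(c) forces the ratios $\pi^n(K^n\pm 1)/\pi^n(K^n)$ to converge to $1$ at rate $O(1/\sqrt{n})$, the integrand should be deterministically $O(1/\sqrt{n})$ on the event $\{h''(S_K^n)\neq 0\}$, which gives the conclusion without any need for dominated-convergence arguments.

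First, I would use the elementary inequality
\[
\bigl|(1\wedge a)+(1\wedge b)-2\bigr|=(1-a)_{+}+(1-b)_{+}\leq |a-1|+|b-1|,
\]
valid for $a,b\geq 0$. Applied with $a=\pi^n(K^n+1)/\pi^n(K^n)$ and $b=\pi^n(K^n-1)/\pi^n(K^n)$ (both well defined on the support of the indicator $\ind(2\leq K^n\leq \lfloor\sqrt{n}\log n\rfloor - 1)$), this reduces the problem to bounding $|\pi^n(K^n\pm 1)/\pi^n(K^n)-1|$.

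Second, the uniform boundedness part of Assumption~(c) yields a constant $C$ such that for every $n$ and every admissible $k$,
\[
\sqrt{n}\left|\frac{\pi^n(k+1)}{\pi^n(k)}-1\right|\leq C+\bigl|(\log f_{S}(S_k^n))'\bigr|.
\]
Applying the same estimate at $k-1$ and then inverting the ratio (noting that the right-hand side at $k-1$ forces $\pi^n(k)/\pi^n(k-1)\geq 1/2$ uniformly once $n$ is large enough on the relevant set, so $|\pi^n(k-1)/\pi^n(k)-1|\leq 2|\pi^n(k)/\pi^n(k-1)-1|$) yields an analogous control on the backward ratio in terms of $(\log f_{S}(S_{k-1}^n))'$. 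Because $h''$ has compact support, say in $[-R,R]$, whenever $h''(S_K^n)\neq 0$ the points $S_K^n$ and $S_{K-1}^n$ both lie in $[-R-1,R+1]$, on which the continuous function $(\log f_{S})'$ (continuous by Assumption~(b)) is bounded by some $B$. Hence on this event both ratio deviations are at most a fixed constant multiple of $(C+B)/\sqrt{n}$.

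Combining the two steps, the quantity inside $\E[\,|\cdot|\,]$ is dominated \emph{deterministically} by $(1-\tau)\,M\,(C+B)/\sqrt{n}$ for some constant, where $M$ is a uniform bound on $|h''|$, and the expectation of this constant bound clearly tends to $0$. The only mild obstacle is the asymmetry of Assumption~(c), which is stated only in the forward direction; this is handled by applying it at $k-1$ and then inverting, an inversion justified by the uniform proximity of $\pi^n(k)/\pi^n(k-1)$ to $1$ on the compact set where $h''(S_K^n)$ is nonzero.
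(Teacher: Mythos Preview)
Your argument is correct and in fact slightly sharper than the paper's. The paper proceeds differently: it bounds $|h''|\leq M$, then shows that $\pi^n(k+1)/\pi^n(k)\to 1$ for \emph{every} $k$ in the admissible range (not only those with $S_k^n$ in the support of $h''$), by writing
\[
\left|\frac{\pi^n(k+1)}{\pi^n(k)}-1\right|\leq \frac{1}{\sqrt{n}}\left|\frac{1}{\pi^n(k)}\,\frac{\pi^n(k+1)-\pi^n(k)}{1/\sqrt{n}}-(\log f_S(S_k^n))'\right|+\frac{|(\log f_S(S_k^n))'|}{\sqrt{n}},
\]
and controlling $|(\log f_S(S_k^n))'|$ on the whole range via the Lipschitz assumption~(b) and the crude bound $|S_k^n|\leq 2\lfloor\sqrt{n}\log n\rfloor/\sqrt{n}\asymp \log n$. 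This yields uniform convergence of the bracketed term to $0$ at rate $O(\log n/\sqrt{n})$, and the paper then invokes continuity of $x\mapsto 1\wedge x$ together with dominated convergence.

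Your route exploits the compact support of $h''$ to replace the $\log n$ factor by a constant, giving a clean $O(1/\sqrt{n})$ deterministic bound and making the dominated-convergence step unnecessary. The price is the extra inversion step for the backward ratio (applying Assumption~(c) at $k-1$ and using $\pi^n(k)/\pi^n(k-1)\geq 1/2$ eventually), which the paper avoids because its uniform estimate on $\pi^n(k+1)/\pi^n(k)-1$ applies at every $k$ and hence immediately covers $\pi^n(k)/\pi^n(k-1)$ and its reciprocal. Both approaches are valid; yours is more quantitative, the paper's is slightly more economical in bookkeeping.
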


\begin{proof}
 We have
 \begin{align*}
   &\E\left[\abs{\ind(2 \leq K^n \leq \lfloor \sqrt{n} \log n \rfloor - 1)\, \frac{1-\tau}{4}\,h''(S_{K}^n)
  \left(1 \wedge \frac{\pi^n(K^n + 1)}{\pi^n(K^n)} + 1 \wedge \frac{\pi^n(K^n - 1)}{\pi^n(K^n)} - 2\right)}\right] \cr
  &\leq \frac{(1-\tau)M}{4} \, \E\left[\abs{\ind(2 \leq K^n \leq \lfloor \sqrt{n} \log n \rfloor - 1)\, \left(1 \wedge \frac{\pi^n(K^n + 1)}{\pi^n(K^n)} + 1 \wedge \frac{\pi^n(K^n - 1)}{\pi^n(K^n)} - 2\right)}\right],
 \end{align*}
 because $|h''|\leq M$. We show that
 \[
  \frac{\pi^n(k + 1)}{\pi^n(k)}\longrightarrow 1 \quad \text{for all } k\in \{1,\ldots,\lfloor \sqrt{n} \log n \rfloor - 1\},
 \]
 which allows to conclude using the triangle inequality, the continuity of the function $1 \wedge x$, and the Lebesgue's dominated convergence theorem. We have
 \begin{align*}
  \abs{\frac{\pi^n(k + 1)}{\pi^n(k)} - 1}&= \abs{\frac{1}{\pi^n(k)}\, \frac{\pi^n(k + 1) - \pi^n(k)}{1/\sqrt{n}} - (\log f_{S}(S_k^n))' + (\log f_{S}(S_k^n))'} \frac{1}{\sqrt{n}}  \cr
  &\leq \abs{\frac{1}{\pi^n(k)}\, \frac{\pi^n(k + 1) - \pi^n(k)}{1/\sqrt{n}} - (\log f_{S}(S_k^n))'} \frac{1}{\sqrt{n}} \cr
  &\qquad + \abs{(\log f_{S}(S_k^n))'} \frac{1}{\sqrt{n}},
 \end{align*}
 using again the triangle inequality. By assumption, we have that
 \[
  \abs{\frac{1}{\pi^n(k)}\, \frac{\pi^n(k + 1) - \pi^n(k)}{1/\sqrt{n}} - (\log f_{S}(S_k^n))'} \frac{1}{\sqrt{n}}\longrightarrow 0.
 \]
 We also have that
 \begin{align*}
  \abs{(\log f(S_k^n))' }\frac{1}{\sqrt{n}} &=  \abs{(\log f_{S}(S_k^n))' - (\log f_{S}(0))' + (\log f_{S}(0))'}\frac{1}{\sqrt{n}} \cr
  &\leq \abs{(\log f_{S}(S_k^n))' - (\log f_{S}(0))'}\frac{1}{\sqrt{n}} + \frac{\abs{(\log f_{S}(0))'}}{\sqrt{n}} \cr
  &\leq M \abs{\frac{k - \psi(n)}{\sqrt{n}}}\frac{1}{\sqrt{n}} + \frac{\abs{(\log f_{S}(0))'}}{\sqrt{n}},
 \end{align*}
 using first the triangle inequality, and next the fact that $(\log f_{S}(\cdot))'$ is Lipschitz continuous. We have that $|(\log f_{S}(0))'|/\sqrt{n}\longrightarrow 0$ because $f_{S}\in \mathcal{C}^1(\re)$. Also,
 \begin{align*}
  \abs{\frac{k - \psi(n)}{\sqrt{n}}}\frac{1}{\sqrt{n}} \leq 2\, \frac{\lfloor \sqrt{n}\log n\rfloor}{n} \longrightarrow 0,
 \end{align*}
 using the triangle inequality and the fact that $k, \psi(n)\leq \lfloor \sqrt{n}\log n\rfloor$.
\end{proof}

\begin{Lemma}\label{lemma_2}
 As $n\longrightarrow$, we have
 \[
  \E\left[\abs{\ind(K^n = 1)\, \frac{1-\tau}{2} \, h'(S_{K}^n)\left(\sqrt{n} \left(1 \wedge \frac{\pi^n(K^n + 1)}{\pi^n(K^n)}\right) - (\log f_{S}(S_{K}^n))'\right)}\right]\longrightarrow 0,
 \]
 and
 \[
  \E\left[\abs{\ind(K^n = \lfloor \sqrt{n} \log n \rfloor) \, \frac{1-\tau}{2} \, h'(S_{K}^n)\left(\sqrt{n} \left(1 \wedge \frac{\pi^n(K^n - 1)}{\pi^n(K^n)}\right) - (\log f_{S}(S_{K}^n))'\right)}\right]\longrightarrow 0.
 \]
\end{Lemma}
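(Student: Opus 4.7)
The plan is to exploit two facts that hold for any $h\in\mathcal{C}_c^\infty(\re)$: $|h'|\le M$, and the derivative $h'$ has compact support. Together with the boundary-mass decay in assumption~(d) of \autoref{thm_conv_RJ}, these will suffice.

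First I would observe that on the event $\{K^n = 1\}$ the random variable $S_K^n = (1-\psi(n))/\sqrt{n}=:s_n$ is deterministic, so the first expectation equals
\[
 \frac{1-\tau}{2}\,\pi^n(1)\,|h'(s_n)|\,\left|\sqrt{n}\left(1 \wedge \frac{\pi^n(2)}{\pi^n(1)}\right) - (\log f_S(s_n))'\right|.
\]
The triangle inequality applied to the inner absolute value splits the task into bounding $\sqrt{n}\,(1 \wedge \pi^n(2)/\pi^n(1))$ and $|(\log f_S(s_n))'|$ separately.

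For the first piece I would use $|h'|\le M$ and $1\wedge x \le 1$ to bound the contribution by $\frac{(1-\tau)M}{2}\sqrt{n}\,\pi^n(1)$, which tends to zero by assumption~(d). For the second piece I would use the compact support of $h'$: choose a compact interval $[-A, A]$ containing $\operatorname{supp}(h')$. If $|s_n| > A$, then $h'(s_n) = 0$ and the contribution vanishes; if $|s_n| \le A$, then Lipschitz continuity of $(\log f_S)'$ (assumption~(b)) yields $|(\log f_S(s_n))'| \le |(\log f_S(0))'| + MA =: C$. In either case, $|h'(s_n)(\log f_S(s_n))'| \le MC$ uniformly in $n$, so the contribution is bounded by $\frac{(1-\tau)MC}{2}\,\pi^n(1)$, which tends to zero because $\pi^n(1)\longrightarrow 0$ (a consequence of $\sqrt{n}\,\pi^n(1)\longrightarrow 0$).

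The second claim follows by the identical argument with $\pi^n(1)$ replaced by $\pi^n(\lfloor \sqrt{n}\log n\rfloor)$ and $K^n+1$ by $K^n - 1$. The only genuinely delicate point is that $s_n$ is a deterministic point whose location may drift with $n$, which a priori precludes a uniform bound on $(\log f_S)'(s_n)$; this is exactly what the compact support of $h'$ neutralises, by forcing $s_n$ into a fixed compact set whenever $h'(s_n) \ne 0$. This is the main insight of the proof; everything else is routine.
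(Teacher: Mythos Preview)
Your proposal is correct and takes a genuinely different route from the paper. Both you and the paper begin by using $|h'|\le M$ and the triangle inequality to split off the piece $\tfrac{(1-\tau)M}{2}\sqrt{n}\,\pi^n(1)$, which vanishes by assumption~(d). The difference lies in the treatment of the $(\log f_S)'$ term. The paper does not exploit the compact support of $h'$; instead it bounds $|(\log f_S(S_K^n))'|$ via Lipschitz continuity and the crude estimate $|S_K^n|\le 2\lfloor\sqrt{n}\log n\rfloor/\sqrt{n}$, obtaining a bound of order $\log n$, and then checks that $\pi^n(1)\cdot\log n\to 0$ (which follows from $\sqrt{n}\,\pi^n(1)\to 0$). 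Your argument is cleaner: by confining attention to the compact support of $h'$, you get a bound on $|h'(s_n)(\log f_S(s_n))'|$ that is uniform in $n$, so only the weaker fact $\pi^n(1)\to 0$ is needed for this piece. The paper's approach has the minor advantage of reusing an estimate already established in \autoref{lemma_1}, but your use of compact support is the sharper observation.
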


\begin{proof}
 We have that
 \begin{align*}
  &\E\left[\abs{\ind(K^n = 1)\, \frac{1-\tau}{2} \, h'(S_{K}^n)\left(\sqrt{n} \left(1 \wedge \frac{\pi^n(K^n + 1)}{\pi^n(K^n)}\right) - (\log f_{S}(S_{K}^n))'\right)}\right] \cr
  &\qquad \leq  \frac{(1-\tau) M}{2} \, \E\left[\ind(K^n = 1)\sqrt{n}\left(1 \wedge \frac{\pi^n(K^n + 1)}{\pi^n(K^n)}\right)\right] + \frac{(1-\tau) M}{2} \, \E\left[\ind(K^n = 1)\abs{(\log f_{S}(S_{K}^n))'}\right],
 \end{align*}
 using that $|h'|\leq M$ and the triangle inequality. The first term on the RHS converges towards 0 by assumption because $0\leq 1\wedge x\leq 1$ for positive $x$. Using the same mathematical arguments as in the proof of \autoref{lemma_1}, we have that
 \[
  \abs{(\log f_{S}(S_{K}^n))'}\leq 2 M \, \frac{\lfloor \sqrt{n} \log n\rfloor}{\sqrt{n}}+\abs{(\log f_{S}(0))'}.
 \]
 Therefore, using the triangle inequality
 \begin{align*}
  \E\left[\ind(K^n = 1)\abs{(\log f_{S}(S_{K}^n))'}\right]\leq \Prob(K^n = 1) \left(2 M \, \frac{\lfloor \sqrt{n} \log n\rfloor}{\sqrt{n}} + \abs{(\log f_{S}(0))'}\right)\longrightarrow 0,
 \end{align*}
 by assumption (and because $f_{S}\in \mathcal{C}^1(\re)$). The proof that
 \[
  \E\left[\abs{\ind(K^n = \lfloor \sqrt{n} \log n \rfloor) \, \frac{1-\tau}{2} \, h'(S_{K}^n)\left(\sqrt{n} \left(1 \wedge \frac{\pi^n(K^n - 1)}{\pi^n(K^n)}\right) - (\log f_{S}(S_{K}^n))'\right)}\right]\longrightarrow 0
 \]
 is similar.
\end{proof}

\begin{Lemma}\label{lemma_3}
 As $n\longrightarrow\infty$, we have
 \begin{align*}
  &\E\left[\left|\ind(2 \leq K^n \leq \lfloor \sqrt{n} \log n \rfloor - 1) \, \frac{1-\tau}{2} \, h'(S_{K}^n)\right.\right. \cr
  &\qquad\times \left.\left. \left(\sqrt{n}\left(1 \wedge \frac{\pi^n(K^n + 1)}{\pi^n(K^n)} - 1 \wedge \frac{\pi^n(K^n - 1)}{\pi^n(K^n)}\right) - (\log f_{S}(S_{K}^n))'\right)\right|\right]\longrightarrow 0.
 \end{align*}
\end{Lemma}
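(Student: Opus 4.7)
The plan is to reformulate the inner expression using $1\wedge r = 1 + \min(0, r-1)$. Setting $a_n(k):=\sqrt{n}(\pi^n(k+1)/\pi^n(k)-1)$ and $b_n(k):=\sqrt{n}(\pi^n(k-1)/\pi^n(k)-1)$, the quantity inside the brackets becomes $\min(0, a_n(K)) - \min(0, b_n(K))$. The elementary identity $\min(0, x) - \min(0, -x) = x$ (trivial case analysis) will then give $(\log f_S(S_K^n))' = \min(0, (\log f_S(S_K^n))') - \min(0, -(\log f_S(S_K^n))')$, and the $1$-Lipschitzness of $x \mapsto \min(0, x)$ will yield the triangle-type bound
\[
\bigl|\min(0, a_n(K)) - \min(0, b_n(K)) - (\log f_S(S_K^n))'\bigr| \leq |a_n(K) - (\log f_S(S_K^n))'| + |b_n(K) + (\log f_S(S_K^n))'|.
\]

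The first summand is exactly the quantity controlled in Assumption~(c), hence bounded and converging to zero. For the second, I would exploit the reciprocal identity $a_n(k-1) = \sqrt{n}(\pi^n(k)/\pi^n(k-1)-1)$ to obtain $b_n(k) = -a_n(k-1)\,\pi^n(k-1)/\pi^n(k)$. Writing $a_n(k-1) = (\log f_S(S_{k-1}^n))' + \epsilon_n(k-1)$ with $\epsilon_n(k-1)$ bounded and tending to $0$ (Assumption~(c)) and $r_n(k) := \pi^n(k-1)/\pi^n(k)$, a direct computation yields
\[
b_n(k) + (\log f_S(S_k^n))' = [(\log f_S(S_k^n))' - (\log f_S(S_{k-1}^n))'] + (\log f_S(S_{k-1}^n))'(1 - r_n(k)) - \epsilon_n(k-1)\, r_n(k).
\]
Each term will vanish on the compact support of $h'$: the first by Lipschitzness of $(\log f_S)'$ (Assumption~(b)) combined with $|S_k^n - S_{k-1}^n| = 1/\sqrt{n}$; the second by $r_n(k) \to 1$ (established in the proof of \autoref{lemma_1}) together with local boundedness of $(\log f_S)'$; the third by Assumption~(c) and boundedness of $r_n(k)$.

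To conclude, since $h \in \mathcal{C}_c^\infty(\re)$, $h'$ is bounded by some constant $M$ and vanishes outside a compact set $C \subset \re$, which restricts the contributing indices $k$ to those with $S_k^n \in C$. On $C$, $(\log f_S)'$ is bounded by continuity and all the error terms above admit a uniform bound coming from the boundedness statement in Assumption~(c). Lebesgue's dominated convergence theorem, applied termwise to the sum $\sum_k \pi^n(k)\, \ind(S_k^n \in C)(\cdots)$, then delivers the claim. The main obstacle is the treatment of $b_n(k) + (\log f_S(S_k^n))'$: Assumption~(c) directly controls only the forward ratios $\pi^n(k+1)/\pi^n(k)$, so one has to pass through the reciprocal identity and combine three distinct small-error estimates before the desired convergence emerges.
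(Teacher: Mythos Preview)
Your argument is correct and takes a genuinely different route from the paper's own proof.

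The paper proceeds by a four-case split according to whether each of $\pi^n(K^n+1)/\pi^n(K^n)$ and $\pi^n(K^n-1)/\pi^n(K^n)$ lies above or below $1$, and in each case simplifies the minimum explicitly before invoking Assumption~(c) and dominated convergence. Your approach instead absorbs all four cases into the single identity $1\wedge r = 1+\min(0,r-1)$ together with the $1$-Lipschitzness of $x\mapsto\min(0,x)$, yielding the uniform bound $|a_n-(\log f_S)'|+|b_n+(\log f_S)'|$ without case distinction. This is cleaner, and it forces you to confront explicitly the backward term $b_n(k)+(\log f_S(S_k^n))'$, which you handle carefully via the reciprocal identity $b_n(k)=-a_n(k-1)\,\pi^n(k-1)/\pi^n(k)$ and a three-term decomposition drawing on Assumptions~(b) and~(c) plus the convergence $\pi^n(k-1)/\pi^n(k)\to1$ already established in \autoref{lemma_1}. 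The paper's treatment of its Cases~3 and~4 (local minimum/maximum) is by comparison somewhat informal on exactly this point, asserting $(\log f_S(S_K^n))'\to0$ at a local extremum without spelling out the backward analogue of Assumption~(c); your derivation supplies that missing link. The restriction to the compact support of $h'$ is what makes all your local-boundedness claims (for $(\log f_S)'$ and for $r_n(k)$) legitimate, and both proofs close in the same way via a bounded-integrand, probability-measure argument.
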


\begin{proof}

 First, we have that
 \begin{align*}
  &\E\left[\left|\ind(2 \leq K^n \leq \lfloor \sqrt{n} \log n \rfloor - 1) \, \frac{1-\tau}{2} \, h'(S_{K}^n)\right.\right. \cr
  &\qquad\times \left.\left. \left(\sqrt{n}\left(1 \wedge \frac{\pi^n(K^n + 1)}{\pi^n(K^n)} - 1 \wedge \frac{\pi^n(K^n - 1)}{\pi^n(K^n)}\right) - (\log f_{S}(S_{K}^n))'\right)\right|\right] \cr
  &\quad \leq  \frac{(1-\tau)M}{2} \, \E\left[\left|\ind(2 \leq K^n \leq \lfloor \sqrt{n} \log n \rfloor - 1)\right.\right. \cr
  &\qquad\times \left.\left. \left(\sqrt{n}\left(1 \wedge \frac{\pi^n(K^n + 1)}{\pi^n(K^n)} - 1 \wedge \frac{\pi^n(K^n - 1)}{\pi^n(K^n)}\right) - (\log f_{S}(S_{K}^n))'\right)\right|\right],
 \end{align*}
 because $|h'|\leq M$. We now consider four cases for $K^n$:
 \begin{enumerate}
 \itemsep 0mm

  \item $\pi^n(K^n + 1)/\pi^n(K^n) < 1$ and $\pi^n(K^n - 1)/\pi^n(K^n) \geq 1$,

  \item $\pi^n(K^n + 1)/\pi^n(K^n) \geq 1$ and $\pi^n(K^n - 1)/\pi^n(K^n) < 1$,

  \item $\pi^n(K^n + 1)/\pi^n(K^n) \geq 1$ and $\pi^n(K^n - 1)/\pi^n(K^n) \geq 1$,

  \item $\pi^n(K^n + 1)/\pi^n(K^n) < 1$ and $\pi^n(K^n - 1)/\pi^n(K^n) < 1$.
 \end{enumerate}

 In Case 1, we have that
 \begin{align*}
  &\sqrt{n}\left(1 \wedge \frac{\pi^n(K^n + 1)}{\pi^n(K^n)} - 1 \wedge \frac{\pi^n(K^n - 1)}{\pi^n(K^n)}\right) - (\log f_{S}(S_{K}^n))' \cr
  &\qquad =  \sqrt{n}\left(\frac{\pi^n(K^n + 1)}{\pi^n(K^n)} - 1 \right) - (\log f_{S}(S_{K}^n))' \cr
  &\qquad = \frac{1}{\pi^n(K^n)}\, \frac{\pi^n(K^n + 1) - \pi^n(K^n)}{1/\sqrt{n}} - (\log f_{S}(S_{K}^n))'\longrightarrow 0,
 \end{align*}
 by assumption. We can prove that it converges towards 0 in Case 2 in the same way. Case 3 corresponds to a local minimum. In this case,
 \[
  \sqrt{n}\left(1 \wedge \frac{\pi^n(K^n + 1)}{\pi^n(K^n)} - 1 \wedge \frac{\pi^n(K^n - 1))}{\pi^n(K^n)}\right) = 0,
 \]
 for all $n$, and $(\log f_{S}(S_{K}^n))'=f'_{Z_{\text{RJ}}}(S_{K}^n)/f_{S}(S_{K}^n)\longrightarrow 0$. Case 4 corresponds to a local (or global) maximum. Again, $(\log f_{S}(S_{K}^n))'\longrightarrow 0$. Additionally,
 \begin{align*}
 &\sqrt{n}\left( 1 \wedge \frac{\pi^n(K^n + 1)}{\pi^n(K^n)} - 1 \wedge \frac{\pi^n(K^n - 1)}{\pi^n(K^n)}\right) - (\log f_{S}(S_{K}^n))' \cr
 &\qquad = \sqrt{n}\left( \frac{\pi^n(K^n + 1)-\pi^n(K^n)}{\pi^n(K^n)} -  \frac{\pi^n(K^n - 1)-\pi^n(K^n)}{\pi^n(K^n)}\right) - (\log f_{S}(S_{K}^n))' \cr
 &\qquad = \frac{1}{\pi^n(K^n)}\, \frac{\pi^n(K^n + 1) - \pi^n(K^n)}{1/\sqrt{n}} - (\log f_{S}(S_{K}^n))' \cr
 &\qquad \qquad - \frac{1}{\pi^n(K^n)}\, \frac{\pi^n(K^n - 1) - \pi^n(K^n)}{1/\sqrt{n}},
 \end{align*}
 but both terms converge towards 0. Consequently, Lebesgue's dominated convergence theorem allows to conclude the proof.
\end{proof}

\begin{proof}[Proof of \autoref{thm_conv_NRJ}]
 Analogously to the proof of \autoref{thm_conv_RJ}, we demonstrate the convergence of the finite-dimen\-sional distributions of $\{\mathbf{Z}_{\text{NRJ}}^n(t): t\geq 0\}$ to those of $\{\mathbf{Z}_{\text{NRJ}}(t): t\geq 0\}$. The same strategy as in that proof is employed: we verify Condition (c) of Theorem 8.2 from chapter 4 of \cite{ethier1986markov}. The weak convergence then follows from Corollary 8.6 of Chapter 4 of \cite{ethier1986markov}. The remaining conditions of Theorem 8.2 and the conditions specified in Corollary 8.6 are either straightforward or easily derived from the proof given here.

 Beforehand, we note that the additional assumption on $f_S$ (about the lower bound on $|(\log f_{S}(\cdot))'|$ outside of a bounded set) implies that Assumption 3 in Section 5 of \cite{bierkens2017piecewise} is satisfied. In that paper, it is proved that it implies that the PDMP defined in \autoref{thm_conv_NRJ} is a non-explosive strong Markov process. The authors also demonstrate that the Markov transition semigroup to which the generator corresponds is Feller.

 For this proof, the time acceleration factor is different, and accordingly, the pseudo-generator is defined as:
 \begin{align*}
  \varrho_{\text{NRJ}}^n(t)&:= \sqrt{n}\,\E[h(\mathbf{Z}_{\text{NRJ}}^n(t + 1/\sqrt{n})) - h(\mathbf{Z}_{\text{NRJ}}^n(t))\mid \mathcal{F}^{\mathbf{Z}_{\text{NRJ}}^n}(t)] \cr
  &\hspace{1mm}= \sqrt{n} (1-\tau) (h(S_{K+\nu}^n, \nu) - h(S_{K}^n, \nu))\left(1 \wedge \frac{\pi^n(K^n + \nu)}{\pi^n(K^n)}\right) \cr
  &\qquad + \sqrt{n} (1-\tau) (h(S_{K}^n, -\nu) - h(S_{K}^n, \nu))\left(1- 1 \wedge \frac{\pi^n(K^n + \nu)}{\pi^n(K^n)}\right).
 \end{align*}
 As in the proof of \autoref{thm_conv_RJ}, we replaced $\mathbf{Z}_{\text{NRJ}}^n(t)$ by $(S_{K}^n, \nu)$ and $\mathbf{Z}_{\text{NRJ}}^n(t + 1/\sqrt{n})$ by $(S_{K+\nu}^n, \nu)$ or $(S_{K}^n, -\nu)$ given that the Markov process $\{(K_{\text{NRJ}}^n, \nu)(m): m\in\na\}$ is time-homogeneous and we will work under expectations. Recall that Condition (c) of Theorem 8.2 from chapter 4 of \cite{ethier1986markov} is essentially
 \[
  \E\left[\abs{\varrho_{\text{NRJ}}^n(t) - Gh(\mathbf{Z}_{\text{NRJ}}^n(t))}\right]\longrightarrow 0 \quad \text{as} \quad n\longrightarrow\infty,
 \]
 where $G$ is in this case the generator expressed in \autoref{thm_conv_NRJ}. We have that
 \begin{align}\label{eqn_diff_gens_NRJ}
  &\E\left[\abs{\varrho_{\text{NRJ}}^n(t) - Gh(\mathbf{Z}_{\text{NRJ}}^n(t))}\right] \nonumber \\
  &\quad\leq \E\left[\abs{\sqrt{n} (1-\tau) (h(S_{K+\nu}^n, \nu) - h(S_{K}^n, \nu))\left(1 \wedge \frac{\pi^n(K^n + \nu)}{\pi^n(K^n)}\right) - (1-\tau) \nu h_x(S_{K}^n, \nu)}\right] \nonumber \\
  &\qquad + \E\left[\left|\sqrt{n} (1-\tau) (h(S_{K}^n, -\nu) - h(S_{K}^n, \nu))\left(1- 1 \wedge \frac{\pi^n(K^n + \nu)}{\pi^n(K^n)}\right) \right.\right. \nonumber \\
  &\qquad\qquad - \left.\left.\max\{0, -\nu\, (\log f_{S}(S_{K}^n))'\} (1-\tau) (h(S_{K}^n, -\nu) - h(S_{K}^n, \nu))\right|\right],
 \end{align}
 using the triangle inequality. We analyse the two terms separately. We start with the first one. By the mean value theorem and using that $S_{K+\nu}^n - S_{K}^n = \nu/\sqrt{n}$, we have that
 \[
  h(S_{K+\nu}^n, \nu) - h(S_{K}^n, \nu) = \frac{\nu}{\sqrt{n}}\, h_x(T, \nu),
 \]
 where $T$ is in $(S_{K}^n, S_{K+\nu}^n)$ or $(S_{K+\nu}^n, S_{K}^n)$. We therefore also know that $T\longrightarrow S_{K}^n$ with probability 1. In the proof of \autoref{lemma_1}, it is shown that
 \[
  1 \wedge \frac{\pi^n(K^n + \nu)}{\pi^n(K^n)} \longrightarrow 1 \quad \text{with probability 1},
 \]
 and consequently,
 \[
 \E\left[\abs{\sqrt{n} (1-\tau) (h(S_{K+\nu}^n, \nu) - h(S_{K}^n, \nu))\left(1 \wedge \frac{\pi^n(K^n + \nu)}{\pi^n(K^n)}\right) - (1-\tau) \nu h_x(S_{K}^n, \nu)}\right]\longrightarrow 0,
 \]
 using Lebesgue's dominated convergence theorem (given that the quantity in the expectation is bounded, because $h_x(\cdot, \nu)$ is bounded for $\nu\in\{-1,1\}$). For the second term in \eqref{eqn_diff_gens_NRJ}, we have
 \begin{align}\label{eqn_diff_gens_NRJ_2}
  &\E\left[\left|\sqrt{n} (1-\tau) (h(S_{K}^n, -\nu) - h(S_{K}^n, \nu))\left(1- 1 \wedge \frac{\pi^n(K^n + \nu)}{\pi^n(K^n)}\right) \right.\right.  \nonumber\\
  &\qquad - \left.\left.\max\{0, -\nu\, (\log f_{S}(S_{K}^n))'\} (1-\tau) (h(S_{K}^n, -\nu) - h(S_{K}^n, \nu))\right|\right] \nonumber \\
  &\quad\leq (1-\tau)2M \, \E\left[\left|\sqrt{n} \left(1- 1 \wedge \frac{\pi^n(K^n + \nu)}{\pi^n(K^n)}\right) - \max\{0, -\nu\, (\log f_{S}(S_{K}^n))'\} \right|\right],
 \end{align}
 because there exists a positive constant $M$ such that $|h(\cdot,\nu)|\leq M$ for $\nu\in\{-1,1\}$ (recall that $h(\cdot,\nu)$ is continuous and vanishes at infinity for any value of $\nu$).

 We now consider four cases for $K^n$ and $\nu$:
 \begin{enumerate}
 \itemsep 0mm

 \item $\nu=+1$ and $\pi^n(K^n + \nu)/\pi^n(K^n)\geq1$ (we are going to the right on the real line and in this direction the PMF increases),

 \item $\nu=+1$ and $\pi^n(K^n + \nu)/\pi^n(K^n)<1$ (we are going to the right on the real line and in this direction the PMF decreases),

  \item $\nu=-1$ and $\pi^n(K^n + \nu)/\pi^n(K^n)\geq 1$ (we are going to the left on the real line and in this direction the PMF increases),

  \item $\nu=-1$ and $\pi^n(K^n + \nu)/\pi^n(K^n)< 1$ (we are going to the left on the real line and in this direction the PMF decreases).

 \end{enumerate}

 In Case 1,
 \[
  \left(1- 1 \wedge \frac{\pi^n(K^n + \nu)}{\pi^n(K^n)}\right) = 0,
 \]
 for all $n$ and $-\nu\, (\log f_{S}(S_{K}^n))'=-(\log f_{S}(S_{K}^n))'$ is negative in the limit because $f_{S}'(S_{K}^n)$ is positive in the limit. Therefore, $\max\{0, -\nu\, (\log f_{S}(S_{K}^n))'\}\longrightarrow0$. Using Lebesgue's dominated convergence theorem, we thus know that the expectation at the RHS in \eqref{eqn_diff_gens_NRJ_2} converges towards 0 when restricted to Case 1. We can prove that it converges towards 0 in Case 3 in the same way. In Case 2,
\[
 \sqrt{n} \left(1- 1 \wedge \frac{\pi^n(K^n + \nu)}{\pi^n(K^n)}\right) =- \frac{1}{\pi^n(K^n)} \, \frac{\pi^n(K^n + \nu) - \pi^n(K^n)}{1/\sqrt{n}}.
\]
By assumption, we know that this behaves asymptotically like $(\log f_{S}(S_{K}^n))'$. We also know that $-\nu\, (\log f_{S}(S_{K}^n))'=-(\log f_{S}(S_{K}^n))'$ is positive in the limit because $f_{S}'(S_{K}^n)$ is negative in the limit. Therefore, $- \max\{0, -\nu\, (\log f_{S}(S_{K}^n))'\}$ behaves like $(\log f_{S}(S_{K}^n))'$ in the limit. Using Lebes\-gue's dominated convergence theorem, we thus know that the expectation at the RHS in \eqref{eqn_diff_gens_NRJ_2} converges towards 0 when restricted to Case 2 (recall the assumed boundedness of the limiting quantity in the expectation). We can prove that it converges towards 0 in Case 4 in the same way.
\end{proof}

\subsection{Details about the multiple change-point example}\label{sec_changepoint_details}

It is assumed that the Poisson process has been observed on the time interval $[0, L]$, where $L>0$ is known. The starting point for each step is denoted by $s_{j,k}$, $j=0,\ldots,k$, to which we add the endpoint of the last step $s_{k+1,k}$, where these are subject to the constraint $0=:s_{0,k}<s_{1,k}<\ldots<s_{k+1,k}:=L$. The height of the $j$-th step is  denoted by $h_{j,k}$, $j=1,\ldots,k+1$. The log-likelihood of model $k$ is
\[
 \log\mathcal{L}(\mathbf{x}_k\mid k, \mathbf{t}):=\sum_{i=1}^n\log(\lambda_k(t_i\mid \mathbf{x}_k))-\int_0^L \lambda_k(t\mid \mathbf{x}_k) \, dt,
\]
where $\lambda_k(t\mid \mathbf{x}_k):=\sum_{j=0}^k h_{j+1,k}\ind_{[s_{j,k}, s_{j+1,k})}(t)$ for $t\in[0,L]$ and $\mathbf{x}_k:=(s_{1,k},\ldots,s_{k,k}, h_{1,k},\ldots, $ $h_{k+1,k})^T$, $\ind$ being the indicator function.

We use the same prior structure as \cite{green1995reversible}. The prior on $K$ is a Poisson distribution with parameter $\lambda>0$, but conditioned on $K\leq \text{K}_{\max}$. Given $K=k$, the starting points $s_{1,k},\ldots,s_{k,k}$ are \textit{a priori} distributed as the even-numbered order statistics from $2k+1$ points uniformly distributed on $[0,L]$, and the heights are independently and identically distributed as $\Gamma(\alpha,\beta)$, where $\alpha>0$ and $\beta>0$ are the shape and rate parameters, respectively. % In particular,
%\[
% \pi(s_{1,k},\ldots,s_{k,k}\mid k)={2k+1 \choose k} \, k! \, (k+1)! \,\frac{1}{L^k} \prod_{j=1}^{k+1}\frac{s_{j,k}-s_{j-1,k}}{L}, \quad s_{0,k}<s_{1,k}<\ldots<s_{k+1,k}.
%\]
In \cite{green1995reversible}, the hyperparameters are set to $\lambda:=3, \text{K}_{\max}:=30, \alpha := 1$, and $\beta := 200$.

As done in Section 5.1 of our paper, one may take advantage of the information at its disposal about the problem and model to design the sampler. \cite{green1995reversible} follows this approach. We design the RJ and the corresponding NRJ as this author. For parameter updates, we randomly choose to modify either one of the heights $h_{j,k}$ or one of the starting points $s_{j,k}$. We modify a starting point $s_{j,k}$ by proposing a new value uniformly between $s_{j-1,k}$ and $s_{j+1,k}$. We modify a height $h_{j,k}$ by proposing a new value $h_{j,k}'$ that is such that $\log(h_{j,k}'/h_{j,k})\sim\mathcal{U}[-1/2, 1/2]$. For model switches, we randomly choose to either add or withdraw a step. When we add a step, we first generate its starting point $s^*\sim \mathcal{U}[0, L]$. Deterministically, given $s^*$, we know which step will be splitted in two, in the sense that the proposal for the starting points is: $(s_{0,k},\ldots,s_{j^*,k}, s^*, s_{j^*+1,k},\ldots,s_{k+1,k})$, where $s_{0,k}<\ldots<s_{j^*,k}< s^*< s_{j^*+1,k}<\ldots<s_{k+1,k}$ (the step $(s_{j^*,k}, s_{j^*+1,k}]$ is splitted in two). We perturb as follows the height of this step $h_{j^*+1,k}$ to obtain proposals for the two heights $h_{j^*+1,k+1}'$ and $h_{j^*+2,k+1}'$ in the proposed model: generate $u_p\sim\mathcal{U}[0,1]$ which is such that $h_{j^*+2,k+1}'/h_{j^*+1,k+1}'=(1-u_p)/u_p$, and set the height proposals such that
\[
 (h_{j^*+1,k+1}')^{\frac{s^* - s_{j^*,k}}{s_{j^*+1,k} - s_{j^*,k}}} (h_{j^*+2,k+1}')^{\frac{s_{j^*+1,k} - s^*}{s_{j^*+1,k} - s_{j^*,k}}}=h_{j^*+1,k}.
\]
The height proposals are $(h_{1,k},\ldots, h_{j^*,k}, h_{j^*+1,k+1}', h_{j^*+2,k+1}', h_{j^*+2,k},\ldots, h_{k+1,k})$. When we withdraw a step, we proceed with the reverse move, which is deterministic after having generated $j^*\sim \mathcal{U}\{0,\ldots,k-1\}$ (starting from model $k$). See \cite{green1995reversible} for the acceptance probabilities and more details.

For implementing Algorithm 3 and the corresponding RJ, we proceed as in \cite{karagiannis2013annealed} for generating the paths. More precisely, when switching from model $k$ to model $k+1$, we use the same strategy as above to set the starting point of the path to
$(s_{0,k},\ldots,s_{j^*,k}, s^*, s_{j^*+1,k},\ldots,s_{k+1,k})$ and $(h_{1,k},\ldots, h_{j^*,k}, h_{j^*+1,k+1}', h_{j^*+2,k+1}', h_{j^*+2,k},\ldots, $ $h_{k+1,k})$, which are parameters in model $k+1$. We next update the parameters in model $k+1$ using blockwise MCMC sweeps. In a random order, we modify one of the heights $h_{j,k+1}$ and one of the starting points $s_{j,k+1}$ as when updating the parameters in Algorithm 1 (as explained above), and we update $j^*$. Note that when updating $h_{j,k+1}$ and $s_{j,k+1}$, we update the corresponding parameters in model $k$ as they are linked through deterministic functions. When updating $j^*$ given the rest, the parameters in model $k$ may be updated as we may change which step is splitted in two. The intermediate distributions are%\footnote{There is a typo in the definition of these functions in \cite{karagiannis2013annealed}.}
\small
\begin{align*}
  \rho_{k\mapsto k+1}^{(t)}(\mathbf{x}_k^{(t)},\mathbf{u}_{k\mapsto k'}^{(t)})&\propto \left[\pi(k,\mathbf{x}_k^{(t)})\, \frac{1}{L}  \, \frac{h_{j^*+1,k}^{(t)}}{(h_{j^*+2,k+1}^{(t)}+h_{j^*+1,k+1}^{(t)})^2}\right]^{1-t/T} \left[\pi(k+1,\mathbf{y}_{k+1}^{(t)})\, \frac{1}{k+1}\right]^{t/T}.
\end{align*}
\normalsize
See \cite{karagiannis2013annealed} for more details.

We finish this section by explaining how we established the number of iterations for the vanilla samplers. In Algorithm 2 when we switch models (see Step 2.(a)), we first generate the starting point of the path, which is done as in Algorithm 1, and next we generate the path using $T - 1$ MCMC steps. In each MCMC step, we try to modify one of the heights, one of the starting points and we sample $j^*$. Each MCMC step is thus essentially equivalent to 3 parameter updates, which is in turn essentially equivalent to 3 model switching attempts. To one model switching attempt in Algorithm 1 we thus essentially need to add $3 (T - 1)$ model switching attempts to obtain an equivalent cost.

On average, in one Algorithm 2 run, there are $I (1 - \tau)$ model switching attempts. Thus they correspond in Algorithm 1 to
\[
 I (1 - \tau) + I (1 - \tau) 3 (T - 1) \leq I (1 - \tau) 3 T
\]
model switching attempts. To identify the equivalence between Algorithm 3 and Algorithm 1, we need to multiply the number above by 1.5, as explained in Section 3.2 of our paper. Therefore, if in one run of Algorithm 3 there are on average  $I (1 - \tau)$ model switching attempts, then they correspond to $I (1 - \tau) 4.5 T$ model switching attempts in Algorithm 1. Algorithm 1 must thus be run for $I \tau + I (1 - \tau) 4.5 T$ iterations and $\tau$ in this algorithm must be set to
\[
 \tau := \frac{I \tau}{I \tau + I (1 - \tau) 4.5 T}.
\]

\end{document}